\newtheorem{theorem}{Theorem}
\newtheorem{corollary}[theorem]{Corollary}
\newtheorem{definition}{Definition}
\newtheorem{lemma}{Lemma}
\newtheorem{proposition}{Proposition}
\begin{document}

\title[Local equilibrium in inhomogeneous models]{Local equilibrium in inhomogeneous stochastic models of heat transport} 
\author{P\'eter N\'andori}
\address{P\'eter N\'andori: Department of Mathematics, University of Maryland, College Park, MD 20742, USA}
\email{pnandori@math.umd.edu}

\begin{abstract} 
We extend the duality of Kipnis Marchioro and Presutti \cite{KMP82} to inhomogeneous 
lattice gas systems
where either the components have different degrees of freedom or the rate of interaction
depends on the spatial location.
Then the dual process is applied to prove local equilibrium 
in the hydrodynamic limit 
for some inhomogeneous high dimensional systems and
in the nonequilibrium steady state 
for one dimensional systems with arbitrary inhomogeneity.
\end{abstract}
\maketitle

\section{Introduction}

Describing the emergence of local equilibrium in systems forced out of equilibrium is one of the main
challenges of nonequilibrium statistical mechanics. Most mathematical works concern stochastic interacting 
particle systems as they are generally more tractable than the deterministic ones. 
Despite the big amount of recent
work in this field, little is known for systems with spatial inhomogeneity.

The present paper extends the classical duality result of Kipnis Marchioro and Presutti \cite{KMP82} 
to inhomogeneous systems, where the inhomogeneity means that either (A) the components of the system
have different degrees of freedom or (B) the interaction rate between the components depends on 
their location.
The dual process is roughly speaking a collection of 
biased random walkers that interact with one another once their distance is not bigger than $1$,
and are completely independent otherwise.
We leverage the dual process to prove the existence of local equilibrium in several cases. 
Some of these results were announced in \cite{LNY15}.
The systems we consider are somewhat similar to the ones recently introduced in \cite{CGRS15},
but are not the same.

\subsection{Informal description}
\label{sec:pr}

To fix notation, we denote by $\Gamma(\alpha, c)$ the probability distribution with density
$\frac{x^{\alpha -1} \exp(-x/c)}{c^{\alpha} \Gamma(\alpha)}$ for $x >0$, 
where $\alpha >0$ is the shape parameter and $c>0$ is the scale parameter ($\Gamma$ is the usual Gamma function).
The $k$th moment 
of the Gamma distribution is
\begin{equation}
\label{eq:gammamoment}
\int_0^{\infty} \frac{x^{k+\alpha -1} \exp(-x/c)}{c^{\alpha} \Gamma(\alpha)}
dx  = \frac{c^k \Gamma (\alpha + k)}{\Gamma (\alpha)}
\end{equation}
The $Beta(\alpha, \beta)$ distribution with parameters $\alpha , \beta >0$ has density 
$\frac{\Gamma(\alpha + \beta)}{\Gamma(\alpha) \Gamma(\beta)} x^{\alpha -1} (1-x)^{\beta -1}$
for $x>0$. For brevity, we will write $B(\alpha, \beta) = \frac{\Gamma(\alpha) \Gamma( \beta)}{\Gamma(\alpha + \beta)}$.\\

Our model can be informally described as follows. First consider two systems with degrees of freedom $\omega_i$
and
energy $\xi_i$ for $i=1,2$. 
Now let the two systems
exchange energies by a microcanonical procedure: redistribute the total energy according to the law of 
equipartition. Representing the 
energy per the $j$th degree of freedom by $X_j^2$ for $j=1,...,\omega_1 + \omega_2$, $\vec{X}$ is 
thus uniformly distributed
on the sphere $(\xi_1 + \xi_2) \mathbb S^{\omega_1 + \omega_2 -1}$.
Writing $\vec{X} = (\xi_1 + \xi_2) \vec{Y} / \| \vec{Y}\|$, where $\vec{Y}$ has $\omega_1 + \omega_2$ dimensional
standard normal distribution, the first system's energy is updated to
\[
\xi_1' = X_1^2 + ... +X_{\omega_1}^2 = \frac{(\xi_1 + \xi_2) (Y_1^2 + ... +Y_{\omega_1}^2)}
{Y_1^2 + ... +Y_{\omega_1 + \omega_2}^2}
\]
which is well known to have $(\xi_1 + \xi_2) Beta(\omega_1/2, \omega_2/2)$ distribution.

In the simplest case, our model consists of a one dimensional chain of systems located at sites $1, 2, ..., L-1$
and possibly having different degrees of freedom. Then we choose pairs of nearest neighbors randomly 
and update their energies
by the above rule. Furthermore, the systems at site $1$ and $L-1$ are coupled to heat baths of different
temperature. Then we study the macroscopic energy propagation and the emergence of local equilibrium.
The special case of constant $2$ degrees of freedom is \cite{KMP82}, while systems with (arbitrary) constant degrees
of freedom have been studied in \cite{CGRS15}.
A chain of alternating billiard particles ($2$ degrees of freedom) and pistons ($1$ degree of freedom), 
has been proposed in \cite{BGNSzT15}.
There, the energies are updated via a {\it much} more complicated deterministic rule.

Finally, we consider two more generalizations: (A) to higher dimensions and (B) to inhomogeneity in the rate 
of interaction.

\subsection{Random walks end electrical networks}

Here, we review the connection between one dimensional random walks and electrical networks as we will need
them in the study of the dual process. The connection extends to much more general graphs than $\mathbb Z^1$, see
e.g. \cite{K14}, Section 19.
Assume that
the weights (conductances) $w_{i+1/2}$, $i=A,A+1,...,B$, are given and a random walker $\mathcal S_k$
is defined on the set $\{ A, A+1,...,B\}$ by $\mathcal S(0) = I$ and  
$$\mathbb P( \mathcal S_k = i+1 | S_k = i) = \frac{w_{i+1/2}}{w_{i-1/2}+ w_{i+1/2}}, \quad
\mathbb P( \mathcal S_k = i-1 | S_k = i) = \frac{w_{i-1/2}}{w_{i-1/2}+ w_{i+1/2}}
$$
for all $i=A+1, ..., B-1$ (the states $A$ and $B$ are absorbing). 
Then the resistances $R_{i+1/2} = 1/w_{i+1/2}$ and the hitting probability
$$ p = \mathbb P (\min \{k: \mathcal S_k = 0\} < \min \{k: \mathcal S_k = N\} )$$
are connected by the well known formula
\begin{equation}
\label{eq:electric}
p = \frac{\sum_{i=I}^{B-1} R_i}{\sum_{i=A}^{B-1} R_i}.
\end{equation}

Finally, to fix terminology, 
we denote by SSRW the simple symmetric random walk, i.e. a random walk with independent
steps, uniformly distributed on the neighbors of the origin in $\mathbb Z^d$.

\subsection{Organization}
The rest of the paper is organized as follows. We define our model precisely in Section 
\ref{secdef}. In Section \ref{subsecdefdual}, the dual process is defined and the duality relation is proved.
In Section \ref{sec:hyd} the main results, namely local equilibrium in the hydrodynamic
limit in dimension $\geq 2$ and in the nonequilibrium steady state in dimension $1$, are formulated.
Section \ref{sec:5}
contains the proofs of 
the theorems concerning the hydrodynamic limit, namely
Theorems \ref{thm1} and \ref{thm2}, except for the proof of Proposition
\ref{lemmaN,y}. Then Proposition \ref{lemmaN,y} is proved in Section \ref{sec6} (except for the proof
of Lemma \ref{lemma:annuli}). Section \ref{sec:thm3} is the proof of Theorem \ref{thm3}, which is the case of 
nonequilibrium steady state. Finally, the Appendix contains the proof of Lemma \ref{lemma:annuli}.


\section{The model} \label{secdef}

Let us fix a dimension $d \geq 1$, and $\mathcal D \subset \mathbb R^d$, 
a bounded, connected open set for which 
$\partial \mathcal D$ is a piecewise $\mathcal C^2$ submanifold with no cusps. We prescribe a continuous function 
$T : \mathbb R^d \setminus \mathcal D \rightarrow \mathbb R_{+}$ to be thought of as temperature. 
For $L \gg 1$, the physical domain of our system is  
$$\mathcal D_{L} = L \mathcal D \cap \mathbb Z^d\ .$$
At each lattice point $z \in \mathcal D_L$ (which will be called a {\it site}) there is a physical system
of $\omega_z$ degrees of freedom. 
The rate of interaction
along the edge $e = (u,v) \in \mathcal E (\mathcal D_L)$ is denoted by $r_e = r_{(u,v)}$ (here, 
$\mathcal E (\mathcal D_L)$ is the set of edges of the lattice $\mathbb Z^d$ restricted
to $\mathcal D_L$). Finally, we also fix rates of interaction between boundary sites and
the heat bath: $r_{(u,v)}$ for $u \in \mathcal D_L, v \in \mathbb Z^d \setminus \mathcal D_L$
with $\| u-v\| =1$.

Throughout this paper, $|\cdot|$ denotes the cardinality of a finite set, 
and for $v \in \mathbb R^d$,  
$\langle v \rangle$ is its closest point in $\mathbb Z^d$.

\medskip
The time evolution of the energies $\bm X(t) = \bm X^{(L)}(t) = (\xi_{v}^{(L)}(t))_{v \in \mathcal D_{L}}$ is a
Markov process with generator
$$
(Gf)(\underline \xi) =  (G_1f)(\underline \xi) + (G_2f)(\underline \xi),
$$
where $G_1$ describes interactions within $\mathcal D_L$ and
$G_2$ stands for interaction with the bath. We define $G_1$ as follows.
There is an exponential clock at each edge $e  =(u,v) \in \mathcal E (\mathcal D_L)$
of rate $e_{(u,v)}$. When it rings,
the energies of the two corresponding systems ($\xi_u$ and $\xi_v$) are pooled together and redistributed 
according to a $Beta(\omega_u/2, \omega_v/2)$ distribution. Thus
\begin{equation*}
 (G_1f)(\underline \xi) =
\sum_{(u,v) \in \mathcal E (\mathcal D_L)} r_{(u,v)}
\int_0^1 \frac{1}{B \left( \frac{\omega_u}{2}, \frac{\omega_v}{2} \right)} p^{\frac{\omega_u}{2} -1}
(1-p)^{\frac{\omega_v}{2} -1} [f(\underline \xi') -f( \underline \xi)] dp 
\end{equation*}
where 
\begin{equation*}
\xi'_w = \left\{ \begin{array}{rl}
 \xi_w &\mbox{ if $w \notin \{ u,v\}$} \\
  p(\xi_{u} + \xi_v) &\mbox{ if $w = u$}\\
 (1-p)(\xi_{u} + \xi_v) &\mbox{ if $w = v$.}
       \end{array} \right.
\end{equation*}
Every edge $e = (u,v)$ where $u \in \mathcal D_L$ 
and $v \in \mathbb Z^d \setminus \mathcal D_L$
provides connection to the heat bath of temperature $T(v /L)$:
with rate $r_{(u,v)}$, $\xi_u$ is updated to $\Gamma \left( \frac{\omega_u}{2}, T \left( \frac{v}{L} \right)  \right)$.
That is,

\begin{equation*}
 (G_2f)(\underline \xi) =
\sum_{u \in \mathcal D_L, v \in \mathbb Z^d \setminus \mathcal D_L, \| u-v\| =1} r_{(u,v)}
\int_0^{\infty} 
\frac{\eta^{\frac{\omega_u}{2} -1} \exp \left[-\frac{\eta}{T\left( \frac{v}{L} \right)} \right]}{ 
\left[ T\left( \frac{v}{L} \right) \right]^{\frac{\omega_u}{2}}
\Gamma \left( \frac{\omega_u}{2} \right)} [f(\underline \xi'') -f( \underline \xi)] d \eta 
\end{equation*}
where 
\begin{equation*}
\label{eq:xihat}
\xi''_w = \left\{ \begin{array}{rl}
 \xi_w &\mbox{ if $w \neq u$} \\
  \eta &\mbox{ if $w = u$.}
       \end{array} \right.
\end{equation*}

This completes the definition of $\bm X^{(L)}(t)$.


\section{The dual process} \label{subsecdefdual}

As mentioned in the Introduction, we want to understand the asymptotic behavior of $\bm X(t)$ by 
switching to a dual process. This section is devoted to the discussion of duality.

For $\mathcal D \subset \mathbb R^d$ and $L$ as in Section \ref{secdef}, we now introduce
a Markov process ${\bm Y^{(L)}(t)} = ((n_v^{(L)} (t))_{v \in \mathcal D_L}, (\hat n_v^{(L)} (t))_{v \in \mathcal B_L})$ 
designed to carry certain dual object, which we call particles, 
from sites in $\mathcal D_L$ to 
$$\mathcal B_L = \{ v \in \mathbb Z^d \setminus \mathcal D_L: \exists v \in \mathcal D_L: \| u-v\| =1\}.$$
Here, $n_{v}$ the number of particles at site $v \in \mathcal D_L$ and
$\hat n_{w}$ the number of particles permanently drooped off to the storage at $w \in \mathcal B_L$.
The generator of the process ${\bm Y(t)}$ is given by 
$$ (Af)(\underline n) = (A_1f)(\underline n) + (A_2f)(\underline n),$$
where $A_1$ corresponds to movements inside $\mathcal D_L$ and $A_2$ corresponds to 
the process of dropping off the particles to the storage. That is,
\begin{eqnarray*}
&& (A_1f)(\underline n) =  
\sum_{(u,v) \in \mathcal E ( \mathcal D_L)} r_{(u,v)} \\
&& \sum_{k=0}^{n_v+n_w} {{n_u+n_v} \choose {k}} \frac{B(k+ \frac{\omega_u}{2}, n_u+n_v-k+\frac{\omega_v}{2})}{ B(\frac{\omega_u}{2},\frac{\omega_v}{2}) }
 [f(\underline n') - f(\underline n) ]
\end{eqnarray*}
where 
\begin{equation}
\label{eq:n'}
n'_w = \left\{ \begin{array}{rl}
 n_w &\mbox{ if $w \notin \{ u,v\}$} \\
  k &\mbox{ if $w = u$.}\\
  n_v+n_w - k &\mbox{ if $w = v$.}
       \end{array} \right.
\quad \text{ and  }\quad \hat n'_w = \hat n_w \forall w \in \mathcal B_L .
\end{equation}
Recall that in case of the process $\bm X$, the energies are redistributed according to
a beta distribution with parameters $\omega_u/2, \omega_v/2$. In case of the dual process $\bm Y$, 
we redistribute the particles with the so called beta binomial distribution: first we choose a $p$
according to $Beta(\omega_u/2, \omega_v/2)$, then we choose $n'_u$ with binomial distribution of 
parameters $n_u + n_v, p$ and
$n'_v= n_u+n_v-n'_u$.\\
The second part of the generator is given by 
$$
(A_2f)(\underline n) = \sum_{u \in \mathcal D_L, v \in \mathcal B_L,
\| u-v\| =1} r_{(u,v)} [f(\underline n') - f(\underline n) ]$$
where 
\begin{equation*}
n'_w = \left\{ \begin{array}{rl}
 n_w &\mbox{ if $w \neq u$} \\
  0 &\mbox{ if $w = u$.}
       \end{array} \right.
\quad \text{ and  }\quad 
\hat n'_w = \left\{ \begin{array}{rl}
 \hat n_w &\mbox{ if $w \neq v$} \\
  \hat n_v+ n_u &\mbox{ if $w = v$}
       \end{array} \right. 
\end{equation*}
This completes the definition of $\bm Y^{(L)} (t)$. \\

Now we turn to the duality. Let us define the function with respect to which the duality holds
\begin{equation}
 \label{eq:F}
F(\underline n, \underline \xi) = 
\prod_{u \in \mathcal D_L} \frac{\xi_u^{n_u}\Gamma(\omega_u/2)}{\Gamma(n_u+\omega_u/2)}
\prod_{v \in \mathcal B_L} \left[ T\left( \frac{v}{L} \right) \right]^{\hat n_v}
\end{equation}
The duality with respect to $F$ means that
\begin{proposition}
\label{prop:dual}
For any $\underline \xi$ any $\underline n$ and any $t>0$,
 $$ \mathbb E (F(\underline n, \bm X(t)) | \bm X_0 = \underline \xi) = 
\mathbb E (F( \bm Y(t), \underline \xi) | \bm Y_0 = \underline \xi) $$
\end{proposition}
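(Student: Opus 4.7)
The plan is to prove the standard generator-identity version of duality: for every configuration $\underline\xi$ of energies and every configuration $\underline n$ of particles,
\begin{equation*}
\bigl(G\, F(\underline n,\cdot)\bigr)(\underline\xi) \;=\; \bigl(A\, F(\cdot,\underline\xi)\bigr)(\underline n).
\end{equation*}
Once this identity is established, the function $t\mapsto \mathbb E(F(\underline n,\bm X(t))\mid \bm X_0=\underline\xi)-\mathbb E(F(\bm Y(t),\underline\xi)\mid \bm Y_0=\underline n)$ vanishes at $t=0$ and satisfies a linear ODE whose only solution is zero, so Proposition \ref{prop:dual} follows. The process lives on a finite site set $\mathcal D_L$ with bounded total rate, and $F$ is a polynomial in $\underline\xi$ of bounded degree for each fixed $\underline n$ (and depends on $\underline n$ through factorials), so no domain or explosion issues arise.

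Because $G$ and $A$ both decompose as a sum over edges, it suffices to check the identity edge by edge. For an interior edge $(u,v)\in\mathcal E(\mathcal D_L)$ write $a=\omega_u/2$, $b=\omega_v/2$, $N=n_u+n_v$, $S=\xi_u+\xi_v$, and notice that $F$ factorizes so only the $(u,v)$-factor is changed by the corresponding term of either generator. On the $\bm X$-side, performing the $p$-integral reduces to $\int_0^1 p^{n_u+a-1}(1-p)^{n_v+b-1}dp = B(n_u+a,n_v+b)$; on the $\bm Y$-side, the $k$-sum telescopes via the binomial theorem $\sum_k \binom{N}{k}\xi_u^k\xi_v^{N-k}=S^N$ after the $\Gamma(k+a),\Gamma(N-k+b)$ factors cancel. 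Both computations yield the same expression
\begin{equation*}
r_{(u,v)} \frac{\Gamma(a+b)}{\Gamma(N+a+b)}\,S^{N}\;-\;r_{(u,v)}F_{u,v}(\underline n,\underline\xi),
\end{equation*}
where $F_{u,v}$ denotes the $(u,v)$-factor of $F$. This is the core algebraic identity.

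For a boundary edge $(u,v)$ with $v\in\mathcal B_L$, set $T=T(v/L)$ and $a=\omega_u/2$. The $G_2$-term computes $\int_0^\infty \eta^{n_u+a-1}e^{-\eta/T}d\eta = T^{n_u+a}\Gamma(n_u+a)$, and after cancellation contributes $r_{(u,v)}T^{\hat n_v}\bigl[T^{n_u}-\xi_u^{n_u}\Gamma(a)/\Gamma(n_u+a)\bigr]$ to the generator. The $A_2$-term empties $n_u$ into $\hat n_v$, so the $(u,v)$-factor of $F$ jumps from $\xi_u^{n_u}\Gamma(a)\Gamma(n_u+a)^{-1}T^{\hat n_v}$ to $T^{\hat n_v+n_u}$ (using $\Gamma(a)/\Gamma(0+a)=1$ and $\xi_u^0=1$), producing exactly the same expression.

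The main obstacle is the interior-edge calculation: one has to recognize that the Beta$(a,b)$-weighted redistribution on the energy side and the Beta-binomial redistribution on the particle side are precisely the two sides of the identity $B(n_u+a,n_v+b)/B(a,b)=\Gamma(a+b)\Gamma(n_u+a)\Gamma(n_v+b)/[\Gamma(N+a+b)\Gamma(a)\Gamma(b)]$ combined with the binomial theorem. Everything else is assembling these local identities and invoking the generator-characterization of Markov semigroups on a finite state space.
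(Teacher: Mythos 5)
Your proposal is correct and follows essentially the same path as the paper: reduce to the generator identity $G F(\underline n,\cdot)=A F(\cdot,\underline\xi)$, decompose over edges so that only the $(u,v)$-factor of $F$ changes, and verify the local identity separately for interior edges (Beta vs.\ Beta-binomial, combined with the binomial theorem) and for boundary edges (Gamma moments vs.\ absorption). The one spot to tighten is your sketch of why the generator identity implies the semigroup identity. The difference $t\mapsto\mathbb E(F(\underline n,\bm X(t)))-\mathbb E(F(\bm Y(t),\underline\xi))$ does not obviously satisfy a closed linear ODE: each term satisfies its own backward equation, $\partial_t\,\mathbb E F(\underline n,\bm X(t))=\mathbb E\,(GF)(\underline n,\bm X(t))$ and $\partial_t\,\mathbb E F(\bm Y(t),\underline\xi)=\mathbb E\,(AF)(\bm Y(t),\underline\xi)$, but these are expectations under different semigroups and the equality $GF=AF$ alone does not equate the two right-hand sides at positive time. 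The standard fix is the interpolation argument: with $P^X$ and $P^Y$ acting on the $\underline\xi$- and $\underline n$-variables respectively, set $g(s)=P^X_s P^Y_{t-s}F(\underline n,\underline\xi)$; since $P^Y_{t-s}$ commutes with $A$ and $GF=AF$, one gets $g'(s)=P^X_s P^Y_{t-s}(GF-AF)=0$, hence $g(0)=g(t)$, which is exactly the duality. Domain issues are indeed harmless here (finitely many sites, polynomial $F$, bounded jump rates), as you note. Apart from this standard reduction step, which the paper also states without proof, your computation coincides with Cases 1 and 2 of the paper.
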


\begin{proof} Clearly it is enough to prove that for any $\underline \xi$ and $\underline n$,
$$ G F (\underline \xi,\underline n) = A  F (\underline \xi,\underline n).$$
To prove this, we consider the following two cases, where Case $i$ corresponds to $G_i$ and $A_i$
for $i=1,2$.

\medskip
\noindent {\bf Case 1.} The clock on the edge $(u,v)$ rings. 
The term corresponding to $u$ and $v$ in $G_1F$ 
can be written as $r_{(u,v)} \cdot I \cdot II$, where
$$ I =
\Gamma(\omega_u/2)\Gamma(\omega_v/2)
\prod_{w \in \mathcal D_L \setminus \{ u,v\} } \frac{\xi_w^{n_w} \Gamma(\omega_w/2)}{\Gamma(n_w+\omega_w/2)}
\prod_{w \in \mathcal B_L} \left[ T\left( \frac{w}{L} \right) \right]^{\hat n_w}
 $$
and
$$
 II =
\frac{ \int_0^1  p^{n_u + \frac{\omega_u}{2} -1 } (1-p)^{n_v + \frac{\omega_v}{2} -1 } \frac{1}{B(\frac{\omega_u}{2}, \frac{\omega_v}{2})}
(\xi_u + \xi_v)^{n_u+n_v} dp
 - \xi_{u}^{n_{u}} \xi_{v}^{n_{v}} }{{\Gamma(n_u + \frac{\omega_u}{2})} \Gamma(n_v + \frac{\omega_v}{2})}
$$
Then we compute
\begin{eqnarray*}
 && II +  \frac{\xi_{u}^{n_{u}} \xi_{v}^{n_{v}} }{{\Gamma(n_u + \frac{\omega_u}{2})} \Gamma(n_v + \frac{\omega_v}{2})} =\\
&=& \frac{1}{B(\frac{\omega_u}{2}, \frac{\omega_v}{2})} \frac{1}{\Gamma (n_u + n_v +  \frac{\omega_u + \omega_v}{2})} (\xi_u + \xi_v)^{n_u+n_v}\\
&=& \frac{1}{B(\frac{\omega_u}{2}, \frac{\omega_v}{2})} \frac{1}{\Gamma (n_u + n_v +  \frac{\omega_u + \omega_v}{2})} 
\sum_{k=0}^{n_u+n_v} {{n_u+n_v} \choose k}\xi_u^{n_u} \xi_v^{n_v}\\
&=& \frac{1}{B(\frac{\omega_u}{2}, \frac{\omega_v}{2})}  
\sum_{k=0}^{n_u+n_v} {{n_u+n_v} \choose k}
B \left( k+ \frac{\omega_u}{2}, n_u+n_v-k+\frac{\omega_v}{2} \right)
\frac{\xi_u^{n_u}}{\Gamma(k+ \frac{\omega_u}{2})} \frac{\xi_v^{n_v}}{\Gamma (n-k+ \frac{\omega_v}{2})}
\end{eqnarray*}

Thus $r_{(u,v)} \cdot I \cdot II$ is the term corresponding to $u$ and $v$ in $A_1F$.

\medskip
\noindent {\bf Case 2.} The energy at site $u \in \mathcal D_L$ is updated by the heat bath
at $v \in \mathcal B_L$ (where $\| u-v\|=1$). 
As before, we write the term corresponding to $(u,v)$ in $G_2F$ as $r_{(u,v)} \cdot I \cdot II$, where 
$$ I =\prod_{u' \in \mathcal D_L \setminus \{ u\} } \frac{\xi_{u'}^{n_{u'}} 
\Gamma(\omega_{u'}/2)}{\Gamma(n_{u'}+\omega_{u'}/2)}
\prod_{v' \in \mathcal B_L \setminus \{ v\}} \left[ T\left( \frac{v'}{L} \right) \right]^{\hat n_{v'}}
 $$
and
$$
 II =
\frac{\Gamma(\frac{\omega_u}{2})}{\Gamma(n_u + \frac{\omega_u}{2})}
\left[
\int_0^{\infty} 
\frac{\eta^{n_u+\frac{\omega_u}{2} -1} \exp \left[-\frac{\eta}{T\left( \frac{v}{L} \right)} \right]}{ \left[ T\left( \frac{v}{L} \right) \right]^{\frac{\omega_u}{2}}
\Gamma \left( \frac{\omega_u}{2} \right)} d \eta 
 - \xi_{u}^{n_{u}} \right] \left[ T\left( \frac{v}{L} \right) \right]^{\hat n_v}
$$
By (\ref{eq:gammamoment}), we obtain that 
\begin{eqnarray*}
&& II + \frac{\xi_u^{n_u} \Gamma(\frac{\omega_u}{2})}{\Gamma(n_u + \frac{\omega_u}{2})}\left[ T\left( \frac{v}{L} \right) \right]^{\hat n_v}
= \left[ T\left( \frac{v}{L} \right) \right]^{\hat n_v+n_u} \\
\end{eqnarray*}
Thus $r_{(u,v)} \cdot I \cdot II$ is the term corresponding to $v$ in $A_2F$.
\end{proof}

Note that the process $\bm Y$ preserves the total number of particles, which will be denoted be $N$.
We conclude this section with the following simple lemma.

\begin{lemma}
\label{lemma:subsyst}
The restriction of the process $\bm Y$ to arbitrary subset of $K$ particles (with $K<N$) is also a Markov
process and satisfies the definition of $\bm Y$ with $N$ replaced by $K$.
\end{lemma}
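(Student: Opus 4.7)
The plan is to represent $\bm Y$ by a labeled particle system and exploit the fact that the beta--binomial redistribution step can be decoupled into independent per-particle choices conditional on a single auxiliary random variable. This will make the marginal dynamics on any subset of particles manifestly autonomous and Markovian.

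First I would construct a coupled, labeled version $\widetilde{\bm Y}$ in which each of the $N$ particles carries a distinct label. Attach an independent Poisson clock of rate $r_{(u,v)}$ to each bulk edge $(u,v) \in \mathcal E (\mathcal D_L)$; when it rings, draw an auxiliary $p \sim Beta(\omega_u/2, \omega_v/2)$, and independently place each labeled particle currently at $u$ or $v$ at site $u$ with probability $p$ and at site $v$ with probability $1-p$. At each boundary edge $(u,v)$ with $u \in \mathcal D_L$, $v \in \mathcal B_L$, the clock of rate $r_{(u,v)}$ moves every labeled particle from $u$ to the storage at $v$.

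Next I would verify that forgetting labels in $\widetilde{\bm Y}$ reproduces $\bm Y$ in law. Conditional on $p$, if $n_u + n_v$ particles sit at $\{u,v\}$, the number landing at $u$ is $\text{Binomial}(n_u + n_v, p)$; integrating over $p \sim Beta(\omega_u/2, \omega_v/2)$ gives a beta--binomial random variable, which is exactly the transition encoded by $A_1$ via (\ref{eq:n'}). The boundary transition is immediate. Hence $\widetilde{\bm Y}$ is a legitimate construction of $\bm Y$.

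Finally, fix any subset $S$ of $K<N$ labels and consider the restriction $\widetilde{\bm Y}_S$. At each bulk clock firing, the only randomness affecting $\widetilde{\bm Y}_S$ is the shared $p$ and the independent Bernoulli$(p)$ choices for the $S$-particles; the positions of the non-$S$ particles never enter. Averaging over $p$ shows that the $k_u + k_v$ particles of $S$ present at $\{u,v\}$ are redistributed according to a beta--binomial with parameters $(k_u+k_v, \omega_u/2, \omega_v/2)$, which is precisely the bulk transition of the $K$-particle process defined by $A_1$. The boundary transitions restrict correctly as well, since ``move everything at $u$ to storage'' restricts to ``move every $S$-particle at $u$ to storage''. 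This identifies the generator of $\widetilde{\bm Y}_S$ with that of $\bm Y^{(K)}$.

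The only conceptual step where care is needed is verifying the identity ``beta mixture of independent Bernoullis equals beta--binomial'' and observing that the distribution of the auxiliary $p$ does not depend on particle counts; once this is in place, the autonomy of the $S$-dynamics, the Markov property, and the identification of the generator are all automatic.
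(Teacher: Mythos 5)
Your proof is correct, and it takes a genuinely different route from the paper's. The paper proves the lemma by a direct computation: it writes down the probability $p(j_1,\dots,j_l)$ that a specific labeled subset ends up at site $u$ after an $(u,v)$ update (equation (\ref{eq:subsys1})), adds one extra particle, and verifies by an explicit Gamma-function identity that summing over the two possible locations of the extra particle (equation (\ref{eq:subsys2})) recovers the $(N-1)$-particle formula. You instead exhibit a coupling that explains \emph{why} this consistency identity holds: the beta--binomial update is a de Finetti mixture in which a single auxiliary $p\sim Beta(\omega_u/2,\omega_v/2)$ is drawn (with distribution depending only on the site parameters, not on particle counts) and then each particle at $\{u,v\}$ is independently sent to $u$ with probability $p$. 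Once this representation is in place, restricting to any $K$ labels is trivially autonomous and Markov, and the restricted dynamics is again beta--binomial on $K$ particles. The computational identity the paper checks by hand is exactly the $K=N-1$ instance of the fact that conditionally i.i.d.\ Bernoullis are consistent under marginalization, so the two proofs are verifying the same structural fact; yours isolates that structure explicitly, which is cleaner and generalizes to any redistribution rule that is a mixture of independent per-particle assignments, while the paper's argument is elementary and self-contained but obscures where the cancellation comes from. One small remark worth making explicit: the paper's unlabeled process $\bm Y$ only records counts, so ``restriction to $K$ particles'' presupposes a labeled refinement; both your construction and the paper's tacitly take the exchangeable labeling (uniform assignment of labels given the new counts), and your mixture-of-Bernoullis construction produces precisely that exchangeable law, so the two labeled versions agree.
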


\begin{proof} Without loss of generality, we can consider a system with
$K=N-1$ particles, assume that the clock attached to the edge $(v,w)$ rings and 
 the union of the particles at sites $u$ and $v$ prior to the mixing are labeled $\{1, \dots, n\}$. 
 
 Then
the probability that after the mixing, the set of particles at site $u$ is 
exactly $\{j_1, ...j_l\} \subset \{1,2,...,n\}$ is given by
\begin{equation}
\label{eq:subsys1}
 p(j_1, ..., j_l) = {n \choose l} 
\frac{\Gamma(l+\omega_u/2)\Gamma(n-l+\omega_v/2)}{\Gamma( n + \omega_u/2 + \omega_v/2)}  
\frac{\Gamma(\omega_u/2 +\omega_v/2)}{\Gamma( \omega_u/2)\Gamma( \omega_v/2)} 
\frac{1}{{n \choose l}}
\end{equation}
Now assume we add a new particle (of index $N$). If this new particle is not at sites $u$
or $v$, then clearly the situation is not disturbed.
If it is there, we compute
\begin{eqnarray}
&& p(j_1, ..., j_l) + p(j_1, ..., j_l,N+1) = \label{eq:subsys2} \\
&&\frac{\Gamma(l+\omega_u/2)\Gamma(n+1-l+\omega_v/2)}{\Gamma( n + 1+ \omega_u/2 + \omega_v/2)}  
\frac{\Gamma(\omega_u/2 +\omega_v/2)}{\Gamma( \omega_u/2)\Gamma( \omega_v/2)} \nonumber \\
&&+
\frac{\Gamma(l+1+\omega_u/2)\Gamma(n-l+\omega_v/2)}{\Gamma( n+1+ \omega_u/2 + \omega_v/2)}  
\frac{\Gamma(\omega_u/2 +\omega_v/2)}{\Gamma( \omega_u/2)\Gamma( \omega_v/2)} \nonumber
\end{eqnarray}
An elementary computation shows that (\ref{eq:subsys1}) is equal to (\ref{eq:subsys2}).  
The lemma follows.
\end{proof}


\section{Local Equilibrium}
\label{sec:hyd}

Let $d$, $\mathcal D$ and $L$ be as before such that $\mathcal D_L$ is connected.

First we state the existence and uniqueness of invariant measure in the equilibrium case.

\begin{proposition}
\label{prop:equilibrium}
Fix arbitrary functions $\omega: \mathcal D_L \rightarrow \mathbb Z_+$ 
and $r: \mathcal E (\mathcal D_L) \rightarrow \mathbb R_+$. 
If $T$ is constant, then 
$$ \mu^{(L)}_e = \prod_{v \in \mathcal D_L} \Gamma \left( \frac{\omega_v}{2}, \frac{1}{T} \right)$$
is the unique invariant probability measure of the process $\bm X^{(L)}(t)$.
\end{proposition}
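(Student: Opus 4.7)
The plan is to prove invariance and uniqueness separately, leaning on the duality in Proposition \ref{prop:dual} for the latter.

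For invariance, I would verify $\int G f\, d\mu_e^{(L)} = 0$ on a suitable class of test functions, treating $G_1$ and $G_2$ separately. The key ingredient is the classical Gamma--Beta conjugacy: if $\xi_u \sim \Gamma(\omega_u/2, T)$ and $\xi_v \sim \Gamma(\omega_v/2, T)$ are independent, then $\xi_u + \xi_v$ is independent of $\xi_u/(\xi_u + \xi_v)$, with marginals $\Gamma((\omega_u + \omega_v)/2, T)$ and $Beta(\omega_u/2, \omega_v/2)$. This means that the joint law of $(\xi_u, \xi_v)$ under $\mu_e^{(L)}$ is already invariant under the single-edge update defining $G_1$, giving $\int G_1 f\, d\mu_e^{(L)} = 0$ edge-by-edge. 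For $G_2$, the bath update simply resamples $\xi_u$ from its marginal $\Gamma(\omega_u/2, T)$ under $\mu_e^{(L)}$, so $\int G_2 f\, d\mu_e^{(L)} = 0$ term-by-term.

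For uniqueness I would invoke the dual process. The state space of $\bm Y$ consists of configurations with a fixed total particle number $N$ distributed between $\mathcal D_L$ and $\mathcal B_L$; it is finite, and states with all $N$ particles in $\mathcal B_L$ are absorbing. Using connectedness of $\mathcal D_L$, positivity of all rates, and the fact that every outcome of a beta-binomial redistribution has positive probability, one checks that the absorbing set is reachable from any initial configuration along a positive-probability path -- namely, concentrate all particles at a single boundary-adjacent site via bulk clocks and then fire the appropriate bath clock. Standard finite Markov chain theory then yields almost sure absorption in finite time. Since $T$ is constant, once absorption has occurred one has $F(\bm Y(t), \underline \xi) = T^N$ independently of $\underline \xi$, so Proposition \ref{prop:dual} gives
\begin{equation*}
\lim_{t \to \infty} \mathbb E \bigl( F(\underline n, \bm X(t)) \bigm| \bm X_0 = \underline \xi \bigr) = T^{N}
\end{equation*}
with $N = \sum_u n_u$. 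A direct computation using \eqref{eq:gammamoment} shows this limit equals $\int F(\underline n, \underline \xi)\, d\mu_e^{(L)}(\underline \xi)$, so every invariant probability measure $\mu$ of $\bm X^{(L)}$ matches $\mu_e^{(L)}$ on every polynomial $\underline \xi \mapsto \prod_u \xi_u^{n_u}$; since products of Gamma distributions are determined by their moments, $\mu = \mu_e^{(L)}$.

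The main obstacle is justifying the passage to the limit in the duality identity: $F(\underline n, \cdot)$ is an unbounded polynomial, so the almost sure absorption of $\bm Y$ does not immediately yield convergence of $\mathbb E[F(\underline n, \bm X(t))]$. What one needs is a moment bound on $\bm X(t)$ uniform in $t$, which can be obtained either by a direct generator calculation controlling $\mathbb E[\xi_u^k(t)]$ or by a Lyapunov comparison with a pure-bath process, but this bookkeeping must be carried out with care.
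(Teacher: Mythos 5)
Your invariance argument (Gamma--Beta conjugacy edge-by-edge for $G_1$, marginal resampling for $G_2$) is exactly the paper's: the paper invokes the same well-known fact (citing Lemma~3 of \cite{CHS81}) and stops there. Where you diverge is on uniqueness. The paper does not actually prove uniqueness inside Proposition~\ref{prop:equilibrium}; it is subsumed by Proposition~\ref{prop:invmeasexists}, whose proof is omitted with a pointer to \cite{RY07,LNY15} (irreducibility/coupling arguments). You instead derive uniqueness directly from the duality of Proposition~\ref{prop:dual} plus absorption of $\bm Y$, which is a genuinely different and self-contained route, and it also gives a quantitative picture of how all moments relax to equilibrium.

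On your flagged obstacle: the worry about uniform-in-$t$ moment bounds on $\bm X(t)$ is not actually where the difficulty lies, and the issue resolves more cleanly than you suggest. The duality identity is exact for every $t$, and the right-hand side $\mathbb E[F(\bm Y(t), \underline \xi)\mid \bm Y(0)=\underline n]$ is, for fixed $\underline \xi$, a bounded deterministic function of $t$ (the dual has finite state space) converging to $T^N$ by absorption. So for every fixed initial $\underline \xi$ the moments of $\bm X(t)$ converge to those of $\mu_e^{(L)}$; combined with the Carleman/moment-determinacy of products of Gammas (the same fact used later via \cite{Z83}), this gives weak convergence $\mathrm{Law}(\bm X(t)\mid \bm X(0)=\underline \xi)\Rightarrow \mu_e^{(L)}$ from \emph{every} starting point. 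Uniqueness then follows by testing an arbitrary invariant $\mu$ against bounded continuous functions: $\int g\,d\mu = \int \mathbb E[g(\bm X(t))\mid \bm X(0)=\underline \xi]\,d\mu(\underline \xi) \to \int g\,d\mu_e^{(L)}$ by bounded convergence. This sidesteps entirely the need to first establish that $\mu$ has finite moments. So your strategy is sound; only the last bookkeeping step should be rerouted through bounded test functions rather than through a direct moment bound on $\mu$.
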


\begin{proof}
The discussion in Section \ref{sec:pr} implies the following statement (which is actually well known, see e.g.
Lemma 3 in \cite{CHS81}).
 Let $\xi_1$ and $\xi_2$ be 
 independent Gamma distributed random variables with shape parameter $k$ and $l$, respectively 
and with the same scale parameter. Let $Z$ be independent from $X$ and $Y$ and have Beta 
distribution with parameters $k/2$ and $l/2$. 
Then the pair $(Z(\xi_1+\xi_2), (1-Z)(\xi_1+\xi_2))$ has the same distribution as $(\xi_1,\xi_2)$.
Proposition \ref{prop:equilibrium} follows.
\end{proof}

Our primary interest is in the out-of-equilibrium settings where the
bath temperature is non constant:

\begin{proposition}
\label{prop:invmeasexists}
Let 
$\omega: \mathcal D_L \rightarrow \mathbb Z_+$ 
and $r: \mathcal E (\mathcal D_L) \rightarrow \mathbb R_+$
be arbitrary. The process ${\bm X^{(L)}(t)}$ has a unique invariant probability measure $\mu^{(L)}$.
Furthermore, the distribution of ${\bm X^{(L)}(t)}$ converges to $\mu^{(L)}$ as $t \rightarrow \infty$
for any initial distribution of ${\bm X^{(L)}(0)}$.
\end{proposition}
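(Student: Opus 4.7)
The plan is to exploit the duality of Proposition~\ref{prop:dual}. The dual process $\bm Y^{(L)}$ restricted to the (finite) slice of fixed total particle number $N = \sum_u n_u + \sum_w \hat n_w$ absorbs every particle into the storage $\mathcal B_L$ with probability one: from any configuration, connectivity of $\mathcal D_L$ together with positivity of the rates $r_{(u,v)}$ lets one push each particle one at a time to a boundary edge via a sequence of positive-rate transitions, so by standard finite-state Markov chain theory $n_v(t) \to 0$ a.s.\ for every $v \in \mathcal D_L$, and the monotone storage counts $\hat n_w(t)$ converge a.s.\ to a limit $\hat n_w(\infty)$ with $\sum_w \hat n_w(\infty) = N$.

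Fix $\underline \xi$, $\underline n$ and set $T_{\max} := \max_{w \in \mathcal B_L} T(w/L)$. By particle conservation, $F(\bm Y(t), \underline \xi) \leq C_N \cdot (1 \vee \max_u \xi_u)^N (1 \vee T_{\max})^N$ uniformly in $t$, so Proposition~\ref{prop:dual} together with bounded convergence yields
\[
\mathbb E_{\underline \xi}[F(\underline n, \bm X(t))] = \mathbb E_{\underline n}[F(\bm Y(t), \underline \xi)] \;\longrightarrow\; m(\underline n) := \mathbb E_{\underline n}\!\prod_{w \in \mathcal B_L} T(w/L)^{\hat n_w(\infty)},
\]
a limit manifestly independent of $\underline \xi$. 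Since $F(\underline n, \underline \xi) = C(\underline n) \prod_u \xi_u^{n_u}$ with $C(\underline n) = \prod_u \Gamma(\omega_u/2)/\Gamma(n_u + \omega_u/2) > 0$, this upgrades to the convergence of every mixed polynomial moment $\mathbb E_{\underline \xi}[\prod_u \xi_u(t)^{n_u}] \to M(\underline n) := m(\underline n)/C(\underline n)$.

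Next I would identify $\{M(\underline n)\}$ as the moment sequence of a uniquely determined probability measure $\mu^{(L)}$ on $\mathbb R_+^{\mathcal D_L}$. The bound $m(\underline n) \leq (T_{\max} \vee 1)^N$ combined with Stirling gives $M(\underline n) \leq (T_{\max} \vee 1)^N \prod_u \Gamma(n_u + \omega_u/2)/\Gamma(\omega_u/2)$, so the formal moment generating function $\sum_{\underline n} \prod_u \alpha_u^{n_u}/n_u! \cdot M(\underline n)$ has positive radius of convergence in each variable, implying moment determinacy via a multivariate Carleman argument. Tightness of $\{\mathrm{Law}(\bm X^{(L)}(t))\}_t$ follows from the uniform first-moment bound, so the convergence of all moments upgrades to weak convergence $\bm X^{(L)}(t) \Rightarrow \mu^{(L)}$ from every deterministic initial datum; bounded convergence then extends this to any initial distribution.

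Invariance of $\mu^{(L)}$ follows by a second application of moment convergence. By duality, $g_{\underline n, t}(\underline \xi) := \mathbb E_{\underline \xi}[F(\underline n, \bm X(t))]$ is a polynomial in $\underline \xi$ of degree at most $N$. The Markov property then gives $\mathbb E_{\underline \xi_0}[g_{\underline n, t}(\bm X(s))] = \mathbb E_{\underline \xi_0}[F(\underline n, \bm X(t+s))] \to m(\underline n)$ as $s \to \infty$, while moment convergence identifies the same limit with $\int g_{\underline n, t}\, d\mu^{(L)} = \mathbb E_{\mu^{(L)}}[F(\underline n, \bm X(t))]$. Therefore $\mathbb E_{\mu^{(L)}}[F(\underline n, \bm X(t))] = m(\underline n) = \mathbb E_{\mu^{(L)}}[F(\underline n, \bm X(0))]$ for every $\underline n, t$, and moment determinacy promotes this to $\mathrm{Law}(\bm X^{(L)}(t)) = \mu^{(L)}$. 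Uniqueness is immediate from the weak convergence above: any invariant $\mu'$ satisfies $\int f\, d\mu' = \int \mathbb E_{\underline \xi}[f(\bm X(t))]\, d\mu'(\underline \xi) \to \int f\, d\mu^{(L)}$ for every bounded continuous $f$ by dominated convergence, forcing $\mu' = \mu^{(L)}$. The main obstacle will be executing the multivariate moment-determinacy step carefully; the rest---absorption of $\bm Y^{(L)}$, the bounded-convergence passages, and the extraction of moment limits from the duality---is essentially routine once the duality is in hand.
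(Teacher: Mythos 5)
Your duality-based argument is correct and self-contained. The paper itself omits a proof and defers to Proposition 1.2 of \cite{RY07} and Proposition 2 of \cite{LNY15}; those arguments are built around tightness from energy bounds together with a Harris/Doeblin-type uniqueness mechanism rather than the moment route you take, so your proof is a genuinely different (and, in the present setting, arguably more natural) path, since it exploits precisely the duality machinery this paper constructs. The ingredients you use are all sound: the dual $\bm Y$ is a finite-state chain on the slice of fixed particle number $N$, every non-absorbing state reaches an absorbing one by a positive-rate path (connectivity of $\mathcal D_L$ plus the fact that the beta-binomial redistribution puts positive mass on every split --- and indeed you can simplify further since the $A_2$ transition dumps all $n_u$ particles at once, so no particle-by-particle argument is needed); for fixed $(\underline n, \underline \xi)$ the duality function is uniformly bounded in $t$, giving the limit $m(\underline n)$ independent of $\underline \xi$ by bounded convergence; dividing by $C(\underline n)$ yields convergence of all mixed moments; and the bound $M(\underline n) \le (T_{\max}\vee 1)^N \prod_u \Gamma(n_u+\omega_u/2)/\Gamma(\omega_u/2)$ dominates the limit moments by Gamma moments, giving multivariate moment determinacy via marginal Carleman. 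The one place your write-up compresses is the passage from moment convergence to weak convergence: spell out that tightness (first-moment bound) yields subsequential weak limits, that uniform boundedness of moments of one order higher gives uniform integrability so subsequential limits have moments $M(\underline n)$, and that moment determinacy then forces all subsequential limits to coincide. With that made explicit, the invariance and uniqueness arguments go through exactly as you state.
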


We skip the proof of Proposition
\ref{prop:invmeasexists} since it is very similar to 
the analogous propositions in earlier similar models, see 
Proposition 1.2 in \cite{RY07}, Proposition 2 in
\cite{LNY15}.

\subsection{Hydrodynamic limit}

In order to discuss the local equilibrium in the hydrodynamic limit, 
we need some definitions. 
First we introduce some properties of the initial measures.

\begin{definition}
We say that $\bm X (0)$ is associated with $f$ if
for any fixed $\delta$ and any $k$
\begin{equation}
\label{eq:init}
\mathbb E \left( \prod_{i=1}^k \xi_{v_i}^{(L)} (0) \right) \sim \prod_{i=1}^k \frac{\omega_{v_i}}{2} f(v_i/L) 
\end{equation}
as $L \rightarrow \infty$ 
uniformly for every $v_1, ..., v_k \in \mathcal D_L$ satisfying $\| v_i - v_j\| \geq \delta L $ for $i \neq j$
and with some fixed continuous function $f : \mathbb R^d \rightarrow \mathbb R_+$ 
such that $f|_{\mathbb R^d \setminus D} = T$
\end{definition}



\begin{definition}
We say that $\bm X (0)$ satisfies the uniform moment condition if there
are constants $C_k$ such that $\mathbb E ( \xi_{v}^k (0)) < C_k$ for every $L$ and for every 
$v \in \mathcal D_L$.
\end{definition}

Recall that the L\'evy-Prokhorov distance is the metrization of weak convergence of measures.

\begin{definition}
We say that $\bm X^{(L)}(t)$ approaches local equilibrium 
in the hydrodynamic
limit
at $x \in \mathcal D$ and $t>0$  if for any finite set 
$S \subset \mathbb Z^d$ 
the L\'evy-Prokhorov distance of the distribution of $\bm X^{(L)}(tL^2)$ 
restricted to the components $(\xi_{\langle xL \rangle+s})_{s \in S}$
and 
 $$\prod_{s \in S} \Gamma \left( \omega_{{\langle xL \rangle+s} }/2, u(t,x) \right)$$ 
converges to zero as $L \rightarrow \infty$.
\end{definition}

We will choose the initial distributions, i.e. the distributions of $\bm X^{(L)}(0)$, which are
associated with a continuous function $f$. 
The interesting question is that 
what kind of equation defines $u$ for different choices of $\omega$ and $r$. In any case, we 
expect that the initial condition is given by $f$ and the boundary condition by $T$. 
We will consider the two simplest cases here.

\begin{theorem}
\label{thm1}
Assume $d \geq 2$ and $\omega_v=\omega_0 \in \mathbb Z_+$ for every site $v$.
Assume furthermore that 
$r_{(u,v)}=R(\frac{u+v}{2L})$ for every $u \in \mathcal D_L$ and $v \in \mathbb Z^d$ with
$\| u-v\|=1$, where $R \in \mathcal C^2 (\mathbb R^d, \mathbb R_+)$.
Also assume that $\bm X^{(L)}(0)$ is associated with $f$,
a continuous function $f : \mathcal {\bar D} \rightarrow \mathbb R_+$,
and satisfy the uniform moment condition. Then
$\bm X^{(L)}(t)$ approaches local equilibrium in the hydrodynamic
limit for all $x \in \mathcal D$ and $t>0$ with $u$ the unique solution of the equation
\[
\begin{cases}
 u_t =  \nabla( R\nabla u), \\
 u(0,x) = f(x), \\
 u(t,x)|_{\partial \mathcal D} = T(x).
\end{cases}
\]
\end{theorem}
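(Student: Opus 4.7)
The plan is to identify the limiting law by moments. Since the Gamma distribution is determined by its moments, and since the uniform moment condition propagates in time to $\bm X^{(L)}(tL^2)$ by direct generator estimates, it suffices to show that for every finite $S\subset\mathbb Z^d$ and every tuple $(k_s)_{s\in S}$,
\[
 \mathbb E\!\prod_{s\in S}\xi^{k_s}_{\langle xL\rangle+s}(tL^2)\;\longrightarrow\;\prod_{s\in S}\frac{\Gamma(k_s+\omega_0/2)}{\Gamma(\omega_0/2)}\,u(t,x)^{k_s}.
\]
Choosing the initial dual configuration $\underline n^{(0)}$ with $n_{\langle xL\rangle+s}(0)=k_s$ and zero elsewhere, the definition \eqref{eq:F} and Proposition \ref{prop:dual} rewrite the left-hand side as $\prod_s\frac{\Gamma(k_s+\omega_0/2)}{\Gamma(\omega_0/2)}\cdot\mathbb E\,F(\bm Y^{(L)}(tL^2),\bm X^{(L)}(0))$, so the task reduces to showing $\mathbb E\,F(\bm Y^{(L)}(tL^2),\bm X^{(L)}(0))\to\prod_s u(t,x)^{k_s}$ as $L\to\infty$.

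With $\omega$ constant, a single dual particle performs a continuous-time random walk on $\mathcal D_L$ with edge rates proportional to $R((u+v)/(2L))$, absorbed upon entering $\mathcal B_L$. Standard homogenization in the $\mathcal C^2$ coefficient $R$ shows that under parabolic rescaling this walk converges to the diffusion $Z$ on $\mathcal D$ with generator proportional to $\nabla(R\nabla\cdot)$, started at $x$ and absorbed on $\partial\mathcal D$ at the first-hitting time $\tau$. For one dual particle, $F(\bm Y,\bm X(0))$ equals $\tfrac{2}{\omega_0}\xi_{\text{endpoint}}(0)$ when still inside (the prefactor is $\Gamma(\omega_0/2)/\Gamma(1+\omega_0/2)=2/\omega_0$) and $T(v/L)$ when absorbed at $v\in\mathcal B_L$; using \eqref{eq:init} with $k=1$ and $f|_{\partial\mathcal D}=T$, the expectation converges to $\mathbb E f(Z_{t\wedge\tau})$, which is exactly $u(t,x)$ by Feynman--Kac.

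The step from $N=1$ to general $N=\sum_s k_s$ is Proposition \ref{lemmaN,y}, which asserts that the joint law of the $N$ dual particles at time $tL^2$ decouples into $N$ i.i.d.\ copies of the single-particle law, in a sense strong enough to pass to the limit in $\mathbb E\,F$. Granting this, together with Lemma \ref{lemma:subsyst} and the asymptotic factorization \eqref{eq:init} of the initial energies at macroscopically separated sites, the expectation $\mathbb E\,F(\bm Y^{(L)}(tL^2),\bm X^{(L)}(0))$ factors into $N$ single-particle expectations, each tending to $u(t,x)$, which yields the required limit.

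The main obstacle is Proposition \ref{lemmaN,y}: one must show that any two dual particles, once macroscopically separated, almost never return within each other's interaction range, so that their subsequent evolutions are asymptotically independent. The hypothesis $d\geq 2$ is critical here, since the difference of two independent random walks in $\mathbb Z^d$ is transient for $d\geq 3$ and, for $d=2$, spends only logarithmically much time near any bounded set; either regime suffices to decouple the walkers on the diffusive timescale. The inhomogeneity of $R$ rules out a direct use of the KMP coupling of \cite{KMP82}, so Lemma \ref{lemma:annuli} (the input to Proposition \ref{lemmaN,y}, proved in the Appendix) should supply the quantitative annular separation estimates adapted to the variable-coefficient setting.
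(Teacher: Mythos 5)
Your proposal is correct and follows essentially the same route as the paper. The only cosmetic difference is that the paper obtains the single-particle diffusion limit from Theorem 11.2.3 of Stroock--Varadhan rather than from what you call homogenization; the rest of your outline (moment method, duality to $\bm Y$, identification of $u$ via the Kolmogorov backward equation, decoupling of the $N$ dual particles through Proposition \ref{lemmaN,y} with Lemma \ref{lemma:annuli} as the core technical input) matches the paper's structure.
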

 
{\bf Remark about the initial conditions}
 Note that $f$ represents the energy per degrees of freedom at time zero, that is
 why we need the multiplier $\omega_v/2$ on the right hand side of (\ref{eq:init}). 
 We do not have to assume
local equilibrium at zero, which would correspond to the special choice of $\bm X(0)$: the product
of Gamma distributions (with shape parameter $\omega_v/2$, and scale parameter $f(v/L)$). 
One interesting
consequence of Theorem \ref{thm1} (and similarly that of Theorem \ref{thm2}) is that
the system satisfies the local equilibrium for arbitrary positive macroscopic time even if 
it only satisfies the given weaker condition at time zero.
Since we want to leverage the duality via moments, we also need to assume some condition
on the higher moments. The simplest one is the uniform moment condition. Most probably, neither 
condition (\ref{eq:init}) nor the uniform moment condition is optimal,
but we do not pursue the most general case here. 
 
Our next choice is the simplest non-continuous environment: 
we consider $\mathcal D = [-1,1]^d$ with one of the functions 
$\omega$ and $r$ being constant on $\mathcal D$ and the other one is constant on 
 $[-1,0] \times [-1,1]^{d-1}$ and $[0,1] \times [-1,1]^{d-1}$. Thus we have the following

\begin{theorem}
\label{thm2}
 Let $d \geq 2$, $\mathcal D = [-1,1]^d$.
Assume that $\bm X^{(L)}(0)$ is associated with $f$,
a continuous extension of $T$ to $\mathcal {\bar D}$,
and satisfies the uniform moment condition
\begin{enumerate}
\item[(a)] Let $\omega_v = \omega_{-1}$ if $v_1 <0$ 
and $\omega_v = \omega_{1}$ if $v_1 \geq 0$ with some positive
integers $\omega_{-1}, \omega_1$ and let $r$ be constant.
 Then
$\bm X^{(L)}(t)$ approaches local equilibrium in the hydrodynamic
limit for all $x \in \mathcal D$ with $x_1 \neq 0$ and all $t>0$ with 
$u$ the unique solution of the equation
\[
\begin{cases}
u_t = r \Delta u \quad \text{ for $x \in \mathcal D$ with $x_1\neq 0$},\\
\omega_1 \frac{\partial}{\partial x_1+} u(t,(0,x_2, ..., x_d)) = 
\omega_{-1}  \frac{\partial}{\partial x_1 -} u(t,(0, x_2, ..., x_d))\\
u(0,x) = f(x), \\
u(t,x)|_{\partial \mathcal D} = T(x). 
\end{cases}
\]
\item[(b)]
Let $r_{(u,v)} = r_{-1}$ if $u_1+v_1 <0$ and 
$r_{(u,v)} = r_{1}$ otherwise, where $r_{-1}, r_1$ are fixed positive numbers. 
Similarly, $r_w = r_{-1}$ if $w_1 <0$ and $r_w = r_1$ otherwise. Let $\omega$ be constant.
 Then
$\bm X^{(L)}(t)$ approaches local equilibrium in the hydrodynamic
limit for all $x \in \mathcal D$ with $x_1 \neq 0$ and all $t>0$ 
with $u$ the unique solution of the equation
\[
\begin{cases}
u_t = r_{sign(x_1)} \Delta u \quad \text{ for $x \in \mathcal D$ with $x_1\neq 0$},\\
r_1 \frac{\partial}{\partial x_1+} u(t,(0,x_2, ..., x_d)) = 
r_{-1}  \frac{\partial}{\partial x_1 -} u(t,(0, x_2, ..., x_d))\\
u(0,x) = f(x), \\
u(t,x)|_{\partial \mathcal D} = T(x). 
\end{cases}
\]
\end{enumerate}

\end{theorem}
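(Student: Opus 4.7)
The approach is to mirror the proof of Theorem \ref{thm1}, replacing the smooth-coefficient analysis of the dual walk with one adapted to the interface $\{x_1=0\}$. By Proposition \ref{prop:dual}, local equilibrium at $(t,x)$ reduces to showing that for finitely many dual particles placed near $\langle xL\rangle$, the moment identity
\[
\mathbb E\bigl[F(\underline n,\bm X^{(L)}(tL^2))\bigr]=\mathbb E\bigl[F(\bm Y^{(L)}(tL^2),\underline\xi^{(L)}(0))\bigr]
\]
converges as $L\to\infty$ to the corresponding moment of a product of $\Gamma(\omega_v/2,u(t,x))$ distributions. The dual side splits into a sum over terminal configurations of $\bm Y$: absorbed particles contribute factors $T(v/L)$, while particles still in $\mathcal D_L$ contribute powers of initial energies whose expectations are supplied by \eqref{eq:init} once those particles are macroscopically separated. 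As in Theorem \ref{thm1}, the argument then reduces to (i) identifying the single-particle scaling limit of the dual walk, and (ii) establishing the asymptotic independence of $|S|$ particles (the analogue of Proposition \ref{lemmaN,y}).

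For (i), a short computation from $A_1$ with $n_u+n_v=1$ shows that a single dual particle jumps from $u$ to a neighbor $v$ at rate $r_{(u,v)}\,\omega_v/(\omega_u+\omega_v)$. This walk is reversible with invariant measure proportional to $\omega_u$ and symmetric conductances $c_{(u,v)}=r_{(u,v)}\,\omega_u\omega_v/(\omega_u+\omega_v)$. In part (a) the conductance is piecewise constant, equal to $r\omega_{\pm1}/2$ on bulk edges of each half-space and to $r\omega_{-1}\omega_1/(\omega_{-1}+\omega_1)$ on edges straddling $\{x_1=0\}$; in part (b) it equals $r_{(u,v)}\,\omega/2$, again piecewise constant. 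The network identity \eqref{eq:electric} then expresses the hitting distribution of the walk in terms of these conductances, and after parabolic rescaling this converges to the divergence-form elliptic problem in each theorem. The flux-matching jump conditions $\omega_1\partial_{x_1^+}u=\omega_{-1}\partial_{x_1^-}u$ and $r_1\partial_{x_1^+}u=r_{-1}\partial_{x_1^-}u$ are precisely what the interface conductances encode in the continuum limit.

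For (ii), Proposition \ref{lemmaN,y} should adapt with only superficial modifications. Lemma \ref{lemma:subsyst} lets one treat any subset of particles as an autonomous dual process, and the annulus estimates underlying Lemma \ref{lemma:annuli} go through because the conductances above are uniformly bounded above and bounded away from zero. Combining (i) and (ii) produces a limiting joint law of $|S|$ independent copies of the single-particle diffusion; feeding this back into the dual moment formula, and using the continuity of $u$ together with \eqref{eq:init}, yields convergence to the product moments of the relevant Gamma distributions. Since Gamma distributions are moment-determined, this is equivalent to the required weak (hence L\'evy--Prokhorov) convergence.

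The main obstacle is the interface analysis in (i): with a discontinuous coefficient the limit process is a skew/stratified diffusion rather than a smooth Brownian motion, and the jump condition must be derived from the microscopic conductances by hand. I would proceed by explicitly solving the discrete harmonic problem in a slab of thickness $O(1)$ around $\{x_1=0\}$ via \eqref{eq:electric}, then patching with standard bulk heat-equation estimates on either side; equivalently, one may verify the martingale problem for the target skew diffusion against test functions that are $\mathcal C^2$ off the interface and satisfy the flux condition on it. A secondary technical point, needed in order to exploit the uniform moment condition, is to show that configurations in which many dual particles remain clustered at a single site contribute negligibly; the reversibility of the walk together with Lemma \ref{lemma:subsyst} should yield the required exponential control on encounter times.
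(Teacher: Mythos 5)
Your overall strategy---duality, single-particle scaling limit, and asymptotic independence of the dual particles---matches the paper's, but you take a genuinely different route for the single-particle limit and there is a concrete gap in your treatment of the independence step.

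For the single-particle limit (i), the paper does not use the conductance/reversibility framework you propose. It instead splits off the first coordinate of the discrete-time dual chain ${\bm Z}'$: for part (a) this converges, by \cite{ChShY04}, to a skew Brownian motion with skewness $\omega_1/(\omega_{-1}+\omega_1)$, and the remaining $d-1$ coordinates converge to Brownian motion by Donsker's theorem; for part (b) the same splitting is combined with a nonlinear random time change to handle the fact that the rate depends on the sign of the first coordinate. Your computation of the jump rates and of the symmetric conductances $c_{(u,v)}=r_{(u,v)}\omega_u\omega_v/(\omega_u+\omega_v)$ is correct, and your proposal to verify the martingale problem against test functions satisfying the flux condition is a legitimate alternative. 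However, your appeal to the electrical-network identity \eqref{eq:electric} is misplaced here: that identity gives hitting probabilities, which is what one needs for the steady state (Theorem \ref{thm3}), not for the time-$t$ law of the stopped walk needed in the hydrodynamic limit. The martingale-problem route you mention in passing is the correct one; the network identity is a red herring.

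For the independence step (ii), your claim that the annulus estimates ``go through because the conductances are uniformly bounded'' is where the argument breaks. Those estimates rest on Lemma \ref{lemma:coupleRW}, which couples the difference process $Z$ of two dual particles to a SSRW with one-step error $O(\|Z\|/L^2)$; that bound uses $R\in\mathcal C^2$ and fails at the interface in Theorem \ref{thm2}, where the one-step bias of $Z$ is $O(1)$ rather than $O(\|Z\|/L^2)$. The paper avoids this entirely by exploiting the hypothesis $x_1\neq 0$: choosing $\varepsilon<|x_1|$, during the separation phase (time $\ll L^2$, displacement $\ll L$) both dual particles remain in a fixed half-space where the coefficients are constant, and the proof of Proposition \ref{lemmaN,y} from Theorem \ref{thm1} applies verbatim with $R\equiv 1$. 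This localization away from the interface is the missing ingredient in your step (ii); with it inserted, the rest of your outline is sound.
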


\subsection{Nonequilibrium steady state}

Now we are interested in 
the invariant measure of the Markov chains for finite (but large) $L$. 
Specifically, we are looking for a function $u(x), x \in \mathcal D$ such that in the limit
$\lim_{L \rightarrow \infty} \lim_{t \rightarrow \infty}$
the local temperature exists and is given by $u(x)$. In case the hydrodynamic limit is known, 
$u(x)$ is expected to be equal to $\lim_{t \rightarrow \infty} u(t,x)$. We will choose arbitrary 
environment ($r$ and $\omega$), that is why we define the local equilibrium in a little more 
general form, namely with an $L$ dependent $u$. Of course in all natural examples,
one expects $u^{(L)}$ to converge.

\begin{definition}
We say that $\bm X^{(L)}(t)$ approaches local equilibrium in the nonequilibrium steady state
if for any $x \in \mathcal D$ and any finite set 
$S \subset \mathbb Z^d$
the L\'evy-Prokhorov distance of the invariant measure of $\bm X^{(L)}(t)$ 
restricted to the components $(\xi_{\langle xL \rangle+s})_{s \in S}$
and 
 $$\prod_{s \in S} \Gamma \left( \omega_{{\langle xL \rangle+s} }/2, u^{(L)}(x) \right)$$ 
converges to zero as $L \rightarrow \infty$.
\end{definition}

Let us fix $d=1$ and $\mathcal D = (0,1)$. To simplify notation, we will write $r_{m+1/2} := r_{(m,m+1)}$,
$\omega_0 := \omega_1$ and $\omega_L := \omega_{L-1}$.
Furthermore, we will need the following two definitions:
$$
\psi(m) = \frac{\omega_{m-1} + \omega_m}{r_{m-1/2} \omega_{m-1} \omega_m} \quad
\text{for } 1 \leq m \leq L 
$$
and
$$
\mathcal A^{(L)}(x) = \frac{\sum_{m=1}^{\lfloor xL \rfloor} 
\psi(m)}{\sum_{m=1}^{L} 
\psi(m)}.
$$

\begin{theorem}
\label{thm3}
Let $d=1$ and $\mathcal D = (0,1)$. Assume that the functions $r$ and $\omega$ are  
bounded away from zero and infinity uniformly in $L$ and the temperature on the boundary is given by 
$T(0), T(1) \in \mathbb R_+ \cup \{ 0\}$. Then 
$\bm X^{(L)}(t)$ approaches local equilibrium in the nonequilibrium steady state with 
$$ u^{(L)}(x) = 
(1 - \mathcal A^{(L)} (x)) T(0) + \mathcal A^{(L)}(x) T(1).
$$
\end{theorem}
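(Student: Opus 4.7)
The plan is to use the duality identity (Proposition~\ref{prop:dual}) to express the moments of the invariant measure as absorption statistics of the dual process, and then to establish asymptotic independence of the dual absorption sites via a near-martingale argument.

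\textbf{Step 1: Duality reduction.} Combining Propositions~\ref{prop:invmeasexists} and~\ref{prop:dual} and sending $t\to\infty$ yields, for every initial profile $\underline n=(n_v)_{v\in\mathcal D_L}$,
\[
\mathbb E_{\mu^{(L)}}\!\left[\prod_{v\in\mathcal D_L}\frac{\xi_v^{n_v}\,\Gamma(\omega_v/2)}{\Gamma(n_v+\omega_v/2)}\right]
=\mathbb E_{\underline n}\!\left[T(0)^{K_0}\,T(1)^{K_1}\right],
\]
where $K_0=\hat n_0(\infty)$, $K_1=\hat n_L(\infty)$ are the (almost surely finite) totals of particles eventually dropped to the two boundary sites. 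Almost sure absorption follows from the fact that each dual particle is an irreducible birth-death walk on $\{1,\dots,L-1\}$ with positive absorption rates at both endpoints.

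\textbf{Step 2: Single-particle hitting probability.} For a single dual particle, the walk on $\{1,\dots,L-1\}$ is reversible with invariant weight $\omega_m$ at site $m$ and interior edge conductance $c_{m,m+1}=r_{m+1/2}\omega_m\omega_{m+1}/(\omega_m+\omega_{m+1})$, so the interior resistances match $R_{m+1/2}=\psi(m+1)$. Formula~\eqref{eq:electric} (with $O(1/L)$ corrections from the two boundary edges) then identifies the probability of being absorbed at $L$ starting from $m$ as $f(m):=\mathcal A^{(L)}(m/L)+O(1/L)$, uniformly in $m$. By construction $f$ is harmonic for the single-particle generator, and $|f(v)-f(u)|=O(1/L)$ whenever $u,v$ are adjacent sites, since $\psi$ is bounded and summed over $\Theta(L)$ terms.

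\textbf{Step 3: Asymptotic factorization.} The key step is to show that for any fixed $N$ and any initial configuration $\underline n$ supported within $O(1)$ of $\langle xL\rangle$,
\[
\mathbb E_{\underline n}\!\left[\prod_{i=1}^{N}f(x_i(\infty))\right]=\prod_{i=1}^{N}f(v_i)+O(1/L),
\]
where $x_i(t)$ denotes the $i$th particle's position (set to $0$ or $L$ after absorption) and $v_i$ its starting site. I would apply the full dual generator $A$ to $\Phi(t):=\prod_i f(x_i(t))$. On configurations where all particles are pairwise at distance $\ge 2$ the dynamics decouples and $A\Phi=0$ by harmonicity of $f$; on configurations with some interacting pair $(u,v)$, a short computation shows that the linear-in-$(f(v)-f(u))$ contribution from the beta-binomial redistribution cancels by harmonicity, leaving a quadratic defect of order $(f(v)-f(u))^2=O(1/L^2)$. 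Combining this with the diffusive estimate that the expected local time any given pair spends at distance $\le 1$ before absorption is $O(L)$ (the difference walk lives on scale $L$ and absorption occurs on the diffusive scale $\Theta(L^2)$), the accumulated defect along the trajectory is $O(L)\cdot O(1/L^2)=O(1/L)$. Running the same argument with any subset of the $N$ particles in place of the full set determines every joint subset-probability of the indicators $\mathbf 1_{x_i(\infty)=L}$ in the limit, so these indicators become asymptotically independent Bernoulli$(\mathcal A^{(L)}(x))$.

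\textbf{Step 4: From moments to distribution.} Expanding $T(0)^{K_0}T(1)^{K_1}$ as $\prod_i(T(0)\mathbf 1_{x_i=0}+T(1)\mathbf 1_{x_i=L})$ and using Step~3 gives $\mathbb E_{\underline n}[T(0)^{K_0}T(1)^{K_1}]\to[u^{(L)}(x)]^{N}$; substituting into Step~1 shows that every joint moment of $(\xi_{\langle xL\rangle+s})_{s\in S}$ under $\mu^{(L)}$ converges to the corresponding moment of $\prod_{s\in S}\Gamma(\omega_{\langle xL\rangle+s}/2,u^{(L)}(x))$. Since Gamma laws are determined by their moments and the uniform bounds on $r$ and $\omega$ give uniform control of all moments, this upgrades to convergence in L\'evy--Prokhorov distance, as required. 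The main difficulty is Step~3: both the harmonicity-induced cancellation of the leading-order defect and the sharp $\Theta(L)$ bound on pair-close time are essential; without either, the accumulated error fails to vanish as $L\to\infty$.
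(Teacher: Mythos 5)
Your overall strategy matches the paper's: pass to the dual process, express the steady-state moments of $\mu^{(L)}$ as absorption statistics of the dual particles, prove asymptotic factorization of the absorption indicators, and conclude via moment determinacy. Steps~1, 2 and~4 are correct, and Step~2's identification of the conductances $c_{m,m+1}=r_{m+1/2}\omega_m\omega_{m+1}/(\omega_m+\omega_{m+1})$, hence $R_{m+1/2}=\psi(m+1)$, is exactly equivalent to the paper's observation in Lemma~\ref{lemma:martingale} that $\underline\Phi(\bm Y'_1(k))$ is a bounded martingale. The second-order cancellation in Step~3 is also correct, and you rightly note that the vanishing of the linear term $\mathbb E[\Delta_j]=0$ rests on the marginal-consistency property, which is Lemma~\ref{lemma:subsyst}. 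Your target quantity $\prod_i f(x_i)$ differs from the paper's $S_k=\underline\Phi_1\underline\Phi_2+\overline\Phi_1\overline\Phi_2$ only by a martingale plus a constant, so the sub/supermartingale defect is literally the same object.

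The genuine gap is the local-time estimate in Step~3. You assert that the expected time any pair spends at distance $\le 1$ before absorption is $O(L)$, with the justification ``the difference walk lives on scale $L$ and absorption occurs on the diffusive scale $\Theta(L^2)$.'' This is a heuristic, not a proof: the difference process is not a free random walk (its law changes on interacting configurations), the environment $(r,\omega)$ is an \emph{arbitrary} inhomogeneous one, bounded above and below but otherwise unconstrained, and both the $\Theta(L^2)$ absorption-time scale and the $O(L)$ local-time scale would need to be established uniformly over all such environments. The paper fills precisely this hole with the auxiliary process $T_k=\sum_{i=\min\{\bm Y'_1(k),\bm Y'_2(k)\}+1}^{\max\{\bm Y'_1(k),\bm Y'_2(k)\}}\psi(i)$ of Lemma~\ref{lemma:subm}: it is a martingale when the particles are separated, has a positive drift bounded away from $0$ on close configurations, and is bounded by $\underline\Phi(L)=O(L)$. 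Optional stopping applied to $T_k$ alone yields your desired $O(L)$ local-time bound, and the paper uses it more economically by forming the bounded sub/supermartingales $S_k\pm CT_k$ and stopping those. Once you either construct such a $T_k$ or supply an independent proof of the local-time bound that is uniform in the inhomogeneity, your argument is complete and coincides with the paper's.
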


Clearly, $u^{(L)}(x)$ can easily diverge in this generality. That is why we consider two special cases.
In case (a), $r$ is constant and $\omega$  is random. In
this case, we prove the quenched local equilibrium in the nonequilibrium steady state,
i.e. the almost sure convergence of $ u^{(L)}(x)$ to a deterministic limit. 
In case (b), $\omega$ is constant but $r$ is prescribed by a non-constant macroscopic function. 
We also prove that $ u^{(L)}(x)$ converges in this case.

\begin{proposition}
\label{prop:randomomega}
\begin{enumerate}
\item[(a)]
Let $r$ be constant, $K$ be the maximal degrees of freedom and let us
fix some continuous function 
$$ \varkappa : [0,1] \rightarrow \{p \in \mathbb R^K: p \geq 0, \sum_{i=1}^K p_i = 1 \}.$$
For each $L$, let us choose $\omega_{v}^{(L)}$ randomly and independently from
one another with 
$$\mathbb P(\omega_{v}^{(L)} = i) = \varkappa_i(v/L).$$
Then for almost every realization of the random functions $\omega^{(L)}$,
$$\lim_{L \rightarrow \infty} \mathcal A^{(L)}(x) = 
\frac{\sum_{i=1}^K \frac{1}{i} \int_0^x \varkappa_i(y)dy}{\sum_{i=1}^K \frac{1}{i} \int_0^1 \varkappa_i(y)dy}.$$
\item[(b)]
Let $\omega$ be constant, $\varrho: [0,1] \rightarrow \mathbb R_+$ is a continuous
function. 
For each $L$ and $v = 0,1,...L-1$ define
$r^{(L)}_{v+1/2} = \varrho(v/L)$.
Then 
$$\lim_{L \rightarrow \infty} \mathcal A^{(L)}(x) = 
\frac{\int_0^x 1/ \varrho(y)dy}{ \int_0^1 1/ \varrho(y)dy}.$$
\end{enumerate}

\end{proposition}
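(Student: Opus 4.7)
The plan starts by rewriting $\psi(m)$ in each regime so that both sums in $\mathcal A^{(L)}(x)$ become (approximate) Riemann sums of continuous functions. In case (a), since $r$ is constant,
\[
\psi(m) = \frac{1}{r}\left(\frac{1}{\omega_{m-1}^{(L)}} + \frac{1}{\omega_m^{(L)}}\right),
\]
so up to the boundary terms $\frac{1}{\omega_0^{(L)}} + \frac{1}{\omega_{N}^{(L)}}$,
\[
\sum_{m=1}^{N}\psi(m) = \frac{2}{r}\sum_{m=1}^{N-1}\frac{1}{\omega_m^{(L)}} + O(1),
\]
and in case (b) the sum is exactly $\frac{2}{\omega}\sum_{m=1}^{N}\frac{1}{\varrho((m-1)/L)}$. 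In both cases the multiplicative constants cancel in the ratio defining $\mathcal A^{(L)}(x)$.

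For part (b), the result is then immediate: since $\varrho$ is continuous and bounded away from $0$ on $[0,1]$, the standard Riemann sum estimate yields
\[
\frac{1}{L}\sum_{m=1}^{\lfloor xL\rfloor}\frac{1}{\varrho((m-1)/L)} \longrightarrow \int_0^x \frac{1}{\varrho(y)}\,dy,
\]
and dividing by the same expression at $x=1$ gives the claimed limit.

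For part (a), the main input is a law of large numbers for the triangular array $\{1/\omega_m^{(L)}\}_{0\le m\le L-1}$. The random variables $1/\omega_m^{(L)}$ take values in $[1/K,1]$ and are independent with $\mathbb E[1/\omega_m^{(L)}] = \sum_{i=1}^K \frac{1}{i}\varkappa_i(m/L)$. By Hoeffding's inequality,
\[
\mathbb P\!\left(\Bigl|\frac{1}{L}\sum_{m=1}^{\lfloor xL\rfloor}\bigl(1/\omega_m^{(L)} - \mathbb E[1/\omega_m^{(L)}]\bigr)\Bigr| > \varepsilon\right) \le 2\exp(-cL\varepsilon^2),
\]
which is summable in $L$, so Borel--Cantelli gives almost sure convergence (at each fixed $x$) of $\frac{1}{L}\sum_{m=1}^{\lfloor xL\rfloor} 1/\omega_m^{(L)}$ to its mean. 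Continuity of $\varkappa$ and the standard Riemann sum estimate then identify the mean's limit as $\int_0^x \sum_{i=1}^K \frac{1}{i}\varkappa_i(y)\,dy$.

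The only mildly delicate point is upgrading the a.s.\ convergence from a single $x$ to all $x\in[0,1]$ simultaneously, since the null set \emph{a priori} depends on $x$. I handle this by noting that $\mathcal A^{(L)}(\cdot)$ is nondecreasing on $[0,1]$ and the candidate limit is continuous (because the integrand is bounded and $\int_0^1\sum_i\varkappa_i(y)/i\,dy$ is a strictly positive constant). Take the event of convergence at every rational $x\in[0,1]$; this is a countable intersection of full-measure events, hence itself has full measure. A standard monotone function argument (the same that yields Glivenko--Cantelli from pointwise SLLN) then promotes rational convergence to uniform convergence on $[0,1]$, which finishes part (a). The hardest step is really just this triangular-array SLLN plus the monotonicity upgrade; everything else is Riemann approximation.
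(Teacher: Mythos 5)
Your proof is correct and fills in, in detail, the argument the paper only sketches in one sentence (``Proposition~\ref{prop:randomomega}(b) is elementary and (a) follows from the connection between random walks and electrical networks and the law of large numbers''). The rewriting of $\psi(m)$ as $\frac{1}{r}\bigl(\frac{1}{\omega_{m-1}}+\frac{1}{\omega_m}\bigr)$ in case (a) and as $\frac{2}{\omega\,\varrho((m-1)/L)}$ in case (b), the Riemann-sum step for (b), and the Hoeffding--Borel--Cantelli triangular-array SLLN plus Riemann approximation for (a) are exactly the intended ingredients; your extra observation that $\mathcal A^{(L)}$ is monotone, so pointwise a.s.\ convergence on the rationals upgrades to convergence for all $x$, is a correct and standard way to handle the $x$-dependent null set, which the paper leaves implicit.
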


Before turning to the proofs of the above results, we briefly comment on some possibilities of extension.


\subsection{Possible extensions}

As we will see, the proof of Theorems \ref{thm1} and \ref{thm2} also provides the local equilibrium
in the nonequilibrium steady state.

\begin{corollary}
Consider the setup of either Theorem \ref{thm1} or \ref{thm2}.  Then $\bm X^{(L)} (t)$ approaches local 
equilibrium (assuming $x_1 \neq 0$ in case of Theorem \ref{thm2})
in the nonequilibrium steady state with $U(x) = \lim_{t \rightarrow \infty} u(t,x)$.
\end{corollary}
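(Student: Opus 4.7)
The plan is to take $t \to \infty$ in the duality identity of Proposition \ref{prop:dual}. By Proposition \ref{prop:invmeasexists}, the left-hand side $\mathbb{E}(F(\underline{n}, \bm X^{(L)}(t)) \mid \bm X_0 = \underline{\xi})$ converges to $\int F(\underline{n}, \cdot)\, d\mu^{(L)}$ as $t \to \infty$. On the dual side, every particle is absorbed in $\mathcal{B}_L$ almost surely because the walk on the finite connected graph $\mathcal{D}_L$ reaches the absorbing storage in finite expected time; once all particles are absorbed, $F(\bm Y(\infty), \underline{\xi}) = \prod_{v \in \mathcal{B}_L} T(v/L)^{\hat{n}_v(\infty)}$, independent of $\underline{\xi}$. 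Passing to the limit yields
\begin{equation*}
\int F(\underline{n}, \underline{\xi})\, d\mu^{(L)}(\underline{\xi}) = \mathbb{E}_{\underline{n}}\!\left[\prod_{v \in \mathcal{B}_L} T(v/L)^{\hat{n}_v(\infty)}\right].
\end{equation*}

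To establish local equilibrium at $x \in \mathcal{D}$ (with $x_1 \neq 0$ in the setup of Theorem \ref{thm2}), fix a finite set $S \subset \mathbb{Z}^d$ and integers $(n_s)_{s \in S}$, and specialize the displayed identity to $\underline{n}$ concentrated on $\langle xL \rangle + S$ with masses $n_s$. The resulting moments of $\mu^{(L)}$ are expected products of $T(v_j/L)$ over the absorption locations $v_1, \dots, v_K$ of $K := \sum_s n_s$ dual particles started at macroscopically the same point. It suffices to show that this $K$-fold product factorizes asymptotically into single-particle contributions, each converging to $U(x)$; this matches the desired moment of the Gamma product with scale $U(x)$.

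For a single particle, Lemma \ref{lemma:subsyst} identifies its trajectory as a continuous-time random walk on $\mathcal{D}_L$ with conductances $r_{(u,v)}$, absorbed in $\mathcal{B}_L$. Under the scaling of Theorem \ref{thm1} or \ref{thm2}, this walk (diffusively rescaled) converges to the diffusion generated by the elliptic operator appearing in the corresponding PDE, and its absorption law on $\mathcal{B}_L/L$ converges weakly to the harmonic measure $\nu_x$ for that operator on $\partial \mathcal{D}$. Since $U$ is the stationary solution of the PDE with boundary data $T$, the Feynman--Kac representation gives $\int T\, d\nu_x = U(x)$, so $\mathbb{E}[T(v_{\mathrm{abs}}/L)] \to U(x)$ by continuity of $T$.

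The main obstacle is the $K$-fold factorization, which is propagation of chaos extended up to absorption. The proofs of Theorems \ref{thm1} and \ref{thm2} already establish it up to any fixed macroscopic time: within a short interval $[0, \tau L^2]$ the pairwise microscopic distances exceed a mesoscopic barrier $L^{1/2 - \epsilon}$ with high probability, after which interactions are negligible. I would extend this to the absorption time by conditioning on macroscopic separation at time $\tau L^2$ and then bounding the probability that any pair recollides before both are absorbed via a Green's-function estimate; this probability is $O((\log L)^{-1})$ for $d = 2$ and $O(L^{2-d})$ for $d \geq 3$, both $o(1)$. Combined with the single-particle convergence, this shows that the right-hand side of the displayed identity tends to $U(x)^K$, yielding the local equilibrium in the nonequilibrium steady state.
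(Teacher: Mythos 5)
Your overall strategy is right, but it takes a structurally different and more laborious route than the paper's one-paragraph argument, and one of your intermediate claims is quantitatively off in a way that matters for $d=2$.

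The paper's proof simply reuses Proposition \ref{lemmaN,y}. Since the limiting stopped diffusion $\mathcal Y$ hits $\partial \mathcal D$ in finite time almost surely, one can pick $t$ large enough so that $\mathbb P(\tau_{\mathcal D} > t) < \delta/2$. The weak convergence in Proposition \ref{lemmaN,y} (together with the fact that the discrete process is frozen once it enters $\mathcal B_L$) then yields that for $L$ large, with probability $\geq 1-\delta$ every $\tilde{\bm Y}^{(L)}_i(tL^2) = \tilde{\bm Y}^{(L)}_i(\infty)$, and the joint law of the $N$ absorption points converges to the product of the single-particle absorption laws. No fresh recollision estimate is needed: the asymptotic independence at infinite time is inherited from the already-proven Proposition \ref{lemmaN,y}, not re-derived. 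Combining with the duality formula (as in the proof of Theorem \ref{thm3}) finishes the argument. Your proposal instead re-does propagation of chaos directly at the absorption time via a Green's-function bound; this can be made to work, but it duplicates the hard part of Section \ref{sec6}.

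Concretely, your intermediate claim that after separation to $L^{1/2-\varepsilon}$ "interactions are negligible" is false in $d=2$: a planar difference walk started at distance $L^{1/2-\varepsilon}$ from the origin hits $B(0,2)$ before reaching macroscopic distance with probability approximately $\frac{\log(L/L^{1/2-\varepsilon})}{\log L} = \frac12+\varepsilon$, so a recollision is about as likely as not from that separation. The paper's Lemma \ref{lemma:alpha} needs the separation threshold to be $L^{\gamma}$ with $\gamma$ close to $1$ precisely because the electrical-network estimate in the annulus only becomes small as $\gamma \to 1$. Your $O((\log L)^{-1})$ bound is correct only once the separation is genuinely $\Theta(L)$; bridging from $L^{1/2-\varepsilon}$ to $\Theta(L)$ without a recollision is exactly what the delicate Lemmas \ref{lemma:annuli} and \ref{lemma:alpha} accomplish, and cannot be waved away. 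If you tighten the barrier to $L^\gamma$, $\gamma$ close to $1$, and appeal to those lemmas, your argument goes through; but at that point you are essentially re-deriving Proposition \ref{lemmaN,y} rather than applying it.
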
 

Indeed, 
for $\delta >0$ fixed one can find some $t$ large such that with probability at least 
$1-\delta$, all processes in Proposition \ref{lemmaN,y} 
have arrived at the 
boundary before $tL^2$. In such cases, $\tilde{\bm Y}_i^{(L)}({tL^2}) = \tilde{\bm Y}_i^{(L)}({\infty})$, 
and the latter can be used to 
 prove local equilibrium in the nonequilibrium steady state (cf. the proof of Theorem \ref{thm3}).
 
Furthermore, it seems likely that 
our proof could be adapted to a version of
Theorem \ref{thm2} with more general domains and with piecewise constant
$r$ and $\omega$ (see \cite{P90} for the extension of skew Brownian motion
to such scenarios). 

However, the case of more general inhomogeneity in either high dimensions or in the hydrodynamic limit 
can be difficult. For example, if $\varkappa$ is constant in Proposition \ref{prop:randomomega}(a), then 
in order to verify the hydrodynamic limit, one would have to compute the scaling limit of some interacting 
random walkers among iid conductances, whereas even the case of one random walker is non obvious 
(see \cite{SSz04}). Clearly, the case of high dimensions or non iid environments are even much harder.


\section{Proof of Theorems \ref{thm1} and \ref{thm2}}
\label{sec:5}

Since the proofs of Theorems \ref{thm1} and \ref{thm2} are very similar, we provide one
proof and distinguish between cases Theorems \ref{thm1}, \ref{thm2}(a) and \ref{thm2}(b)
if necessary.
Let us fix some $t>0$, a point $x \in \mathcal D$ and a finite set $S \subset \mathbb Z^d$. 
We need to show that the (joint) distribution of $(\xi_{\langle xL \rangle +s} (tL^2))_{s \in S}$
converges to the product of gamma distributions with the scale parameter $u(t,x)$.
As it is well known, any product of Gamma distributions is characterized by its moments (see \cite{Z83}). 
Thus it is enough
to prove that the moments of $(\xi_{\langle xL \rangle +s})_{s \in S}$ converge to the product of
the moments of gamma distributions 
(convergence of second moments implies tightness). When computing the 
moments of order $n^*_s \in \mathbb N, s\in S$, we can use Proposition \ref{prop:dual} to switch to the dual 
process. 

We need to introduce some auxiliary processes. 
Let us denote by $\tilde{\bm Y}$ the slight variant of $\bm Y$ where the position
of distinguishable particles are recorded. More precisely, the 
phase space of $\tilde{\bm Y} = (\tilde{\bm Y}^{(L)}_{i}(t))_{0 \leq t, 1 \leq i \leq N}$ 
is $(\mathcal D_L \cup \mathcal B_L)^N$,
where $N=\sum_{s\in S} n^*_s$
and the initial condition is given such
that for all $s \in S$,
$$ \# \{ i: \tilde{\bm Y}_{i}(0) =  \langle xL \rangle + s \} = n^*_s.$$
For any $t\geq 0$ we define $\tilde{\bm Y} (t)$ so that
\begin{equation}
\label{deftildeY}
 \# \{ i: \tilde{\bm Y}_{i}(t) =  v \in \mathcal D_L\} = n_v(t_k),  \quad
\# \{ i: \tilde{\bm Y}_{i}(t) =  v \in \mathcal B_L \} = \hat n_v(t_k)
\end{equation}
and for all $t$,
$$  \# \{ i: \tilde{\bm Y}_{i}(t-) \neq  \tilde{\bm Y}_{i}(t+) \} \leq 1.$$

We want to show that the diffusively rescaled version of $\tilde{\bm Y}$ converges weakly to 
$N$ independent copies of some (generalized) diffusion processes $\mathcal Y$. Then the Kolmogorov 
backward equation associated with these processes will provide the function $u$. 
Clearly, the process $\mathcal Y$ will have to be stopped on $\partial \mathcal D$. 
So as to shed light on the main component of the proof, namely the convergence to the diffusion
process, we introduce a further simplification by not stopping the particles.

We can assume that $R$ is bounded in case of Theorem \ref{thm1}
(possibly by multiplying with a smooth function which is constant on $\mathcal D$ and decays quickly) 
and extend the definition of $\omega_v$, $r_{(u,v)}$
in case of Theorem \ref{thm2} for any $u,v \in \mathbb Z^d$. Then we consider the process
$\bm Z (t) = (n_v(t))_{v \in \mathbb Z^d}$ on the space 
\[z_v \in \mathbb Z_+, \quad \sum_{v \in \mathbb Z^d} z_v = N
\]
with the generator $A'_1$, obtained from $A_1$ by replacing 
$\sum_{(u,v) \in \mathcal E(\mathcal D_L)}$
with $\sum_{(u,v) \in \mathcal E(\mathbb Z^d)}$. Then we define $\tilde{\bm Z}$ from $\bm Z$
the same way as we defined $\tilde{\bm Y}$ from $\bm Y$. Observe that by construction
(and with the natural coupling) for $N=1$ we have
\begin{equation}
\label{eq:YZ1}
 \tilde{\bm Y}^{(L)} (s) = \tilde{\bm Z} (s \wedge \tau_{\mathcal D_L}) \quad \text{where} \quad 
\tau_{\mathcal D_L} = \min \{ s: \tilde{\bm Z} (s) \notin \mathcal D_L\}
\end{equation}
Now we define the limiting 
process $(\mathcal Z (s))_{0 \leq s \leq t}$ with $\mathcal Z (0) = x$
and with the generator
\[ \mathcal L = 
\begin{cases}
 \sum_{i=1}^d R(x)
\frac{\partial^2 }{\partial x_i^2}
+
 \sum_{i=1}^d \frac{\partial R}{\partial x_i }
\frac{\partial }{\partial x_i } & \text{in Theorem \ref{thm1}}\\
\sum_{i=1}^d r
\frac{\partial^2 }{\partial x_i^2} & \text{in Theorem \ref{thm2}(a)}\\
\sum_{i=1}^d r_{sign(x_1)}
\frac{\partial^2 }{\partial x_i^2} & \text{in Theorem \ref{thm2}(b)}.
\end{cases}
\]
acting on 
\begin{enumerate}
 \item functions $\phi \in \mathcal C^2_0$ in case of Theorem \ref{thm1}
 \item compactly supported continuous functions $\phi$ which
admit $\mathcal C^2$ extensions on $ (\mathbb R_- \cup 0) \times \mathbb R^{d-1}$
and $ (\mathbb R_+ \cup 0) \times \mathbb R^{d-1}$ and 
\[ 
\begin{cases}
\omega_{-1} \frac{\partial}{\partial x_1 -} \phi(0,x_2, ..., x_d)
= \omega_1 \frac{\partial}{\partial x_1 +} \phi(0,x_2, ..., x_d) & \text{
in case of Theorem \ref{thm2}(a)}\\
r_{-1} \frac{\partial}{\partial x_1 -} \phi(0,x_2, ..., x_d)
= r_1 \frac{\partial}{\partial x_1 +} \phi(0,x_2, ..., x_d) &
\text{ in case of Theorem \ref{thm2}(b)}
\end{cases}
\]
\end{enumerate}
Note that $\mathcal Z $ is a diffusion process in case of Theorem \ref{thm1} and a {\it generalized
diffusion process} in case of Theorem \ref{thm2} (see \cite{P90} for a survey on generalized diffusion 
processes and \cite{A78} for an early proof of the existence
of $\mathcal Z$ in case of Theorem \ref{thm2}(b)). Now let us stop $\mathcal Z$ on $\partial \mathcal D$ and define
\begin{equation}
\label{eq:YZ2}
\mathcal Y (s) = \mathcal Z (s \wedge \tau_{\mathcal D}) \quad \text{where} \quad 
\tau_{\mathcal D} = \min \{ s: \mathcal Z(s) \in \partial \mathcal D\}
\end{equation}

The connection between the processes $\tilde{\bm Y}$ and the PDE's defining $u$ is most easily
seen in the simplest case of one particle. 

Let $\Rightarrow$ denote weak convergence in the
Skorokhod space $\mathcal D[0,T]$ 
with respect to the supremum metric and with some $T$ to be specified.
(Although we need the Skorokhod space as the trajectories of ${\tilde{\bm Z} }$ are not continuous,
but the limiting measures will always be supported on $\mathcal C[0,T]$ and we can use the supremum metric. 
Alternatively, one could smooth the trajectories of ${\tilde{\bm Z}} $ and only use 
the space $\mathcal C[0,T]$.)

\begin{lemma}
\label{lemmaN=1,z}
If $N=1$, then 
$$ \left( \frac{\tilde{\bm Z} ({sL^2})}{L} \right)_{0 \leq s \leq t} \Rightarrow
 \mathcal (\mathcal Z (s))_{0 \leq s \leq t}.$$
\end{lemma}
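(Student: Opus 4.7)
The plan is to realize $\tilde{\bm Z}$ for $N=1$ as an explicit continuous-time random walk on $\mathbb Z^d$, show that its diffusively rescaled generator converges to $\mathcal L$ on a suitable core, and then conclude via tightness plus a martingale-problem argument.

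First I would determine the jump rates. For $N=1$ the beta-binomial reduction with $n_u=1$, $n_v=0$ says that upon a clock ring at edge $(u,v)$ the lone particle at $u$ moves to $v$ with probability $\omega_v/(\omega_u+\omega_v)$ and otherwise stays put. Hence $\tilde{\bm Z}$ is the pure-jump chain on $\mathbb Z^d$ with rate $q_{(u,v)} = r_{(u,v)}\,\omega_v/(\omega_u+\omega_v)$ from $u$ to any nearest neighbor $v$, and after diffusive rescaling its generator acts on test functions $\phi$ as
\[
(\mathcal L^{(L)}\phi)(u/L) = L^2 \sum_{v:\,\|v-u\|=1} q_{(u,v)} \bigl[\phi(v/L) - \phi(u/L)\bigr].
\]

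Next I would establish convergence $\mathcal L^{(L)}\phi \to \mathcal L\phi$ on a core. In Theorem \ref{thm1}, where $r_{(u,v)}=R((u+v)/(2L))$ and $\omega$ is constant, I would plug in $\phi\in\mathcal C_0^2(\mathbb R^d)$ and Taylor-expand both $R$ and $\phi$ about $u/L$; the odd-order terms in $\phi$ cancel upon summing over the $2d$ nearest neighbors, while the cross term between the first order of $R$ and the first order of $\phi$ produces exactly the drift $\nabla R\cdot\nabla\phi$ alongside the diffusion contribution involving $R\,\Delta\phi$. Uniform convergence follows from the assumed smoothness and boundedness of $R$. In Theorem \ref{thm2} the coefficients are piecewise constant, so this Taylor expansion applies verbatim away from $\{x_1=0\}$. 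Near the interface I would take $\phi$ in the restricted domain of $\mathcal L$, i.e., $\mathcal C^2$ on each side satisfying the skew matching condition $\omega_{-1}\partial_{x_1-}\phi=\omega_1\partial_{x_1+}\phi$ (or its analogue in case (b)); a direct expansion of the finitely many boundary terms shows that the asymmetric jump rates across $\{x_1=0\}$ cancel to leading order precisely because of that matching condition.

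For tightness of $(\tilde{\bm Z}(sL^2)/L)_{s\le t}$ in the Skorokhod space I would invoke Aldous' criterion: the jump rates are uniformly bounded in $L$ and each jump displaces the rescaled walk by $1/L$, so the semimartingale decomposition has both martingale bracket and drift of order $O(s)$ after the $L^2$ speed-up, with maximal jumps vanishing in probability. Every subsequential limit is therefore supported on $\mathcal C[0,t]$ and, by the core convergence above, solves the martingale problem for $\mathcal L$. Well-posedness of the latter is classical in the setting of Theorem \ref{thm1} and is provided by \cite{P90, A78} for the generalized diffusion of Theorem \ref{thm2}, yielding the lemma.

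The main obstacle is the interface analysis in Theorem \ref{thm2}: one must verify that the microscopic asymmetry in the jump rates across $\{x_1=0\}$ translates into exactly the skew boundary condition in the limit, rather than a more singular reflection or trapping behavior. This is where the specific weights $\omega_v/(\omega_u+\omega_v)$ (respectively the piecewise-constant $r_{(u,v)}$) conspire with the restricted class of test functions to reproduce the domain of $\mathcal L$, and it is the reason $\omega_{\pm 1}$ (respectively $r_{\pm 1}$) enter the matching condition with the coefficients stated in the PDE.
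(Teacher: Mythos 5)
Your plan to realize $\tilde{\bm Z}$ as a nearest-neighbor walk with rates $r_{(u,v)}\,\omega_v/(\omega_u+\omega_v)$ is exactly right, and for Theorem~\ref{thm1} your generator/martingale-problem argument is essentially the paper's proof in disguise: the paper applies Theorem~11.2.3 of Stroock--Varadhan after a discrete-time embedding, which is the same Taylor-expansion-of-the-generator computation you describe. For Theorem~\ref{thm2}, however, you take a genuinely different route. The paper does not prove generator convergence there; instead it decouples the first coordinate, invokes the Cherny--Shiryaev--Yor invariance principle for skew random walks, handles the remaining coordinates by Donsker, and then (in case (b)) reconstructs the continuous-time limit by an explicit random time change via the map $\Theta$.

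The trouble is that the generator approach does not go through as you describe in case (a). With $\omega_v=\omega_{-1}$ for $v_1<0$ and $\omega_v=\omega_1$ for $v_1\ge 0$, the walk is asymmetric in the $e_1$-direction at \emph{two} lattice layers, $u_1=0$ and $u_1=-1$. At $u_1=0$ the rescaled generator has the $O(L)$ term
\[
L\Bigl[\tfrac{r}{2}\,\partial_{x_1+}\phi-\tfrac{r\omega_{-1}}{\omega_1+\omega_{-1}}\,\partial_{x_1-}\phi\Bigr],
\]
and imposing $\omega_1\partial_{x_1+}\phi=\omega_{-1}\partial_{x_1-}\phi$ reduces this to $Lr\,\partial_{x_1+}\phi\,\tfrac{\omega_{-1}-\omega_1}{2(\omega_1+\omega_{-1})}$, which is \emph{not} zero unless $\omega_1=\omega_{-1}$. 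At $u_1=-1$ both $e_1$-neighbors lie on the negative side, so the $O(L)$ term is $Lr\,\partial_{x_1-}\phi\,\bigl[\tfrac{\omega_1}{\omega_{-1}+\omega_1}-\tfrac12\bigr]$, which involves only $\partial_{x_1-}\phi$ and therefore cannot be killed by \emph{any} matching condition on the one-sided derivatives. So $\mathcal L^{(L)}\phi\to\mathcal L\phi$ fails pointwise on the interface layers, and since a $d$-dimensional walk spends an $O(1/L)$ fraction of rescaled time with $u_1\in\{-1,0\}$, these $O(L)$ residuals contribute $O(1)$ to $\int_0^s\mathcal L^{(L)}\phi(\tilde{\bm Z}(rL^2)/L)\,dr$ and cannot simply be discarded. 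They do cancel, but only when weighted by the occupation times of the two layers (which are asymptotically in ratio $\omega_1:\omega_{-1}$); turning that averaging into a proof requires either an explicit microscopic corrector added to $\phi$, or a concentration statement for the local times --- neither of which appears in your sketch. This is precisely the difficulty the paper avoids by appealing to the skew-random-walk invariance principle rather than to generator convergence. (In case (b), by contrast, the asymmetry sits at the single layer $u_1=0$ and the condition $r_1\partial_{x_1+}\phi=r_{-1}\partial_{x_1-}\phi$ does kill the $O(L)$ term exactly, so your argument is fine there.)

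Two smaller points. You should note that the first coordinate of $\tilde{\bm Z}$ in case (a) is itself an autonomous continuous-time Markov chain independent of the other coordinates (since $\omega$ depends only on $u_1$); this is what makes the paper's coordinate-splitting possible and would also simplify your occupation-time estimate if you pursued the corrector route. And you should say explicitly why your core is large enough for well-posedness: for the generalized diffusion of Theorem~\ref{thm2} one must check that the restricted test-function class is indeed a core for the skew generator, which is nontrivial and is the reason the paper cites \cite{A78,P90}.
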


\begin{proof}
In the setup of Theorem \ref{thm1}, this follows from Theorem 11.2.3
in \cite{SV07}.
More precisely, as we can neglect events of small probability, we can assume that 
(A) the particle jumps less than $L^3$ times before $tL^2$ and consequently
(B) the smallest time between two consecutive jumps before $tL^2$ is bigger than $L^{-4}$. Now choosing
$h=L^{-4}$, $\tilde{\bm Z}$ can only jump at most once on the interval $[kh, (k+1)h]$
for all $k <t L^6$. Now we choose 
\[
\Pi_h\left( \frac{z}{L}, \frac{z+e_i}{L}\right) = L^{-2} R \left( \frac{x+e_i/2}{L} \right), 
\quad \Pi_h\left( \frac{z}{L}, \frac{z}{L}\right) = 1-L^{-2} \sum_{e_i}
R \left( \frac{x+e_i/2}{L} \right)  
\]
for any $z \in \mathbb Z^d$ and any unit vector $e_i$. With this choice, we easily see that $a^{ii}(x) = 2R(x)$ and 
$b^i(x) = \frac{\partial}{\partial x_i} R(x)$ at the end of page 267 in \cite{SV07}. Applying 
Theorem 11.2.3, the Lemma follows.

In case of Theorem \ref{thm2}(a), we consider the first coordinate of $\tilde{\bm Z}$ and the other
$d-1$ coordinates separately. Under diffusive scaling, the former one converges to a skew Brownian
motion by e.g. \cite{ChShY04}, while the latter one converges to a $d-1$ dimensional Brownian motion by 
Donsker's theorem.
A slight technical detail is that we need to switch to discrete time so as to apply the result of \cite{ChShY04}.
Let us thus define ${ \bm Z}'_1 (k) = \tilde{ \bm Z}_1 (\tau_{1,k})$
for non-negative integers $k$, where $\tau_{1,0} = 0$
and $\tau_{1,k} = \min \{ t> \tau_{1, k-1}: \tilde{ \bm Z}_1 (t-) \neq  \tilde{ \bm Z}_1 (t+)\}$.  Then
by \cite{ChShY04} we have that
\begin{equation}
\label{eq:SBMd=1}
\left( \frac{{ \bm Z}'_1 (sL^2)}{L} \right)_{0 \leq s \leq 4rt}
\Rightarrow \left( \mathcal SBM_{\omega_1/(\omega_{-1}+ \omega_1)} (s) \right)_{0 \leq s \leq 4rt},
\end{equation}
where $SBM_{\omega_1/(\omega_{-1}+ \omega_1)}$ is the skew Brownian motion
with parameter $\omega_1/(\omega_{-1}+ \omega_1)$, which is by definition
equal to $\mathcal Z_1(s)/\sqrt{2r}$.
Now observe that by the law of large numbers, the functions 
$s \mapsto 2r\tau_{1,sL^2}/(sL^2)$
converge in probability to the identity function on $\mathcal D[0,t]$. Hence the statement of the Lemma, 
restricted to the first component, follows. Finally, the other components are independent from the first one
and under diffusive scaling, they converge to Brownian motion by Donsker's theorem. 

In case of Theorem \ref{thm2}(b) we use a similar argument to the one in the previous paragraph.
Namely, we still have the analogue of (\ref{eq:SBMd=1}) and by independence
\begin{equation}
\label{eq:SBMd}
\left(\frac{{ \bm Z}' (sL^2)}{L} \right)_{0 \leq s \leq T}
\Rightarrow
\left(  
 \mathcal SBM_{r_1/(r_{-1}+ r_1)} (s/d),
\mathcal W_{d-1} (s/d) \right)_{0 \leq s \leq T},
\end{equation}
where 
\[T=4 d \max \{ r_{-1}, r_{1}\} t,\] 
$\mathcal W_{d-1}$ is a $d-1$ dimensional standard Brownian motion and
${ \bm Z}' (k) = \tilde{ \bm Z} (\tau_{k})$ 
with $\tau_{0} = 0$ 
and $\tau_{k} = \min \{ t> \tau_{ k-1}:  \tilde{ \bm Z} (t-) \neq  \tilde{ \bm Z} (t+)\}$
for positive integers $k$. 
The difference from the case of Theorem \ref{thm2}(a) is that
now in order to recover the convergence of $\tilde{\bm Z}$, 
we need a nonlinear time change (and consequently have to work with all coordinates
at the same time). To do so, let us introduce the local time the first 
coordinate spends on the negative and positive halfline:
\[  
\tau_{-}(s) = \int_0^s 1_{f_1(y) < 0} dy, \quad
\tau_{+}(s) = \int_0^s 1_{f_1(y) > 0} dy,
\]
for $f = (f_1, ..., f_d) \in \mathcal D([0,T], \mathbb R^d)$. Now we can define
\[
 \rho(s) = \min \{ y: \frac{\tau_-(y)}{2 d r_-} + \frac{\tau_+ (y)}{2 d r_+} \geq s\}
\]
and
\[
 \Theta: \mathcal D([0,T]) \rightarrow \mathcal D([0,3t/2])
\]
by
\[
(\Theta (f)) (s) =
\begin{cases}
 f(\rho(s)) & \text{ if } \tau_-(T) + \tau_+(T) \geq 3T/4 \\
0 & \text{ otherwise. }
\end{cases}
\]
The function $\Theta$ is well defined because of the choice of $T$.
Let us denote by $Q_L$ and $Q$ the measure on $\mathcal D[0,T]$ given by the left and right hand sides of
(\ref{eq:SBMd}). Next, we observe that
\begin{equation*}
Q(\tau_-(T) + \tau_+(T) = T) = 1.
\end{equation*}
and $\Theta$ is continuous on the set $\{ \tau_-(T) + \tau_+(T) = T \}$.
Now we can apply the continuous mapping theorem (see e.g. Theorem 5.1 in \cite{B68})
to conclude 
$\Theta_*Q_L \Rightarrow \Theta_* Q$. Here, $\Theta_*Q_L $ is the distribution of a process 
$(\hat{\bm Z}(s))_{0 \leq s \leq 3t/2}$ obtained from $({\bm Z}'(s))_{0 \leq s \leq T}$ 
by rescaling time with $\Theta$.
Since $\Theta_* Q$ is the distribution of $\mathcal Z(s)$, it suffices to show that 
\begin{equation}
\label{eq:AA}
 \left( \frac{\tilde{\bm Z} ({sL^2}) - \hat{\bm Z} ({sL^2})}{L} \right)_{0 \leq s \leq t}
\Rightarrow 0.
\end{equation}

Let us introduce the notations $z_k = {\bm Z}' ({k})  - {\bm Z}' ({k-1})$, $\mathcal T_i = \tau_{i} - \tau_{i-1}$.
By an elementary estimate on the local time of random walks, we have
$$ \mathbb P (\# \{ k< tL^2: \bm Z' (k)_1 =0\} > L^{3/2}) = o(1). $$
Since we prove weak convergence, we can neglect the above event and subsequently assume that
$z_1, ..., z_{tL^2}$ is such that the first coordinate's local time at zero is smaller than 
$L^{3/4}$. Especially, we use the first line of the definition of $\Theta$ when we construct $\hat{\bm Z}$
for $L$ large. Thus with the notation
$$\tilde i_s = \max \{ i: \sum_{j=1}^i \mathcal T_j < s \} \quad
\hat i_s = \max \{ i: \sum_{j=1}^i \mathbb E (\mathcal T_j 1_{\{ {\bm Z}' (j)_1 \neq 0 \}} | z_1, ..., {z_{tL^2}}) 
< s \}.$$
we have $\tilde{\bm Z} (s) = {\bm Z}' (\tilde i_s)$
and $\hat{\bm Z} (s) = {\bm Z}' (\hat i_s)$.
Now with some fixed small $\delta$ let us write 
$$ \sum^k = \sum_{i\in [k \delta L^2, (k+1)\delta L^2]} \quad \text{and} \quad
\sum^{k*} = \sum_{i\in [k \delta L^2, (k+1)\delta L^2], {\bm Z} (i)_1 \neq 0}$$
for $k=1,2,..., T/\delta$. Then by Chebyshev's inequality,
\begin{eqnarray}
&&\mathbb P \left( \left|\sum^{k*} \mathcal T_i - \sum^{k*} \mathbb E (\mathcal T_i | z_1, ..., {z_{tL^2}})\right|
>  \delta^3 L^2 | z_1, ..., z_{z_{tL^2}} \right) \nonumber \\
&<& \frac{\delta L^2}
{4d^2 \min \{r_1^2, r_{-1}^2 \}} \frac{1}{\delta^6 L^4} < \delta^{10} \nonumber
\end{eqnarray}
for $L$ large enough.
Furthermore, by our assumption on $z_1, ..., {z_{tL^2}}$, 
\[ 
P \left( \left|\sum^{k*} \mathcal T_i  - \sum^{k} \mathcal T_i \right| > \delta^3 L^2
 | z_1, ..., {z_{tL^2}}  \right)
 < \delta^{10}
\]
also holds for $L$ large. Consequently the random times 
$\tilde s_l = \sum_{k=1}^l \sum^{k*} \mathbb E (\mathcal T_i | z_1, ..., {z_{tL^2}})$
and $\hat s_l = \sum_{i=1}^{lL^2/\delta} \mathcal T_i$
satisfy
\begin{itemize}
 \item $0=\tilde s_0<\tilde s_1<\tilde s_2 ...< \tilde s_T$, $\tilde s_T \geq 3t/2$ and
$\tilde s_l - \tilde s_{l-1} < C\delta$
 \item for all $l$ with $\tilde s_l < t$,
$\mathbb P (|\hat s_l - \tilde s_l| > 2 \delta^2 ) < \delta^{8}$ and 
$\hat{\bm Z} (\hat s_l L^2) = \tilde{\bm Z} (\tilde s_l L^2)$
\end{itemize}
This together with tightness of $\bm Z'$ (proved in the usual way) gives (\ref{eq:AA}). We have finished the proof of 
Lemma \ref{lemmaN=1,z}.
\end{proof}

\begin{lemma}
\label{lemmaN=1,y}
If $N=1$, then 
$$ \left( \frac{\tilde{\bm Y}^{(L)}({sL^2})}{ L} \right)_{0 \leq s \leq t} \Rightarrow
 \mathcal (\mathcal Y (s))_{0 \leq s \leq t}.$$
\end{lemma}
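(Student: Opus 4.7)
The plan is to derive Lemma \ref{lemmaN=1,y} from Lemma \ref{lemmaN=1,z} by a continuous mapping argument applied to the boundary-stopping functional, using the explicit coupling (\ref{eq:YZ1}) between $\tilde{\bm Y}^{(L)}$ and the unstopped process $\tilde{\bm Z}$, and its continuous analogue (\ref{eq:YZ2}) between $\mathcal Y$ and $\mathcal Z$.

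First I would introduce, on the Skorokhod space $\mathcal D([0,t], \mathbb R^d)$, the stopping functional $\Phi_A(f)(s) = f(s \wedge \sigma_A(f))$ with $\sigma_A(f) = \inf\{ s \in [0,t]: f(s) \notin A\}$ and the convention $\inf \emptyset = t$. Under diffusive rescaling, (\ref{eq:YZ1}) gives $\tilde{\bm Y}^{(L)}(sL^2)/L = \Phi_{\mathcal D_L/L}(\tilde{\bm Z}(\cdot L^2)/L)(s)$, while (\ref{eq:YZ2}) gives $\mathcal Y = \Phi_{\mathcal D}(\mathcal Z)$. Since $\mathcal D_L/L \to \mathcal D$ in the Hausdorff sense and the candidate limit paths are continuous, the discrepancy between $\Phi_{\mathcal D_L/L}$ and $\Phi_{\mathcal D}$ can be absorbed by a short-time estimate on $\mathcal Z$ near $\partial \mathcal D$: a continuous path that exits $\mathcal D$ at time $\sigma$ cannot have exited $\mathcal D_L/L$ much earlier.

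Combining Lemma \ref{lemmaN=1,z} with the continuous mapping theorem (Theorem 5.1 in \cite{B68}) then reduces the conclusion to verifying that $\Phi_{\mathcal D}$ is continuous at $\mathcal Z$-almost every path, i.e. that $\mathcal Z$ exits $\mathcal D$ transversally at $\sigma_{\mathcal D}$. For the uniformly elliptic diffusion of Theorem \ref{thm1}, this is standard given that $\partial \mathcal D$ is piecewise $\mathcal C^2$ without cusps, so every boundary point is regular and the path immediately enters $\mathbb R^d \setminus \overline{\mathcal D}$ after hitting $\partial \mathcal D$. For the generalized diffusion of Theorem \ref{thm2}, the same argument applies at boundary points with $x_1 \neq 0$, while the first coordinate being a (time-changed) skew Brownian motion ensures that hits of $\mathcal Z$ on $\partial \mathcal D \cap \{x_1=0\}$ form a $\mathbb P$-null set.

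The main obstacle is precisely this last step, namely the almost-sure transversal exit, where one has to simultaneously handle the non-$\mathcal C^2$ pieces of $\partial \mathcal D$ (using the no-cusp hypothesis) and, in the setup of Theorem \ref{thm2}, the interface $\{x_1=0\}$ (using the skew Brownian structure). Both are routine, but they require some bookkeeping. Once continuity of $\Phi_{\mathcal D}$ at $\mathcal Z$-typical paths is in hand, the continuous mapping theorem delivers weak convergence of $\tilde{\bm Y}^{(L)}(\cdot L^2)/L$ to $\mathcal Y$ on $\mathcal D([0,t], \mathbb R^d)$ in the supremum metric, completing the proof.
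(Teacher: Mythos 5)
Your proposal is correct and takes essentially the same route as the paper: the paper's proof of Lemma \ref{lemmaN=1,y} is a one-line invocation of the continuous mapping theorem together with Lemma \ref{lemmaN=1,z} and the identities (\ref{eq:YZ1}) and (\ref{eq:YZ2}). You simply spell out the details (a.s.\ continuity of the stopping functional at $\mathcal Z$-typical paths, transversal exit, handling of the interface in Theorem \ref{thm2}) that the paper leaves implicit.
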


\begin{proof}
 This is a consequence of the continuous mapping theorem, Lemma \ref{lemmaN=1,z},
(\ref{eq:YZ1}) and (\ref{eq:YZ2}).
\end{proof}

Next, we prove a simpler version of Theorems \ref{thm1} and \ref{thm2}, namely the convergence of the
expectations

\begin{proposition}
\label{prop:exp}
Consider the setup of either Theorem \ref{thm1} or Theorem \ref{thm2}. Then
\begin{equation}
\label{eq:moments1}
 \lim_{L \rightarrow \infty} \mathbb E \left( \xi_{\langle xL \rangle}^{(L)} (t L^2)\right) = 
u(t,x)
\frac{\omega_{x }}{2},
\end{equation}
where $\omega_x = \omega$ in case of Theorems \ref{thm1} and \ref{thm2}(b) 
and $=\omega_{sign(x_1)}$ in case of Theorem \ref{thm2}(a)
\end{proposition}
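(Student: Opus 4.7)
My plan is to use the duality of Proposition \ref{prop:dual} with a single dual particle, reduce the expected energy at $\langle xL \rangle$ to an expectation of $f$ evaluated at the rescaled walker's position, apply Lemma \ref{lemmaN=1,y} to pass to the limiting (generalized) diffusion $\mathcal Y$, and finally identify the limit with $u(t,x)$ via the Kolmogorov backward equation.

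First I would take $N=1$ and $\underline n = \mathbf 1_{\{\langle xL\rangle\}}$ in \eqref{eq:F}, so that
\begin{equation*}
F(\underline n, \underline \xi) = \frac{\xi_{\langle xL\rangle}\,\Gamma(\omega_{\langle xL\rangle}/2)}{\Gamma(1+\omega_{\langle xL\rangle}/2)} = \frac{\xi_{\langle xL\rangle}}{\omega_{\langle xL\rangle}/2};
\end{equation*}
Proposition \ref{prop:dual}, averaged over the initial distribution of $\bm X(0)$, then yields
\begin{equation*}
\mathbb E\bigl(\xi^{(L)}_{\langle xL\rangle}(tL^2)\bigr) = \frac{\omega_{\langle xL\rangle}}{2}\, \mathbb E\!\left[F\bigl(\tilde{\bm Y}^{(L)}(tL^2),\, \bm X(0)\bigr)\right].
\end{equation*}
Conditionally on $\tilde v := \tilde{\bm Y}_1^{(L)}(tL^2)$, the inner expectation equals $\mathbb E[\xi_{\tilde v}(0)]/(\omega_{\tilde v}/2)$ when $\tilde v \in \mathcal D_L$ and $T(\tilde v/L)$ when $\tilde v \in \mathcal B_L$. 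The $k=1$ case of the ``associated with $f$'' hypothesis (where no spacing constraint is imposed) makes the former equal to $f(\tilde v/L)(1+o(1))$ uniformly in $\tilde v$; since $f|_{\mathbb R^d\setminus\mathcal D}=T$, both cases collapse to $f(\tilde v/L)+o(1)$.

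Next, Lemma \ref{lemmaN=1,y} yields $\tilde{\bm Y}^{(L)}(tL^2)/L \Rightarrow \mathcal Y(t)$; since $f$ is bounded and continuous on a fixed compact neighborhood of $\bar{\mathcal D}$ containing $\mathcal D_L/L \cup \mathcal B_L/L$ for every $L$, the bounded convergence theorem produces
\begin{equation*}
\lim_{L\to\infty} \frac{\mathbb E\bigl(\xi^{(L)}_{\langle xL\rangle}(tL^2)\bigr)}{\omega_{\langle xL\rangle}/2} = \mathbb E\bigl[f(\mathcal Y(t))\,\bigm|\,\mathcal Y(0)=x\bigr].
\end{equation*}
I would then define $u(t,x)$ to be this right-hand side. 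By the Kolmogorov backward equation, $u$ satisfies $u_t=\mathcal L u$ in the interior of $\mathcal D$ (respectively in each of the two open half-domains of Theorem \ref{thm2}), with initial condition $u(0,\cdot)=f$ and, because $\mathcal Y$ is absorbed on $\partial \mathcal D$ with $f|_{\partial\mathcal D}=T$, the Dirichlet boundary condition $u(t,\cdot)|_{\partial\mathcal D}=T$. The skew transmission condition at $\{x_1=0\}$ in Theorem \ref{thm2} is exactly the domain-of-generator constraint built into the definition of $\mathcal L$, so by uniqueness the $u$ just constructed agrees with the one in the theorem statement; since $\omega_{\langle xL\rangle}/2=\omega_x/2$ for all large $L$ (noting $x_1\neq 0$ in Theorem \ref{thm2}), this is precisely \eqref{eq:moments1}.

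The main obstacle will be technical bookkeeping: (i) verifying that the ``associated'' hypothesis provides a uniform (not merely pointwise) $o(1)$ error in $\tilde v$, which is immediate from the definition at $k=1$; and (ii) upgrading the Skorokhod-space weak convergence of Lemma \ref{lemmaN=1,y} to convergence of the time-$t$ marginal, which follows from the a.s.\ continuity of $\mathcal Y$ at the fixed time $t$, so that the evaluation map is continuous at $\mathcal Y$-trajectories in the Skorokhod topology.
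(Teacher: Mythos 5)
Your proof is correct and follows essentially the same route as the paper: apply duality with a single dual walker, invoke Lemma \ref{lemmaN=1,y} to pass to the stopped limiting process $\mathcal Y$, and identify the resulting expectation $\mathbb E[f(\mathcal Y(t))\,|\,\mathcal Y(0)=x]$ with the PDE solution $u(t,x)$ via the Kolmogorov backward equation. One mild streamlining: by cancelling the $\omega_{\tilde v}/2$ coming from the duality function $F$ against the $\omega_{\tilde v}/2$ coming from the associated-with-$f$ hypothesis, you treat the constant- and non-constant-$\omega$ cases uniformly, thereby avoiding the cutoff function $h$ that the paper introduces separately to handle Theorem \ref{thm2}(a).
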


\begin{proof}
By Proposition \ref{prop:dual}, the left hand side of (\ref{eq:moments1}) is equal to
\begin{equation*}
\frac{\omega_{\langle xL \rangle }}{2}
\mathbb E  
\left( \xi^*_{\tilde{\bm Y} (tL^2)} \frac{2}{\omega_{\tilde{\bm Y} (tL^2)}} 
1_{ \{ \tilde{\bm Y} (tL^2)  \in \mathcal D_L \}} +
T \left(\frac{{\tilde{\bm Y}({tL^2})}}{L} \right) 1_{ \{\tilde{\bm Y}({tL^2}) \in \mathcal B_L \}} \right)
=:I + II ,
\end{equation*}
where $\xi^*_v = \xi_v(0)$.
In case of Theorems \ref{thm1} and \ref{thm2}(b), $\omega$ is constant, thus we obtain
$$
I = \sum_{v \in \mathcal D_L} \mathbb P ( \tilde{\bm Y} (tL^2)=v ) \mathbb E (\xi_v^*).
$$
Recall that (\ref{eq:init}) gives $\mathbb E (\xi_v^*) = \frac{\omega}{2} f(v/L) + o(1)$. Applying
Lemma \ref{lemmaN=1,y} and the
definition of the weak convergence, we obtain
$$
\lim_L I = \frac{\omega}{2} \mathbb E (f(\mathcal Y (t)) 1_{\{\mathcal Y(t) \in \mathcal D\}}) 
$$ 
Similarly,
$$
\lim_L II= \frac{\omega}{2} \mathbb E (T(\mathcal Y (t)) 1_{\{\mathcal Y(t) \in \mathcal \partial D\}}) 
$$ 
Now the proposition follows from the fact that $u(t,x)$ solves the Kolmogorov equation associated with the 
process $\mathcal Y$. 
Finally, in case of Theorem \ref{thm2}(a), we 
consider the function $h$ on $ \mathcal D $ with $h(y) = 0$ if $y_1 <0$, $h(y) = y_1/\delta$ if
$0 \leq y_1 < \delta$ and $h(y) = 1$ if $y_1 > \delta$ and approximate $I$ by $Ia + Ib$, where
$Ia$ is obtained from $I$ by multiplying the integrand with $h(\tilde{\bm Y} (tL^2)/L)$
and $Ib$ is obtained from $I$ by multiplying the integrand with $h( - \tilde{\bm Y} (tL^2)/L)$. Clearly,
$I - \varepsilon < Ia + Ib < I$ for $\delta$ small enough. Then we can repeat the above argument since 
$\omega$ is constant on the integration domain in both $Ia$ and $Ib$.
\end{proof}

In order to complete the proof of Theorems \ref{thm1} and \ref{thm2}, we need an extension of Lemma \ref{lemmaN=1,y}
from $N=1$ to arbitrary $N$:
 
\begin{proposition}
\label{lemmaN,y} 
The processes
$$ \left( \frac{\tilde{\bm Y}_i^{(L)}({sL^2})}{ L} \right)_{0 \leq s \leq t} $$
for $i=1,...,N$ converge weakly to $N$ independent copies of $(\mathcal Y(s))_{0 \leq s \leq t}$.
\end{proposition}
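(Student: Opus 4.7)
The plan is to establish joint weak convergence in three stages: tightness, marginal identification, and asymptotic independence. Tightness of the joint laws in the product Skorokhod space follows coordinatewise from Lemma \ref{lemmaN=1,y}. For the marginals, Lemma \ref{lemma:subsyst} tells us that the law of any single labeled particle in $\tilde{\bm Y}^{(L)}$ coincides with that of the $N=1$ dual process, so Lemma \ref{lemmaN=1,y} yields marginal convergence to $\mathcal Y$. Only joint independence remains to prove.

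To establish independence I would argue at the level of the martingale problem. For smooth product test functions $\phi(y_1, \ldots, y_N) = \prod_i \phi_i(y_i)$ with each $\phi_i$ in the domain of $\mathcal L$, the rescaled generator $L^2 A$ of $\tilde{\bm Y}^{(L)}(\cdot L^2)/L$ applied to $\phi$ decomposes into two pieces: single-particle contributions coming from edges that touch at most one labeled particle, which by the arguments of Lemma \ref{lemmaN=1,y} converge to $\sum_i (\mathcal L \phi_i) \prod_{j\neq i} \phi_j$ (precisely the generator of $N$ independent copies of $\mathcal Y$), and interaction contributions coming from edges carrying two or more labels. The task reduces to showing the interaction contributions vanish in the limit.

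This vanishing is the main obstacle. The key algebraic input is that the exchangeability of the beta-binomial mixing forces the leading-order term in the interaction correction to a product test function on an edge $(u,v)$ to cancel when the local coefficients are symmetric ($\omega_u = \omega_v$), so each interaction contributes $O(L^{-2})$ to the rescaled generator. Combining with the Green's function estimate that two rescaled walks spend expected time $O(\log L)$ (in $d=2$) or $O(1)$ (in $d\geq 3$) at lattice distance $\leq 1$ during $[0, tL^2]$ — both $o(L^2)$ — one finds that the integrated interaction contribution is $o(1)$. The inhomogeneous settings of Theorems \ref{thm1} and \ref{thm2} pose no essential difficulty: the uniform upper and lower bounds on $R, r, r_{\pm 1}$ allow the classical heat-kernel estimates for the rescaled walk to apply with only cosmetic changes, and near the interface in Theorem \ref{thm2}(a), where the leading-order cancellation fails, the $(d-1)$-dimensional nature of the interface and the additional factor of $1/L$ in the correction keep the residual contribution negligible. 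A union bound over the $\binom{N}{2}$ pairs then promotes pairwise asymptotic independence to the full joint statement and completes the proof.
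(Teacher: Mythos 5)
Your proposal takes a genuinely different route from the paper. The paper argues pathwise: via a coupling to a planar SSRW (Lemma \ref{lemma:coupleRW}) and a chain of annuli estimates (Lemmas \ref{lemma:wasp}--\ref{lemma:alpha2}), it shows that with probability close to one the labeled particles separate to distance $L^\gamma$ in time $o(L^2)$ while staying within $o(L)$ of $xL$, and never re-meet before $tL^2$; on this event the walks are literally independent after the separation time. Your route is at the level of the martingale problem, arguing that the interaction correction to the rescaled generator on product test functions vanishes after integrating in time. The cancellation you invoke is real, and in fact stronger than you state: by exchangeability of the beta-binomial mixing the marginal law of each labeled particle's jump coincides exactly with its one-particle dynamics, so the first-order (and even the diagonal second-order) Taylor terms of the joint and the independent generators agree for \emph{any} $\omega_u,\omega_v$, not only when $\omega_u=\omega_v$; the leftover is a cross term of the type $q_0\,\phi_1'\phi_2'\,h^2$. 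Your hedging about the interface in Theorem \ref{thm2}(a) is therefore unnecessary, which is a point in your favor.

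The genuine gap is the ``Green's function estimate.'' You assert that the two walks spend expected time $O(\log L)$ (respectively $O(1)$) at mutual distance $\leq 1$ over $[0,tL^2]$, attributing this to classical heat-kernel bounds with ``cosmetic changes.'' But the difference process $Z={\bm Z}_2'-{\bm Z}_1'$ is, as the paper stresses, neither Markov nor translation-invariant: Lemma \ref{lemma:coupleRW} only gives a coupling to a SSRW with a per-step error $O(\|Z\|/L^2)$, and since $\|Z\|$ is typically of order $\sqrt{k}$ after $k$ steps, the accumulated error over $k$ steps is $O(k^{3/2}/L^2)$ — already of order $1$ at $k\sim L^{4/3}\ll L^2$. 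A naive coupling over the full horizon therefore fails, and an adverse drift could in principle trap $Z$ near the origin for a $\Theta(L^2)$ stretch, ruining the $o(L^2)$ occupation-time bound you need. Controlling exactly this accumulated bias across scales is what the annuli analysis in Lemma \ref{lemma:annuli} (tiny/small/moderate/large gap regimes, the electrical-network comparison in Lemma \ref{lemma:alpha}) is doing; it is not a cosmetic variation of the homogeneous estimate but the technical heart of the proof. Any rigorous implementation of your generator approach would have to reproduce estimates of comparable strength, so the proposal as written does not yet close the argument.
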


Now, we prove Theorems \ref{thm1} and \ref{thm2} assuming Proposition \ref{lemmaN,y}.
By the discussion at the beginning of this section
and by (\ref{eq:gammamoment}), Theorems \ref{thm1} and \ref{thm2} will be proved
once we establish
\begin{equation}
\label{eq:moments}
\mathbb E \left( \prod_{s \in S} \xi_{\langle xL \rangle +s}^{n^*_s}(tL^2) \right) \sim
u(t,x)^{N} \prod_{s \in S} \frac{ \Gamma(n^*_s + \omega_{\langle xL \rangle +s}/2)}
{\Gamma(\omega_{\langle xL \rangle +s}/2)}.
\end{equation}
By Proposition \ref{lemmaN,y}, 
$\lim_{L} \mathbb P (\mathcal A^{(L)}_{\delta}) =  o_{\delta} (1)$, where
$$ 
\mathcal  A_{\delta} = \mathcal  A^{(L)}_{\delta} = \{ \| {\tilde{\bm Y}_i^{(L)}({tL^2})} - {\tilde{\bm Y}_j^{(L)}({tL^2})} \| 
\leq \delta L
\text{ for some } i \neq j\}.
$$
This, combined with the uniform moment condition and the Cauchy-Schwarz inequality gives
$$
\lim_{L}  \mathbb E \left( 1_{\mathcal A_{\delta}} F(\bm Y (t L^2), \underline \xi^*)
\right)  = o_{\delta} (1),
$$
where $\xi^*_v = \xi_v(0)$ and $F$ is the duality function defined in (\ref{eq:F}).
Now Proposition \ref{prop:dual} and the fact that $\underline \xi^*$ is associated with $f$
implies that the left hand side of (\ref{eq:moments}) is $o_{\delta} (1)$ close to
\begin{eqnarray*}
&&\prod_{s \in S} \frac{ \Gamma(n^*_s + \omega_{\langle xL \rangle +s}/2)} 
 {\Gamma(\omega_{\langle xL \rangle +s}/2)}\times \\
&\times &\mathbb E \left[ 1_{\overline{ \mathcal A}_{\delta}} \prod_{i=1}^N 
\left( \xi^*_{\tilde{\bm Y} (tL^2)} \frac{2}{\omega_{\tilde{\bm Y} (tL^2)}} 
1_{ \{ \tilde{\bm Y} (tL^2)  \in \mathcal D_L \}} +
T \left(\frac{{\tilde{\bm Y}({tL^2})}}{L} \right) 1_{ \{\tilde{\bm Y}({tL^2}) \in \mathcal B_L \}} \right) \right].
\end{eqnarray*}
Now we can cut this integral to $2^N$ pieces and apply a version of the proof of 
Proposition \ref{prop:exp} to conclude (\ref{eq:moments}).

In order to complete the proof of Theorems \ref{thm1} and \ref{thm2} it only remains to prove Proposition 
\ref{lemmaN,y}, which is the subject of the next section.

\section{Proof of Proposition \ref{lemmaN,y}}
\label{sec6}

We are going to prove a variant of Proposition \ref{lemmaN,y} obtained by replacing $\tilde{ \bm Y}_i$ and
$\mathcal Y$
with $\tilde{\bm Z}$ and $\mathcal Z$. Proposition \ref{lemmaN,y} follows from this variant the same way
as Lemma \ref{lemmaN=1,y} follows from Lemma \ref{lemmaN=1,z}.

The idea of the proof is borrowed from \cite{LNY15}, Section 5: we show that with probability close to $1$, 
$\tilde{\bm Z}_i(s)$ and $\tilde{\bm Z}_j(s)$ will not meet after getting separated by a distance $L^{\gamma}$
with some $\gamma$ close to $1$. Since $\tilde{\bm Z}_i(s)$ and $\tilde{\bm Z}_j(s)$
move independently if their distance is bigger than $1$, we 
can replace $\tilde{\bm Z}(s)$ for 
$s > \tau = \max \{ \tau_{i,j}\}$ 
by $N$ independent copies of $\tilde{\bm Z}_1(s)$, where $\tau_{i,j}$ is the first time
$s$ when $\|\tilde{\bm Z}_i(s) - \tilde{\bm Z}_j(s) \| >  L^{\gamma}$. It only remains to show that 
$\tilde{\bm Z}_i(\tau)$ is close to $xL$ and $\tau/L^2$ is negligible. We complete this strategy
in the case of Theorem \ref{thm1}, 
$N=2$ and $d=2$ in Section \ref{sec6.1} and for all other cases in Section \ref{subsec:6.3}.

\subsection{Case of Theorem \ref{thm1}, 
$N=2$ and $d=2$.}
\label{sec6.1}

Recalling the notation of Section \ref{sec:5}, let us write $Z(k) = {\bm Z}_2'(k) - {\bm Z}_1'(k)$.
Note that $ Z$ is not a particularly nice process: it is neither Markov, nor translation invariant.
As long as both ${\bm Z}_1'(k)$ and ${\bm Z}_2'(s)$ are $\varepsilon L$-close to 
$xL$, $ Z$ is well approximated by a Brownian motion. In particular, if $\| Z\| = M$, then
the probability that $ \|  Z\|$ reaches $M/2$ before reaching $2M$ is close to $1/2$ and thus
$\| Z\|$ performs an approximate simple symmetric random walk (SSRW) on the circles
$$ \mathcal C_m = \{ z \in \mathbb Z^2: | \|z \| - 2^m | <2\}.$$
We will estimate the goodness of this approximation by a SSRW.

Assuming that $ Z(s_0) \in \mathcal C_m$, denote by $s_1$ the smallest $s > s_0$ such that 
$Z(s) \in \mathcal C_{m-1}$ or $Z(s) \in \mathcal C_{m+1}$. We also write 
$$\log = \log_2, \quad M=2^m, 
\quad \mathcal E_{L,M} = \left| \mathbb P (Z(s_1) \in \mathcal C_{m-1}) - \frac12 \right|. $$ 
Note that $s_1$ is a stopping time with respect to the filtration generated by
$ {\bm Z}_2'(s), {\bm Z}_1'(s)$.

We will need a series of lemmas.

\begin{lemma}
\label{lemma:coupleRW}
If $\| Z(k)\| \geq 2 $, then
\begin{equation*}
\mathbb P \left( 
Z(k+1) - Z(k) = e 
\right) = \frac{1}{4} + 
O\left( \frac{ \| Z(k) \|}{L^2} \right)
\end{equation*}
for $e=(1,0), (-1,0),(0,1),(0,-1)$. Here, the constants involved in $O$ only depend on the
$\mathcal C^2$ norm of $R$.
\end{lemma}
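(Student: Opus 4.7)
The plan is to reduce the computation to the case of two independent walkers and then Taylor-expand $R$ to second order.

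First I would observe that the hypothesis $\|Z(k)\|\ge 2$ forces $\bm{Z}_1'(k)\ne\bm{Z}_2'(k)$ and the two particles cannot share any edge of $\mathbb{Z}^d$; in particular, no single edge ringing in the generator $A'_1$ involves both particles simultaneously. Since $\omega$ is constant in the setup of Theorem \ref{thm1}, a one-line computation with the beta-binomial weights in $A'_1$ (using $B(\omega_0/2,1+\omega_0/2)/B(\omega_0/2,\omega_0/2)=1/2$) shows that when the edge $(u,u+e)$ rings and there is a single particle at $u$, it moves to $u+e$ with probability exactly $1/2$. Consequently each of $\bm{Z}_1,\bm{Z}_2$ evolves, as long as $\|Z\|\ge 2$, as an independent continuous-time random walk whose rate of jumping from $u$ to $u+e$ is
\[
\tfrac12\,r_{(u,u+e)}=\tfrac12\,R\!\left(\tfrac{u+e/2}{L}\right).
\]

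Next I would write down the embedded jump chain of the joint process. Let $r_i(e):=R((\bm{Z}_i(k)+e/2)/L)$. The event $Z(k+1)-Z(k)=e$ (for a unit vector $e$) occurs exactly when the next jump is either ``particle 2 moves by $+e$'' or ``particle 1 moves by $-e$''. The common factor $1/2$ from the beta-binomial cancels between numerator and denominator, giving
\[
\mathbb{P}\bigl(Z(k+1)-Z(k)=e\bigr)=\frac{r_2(e)+r_1(-e)}{\sum_{i\in\{1,2\}}\sum_{e'}r_i(e')}.
\]

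The last step is a Taylor expansion at $\bm{Z}_i/L$: writing $R_i:=R(\bm{Z}_i/L)$ and using $R\in\mathcal{C}^2$,
\[
r_i(\pm e)=R_i\pm\tfrac{1}{2L}\,e\cdot\nabla R(\bm{Z}_i/L)+O(L^{-2}),
\qquad
\sum_{e'}r_i(e')=2d\,R_i+O(L^{-2}),
\]
because $\sum_{e'}e'=0$ kills the first-order terms. Adding $r_2(e)+r_1(-e)$ the zeroth-order part becomes $R_1+R_2$ and the first-order part becomes $\tfrac{1}{2L}e\cdot(\nabla R(\bm{Z}_2/L)-\nabla R(\bm{Z}_1/L))$, which by $\mathcal{C}^1$ control on $\nabla R$ is $O(\|\bm{Z}_2-\bm{Z}_1\|/L^2)=O(\|Z\|/L^2)$. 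Hence the numerator is $R_1+R_2+O(\|Z\|/L^2)$ and the denominator is $2d(R_1+R_2)+O(L^{-2})=4(R_1+R_2)+O(L^{-2})$ in $d=2$. Since $R$ is bounded away from $0$ on the relevant region, dividing gives $\tfrac14+O(\|Z\|/L^2)$, as claimed, with the implicit constant controlled by $\|R\|_{\mathcal{C}^2}$ (and the positive lower bound on $R$).

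I do not anticipate a serious obstacle: the only delicate point is confirming that the error in the numerator genuinely scales with $\|Z\|$ rather than merely with $1/L$, which is what ultimately makes the process $\|Z\|$ behave like a two-dimensional SSRW on the scales $\mathcal{C}_m$ and underlies the non-collision estimates used later in Section \ref{sec6.1}.
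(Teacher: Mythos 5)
Your proof is correct and takes essentially the same route as the paper's (the paper's own proof is a one-liner: "Using that $R$ is a $\mathcal C^2$ function on a compact domain containing $\mathcal D$, the lemma follows from the definition"), namely reduce to the embedded jump chain of two independent walkers and Taylor-expand $R$ to second order. You also correctly flag a small inaccuracy in the lemma's statement: the implied constant depends not only on $\|R\|_{\mathcal C^2}$ but also on a positive lower bound for $R$ on the region the walkers can reach, which is available here since $R\in\mathcal C^2(\mathbb R^d,\mathbb R_+)$ and the lemma is invoked only while both particles stay $\varepsilon L$-close to $xL$, but is not literally a $\mathcal C^2$-norm bound.
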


\begin{proof}
Using that $R$ is a $\mathcal C^2$ function on a compact domain containing $\mathcal D$,
the lemma follows from the definition.
\end{proof}

Lemma \ref{lemma:coupleRW} enables us to couple $Z(k)$ with a planar SSRW $W(k)$ such that 
\[
\mathbb P (Z(k) - Z(k-1) = W(k) - W(k-1) | Z(k) ) \geq 1-\frac{C \| Z(k)\|}{L^2}.
\]
We are going to apply such a coupling several times in the forthcoming lemmas.

\begin{lemma}
\label{lemma:wasp}
There are constants $\varepsilon_0 >0$, 
 $C_0< \infty$ and $\theta<1$ such that 
\begin{equation*}
\mathbb P (s_1 > n) < C_0 \theta^{n/ M^2}
\end{equation*}
assuming $\| Z(s_0)\| = M < \varepsilon_0 L$.

\end{lemma}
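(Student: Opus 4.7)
The plan is to bootstrap the well-known exponential tail for the exit time of a planar SSRW from an annulus, transferring it to $Z$ via the coupling implicit in Lemma \ref{lemma:coupleRW}.

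First I would fix the annulus $\mathcal{A}_m = \{ z \in \mathbb{Z}^2 : M/4 < \|z\| < 4M\}$, which contains $\mathcal{C}_{m-1}, \mathcal{C}_m, \mathcal{C}_{m+1}$. For a planar SSRW $W$ started in $\mathcal{C}_m$, a standard computation (via the optional stopping theorem applied to the harmonic function $\log \|\cdot\|$, together with a second-moment estimate) shows that there exists an absolute $\kappa > 0$ and $C_1 < \infty$ such that $\mathbb{P}(\tau_W \le C_1 M^2) \ge \kappa$, where $\tau_W$ is the first time $W$ leaves $\mathcal{A}_m$ (or equivalently first reaches $\mathcal{C}_{m-1} \cup \mathcal{C}_{m+1}$). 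This is the standard diffusive-exit bound and I would not recompute it.

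Next I would couple $Z$ to $W$ step-by-step as long as $Z$ stays in $\mathcal{A}_m$. By Lemma \ref{lemma:coupleRW}, as long as $\|Z(k)\| \le 4M$, the increment of $Z$ agrees with that of $W$ with conditional probability at least $1 - C\|Z(k)\|/L^2 \ge 1 - 4CM/L^2$. Over $C_1 M^2$ consecutive steps, the probability of at least one decoupling is at most $4 C C_1 M^3/L^2$, which is bounded by, say, $\kappa/2$ provided $M < \varepsilon_0 L$ with $\varepsilon_0$ chosen small enough (depending only on $C, C_1, \kappa$). On the event that no decoupling occurs within $C_1 M^2$ steps, $Z$ exits $\mathcal{A}_m$ (and hence hits $\mathcal{C}_{m-1} \cup \mathcal{C}_{m+1}$, so $s_1 \le C_1 M^2$) whenever $W$ does. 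Therefore
\[
\mathbb{P}(s_1 \le C_1 M^2 \mid Z(s_0)) \ge \kappa - \kappa/2 = \kappa/2 =: p > 0.
\]

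Finally I would iterate: on $\{s_1 > C_1 M^2\}$, the process $Z$ is still somewhere in $\mathcal{A}_m$, and I can restart the argument at the stopping time $s_0 + C_1 M^2$, since the coupling bound only uses that $\|Z\| \le 4M$. The strong Markov property of the pair $(\bm{Z}_1', \bm{Z}_2')$ then yields
\[
\mathbb{P}(s_1 > k C_1 M^2) \le (1-p)^k,
\]
from which the conclusion $\mathbb{P}(s_1 > n) \le C_0 \theta^{n/M^2}$ follows with $\theta = (1-p)^{1/C_1}$ and a constant $C_0$ absorbing the rounding up to the next multiple of $C_1 M^2$.

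The step I expect to be the main (though only mildly) delicate obstacle is making sure the coupling error $M^3/L^2$ is under control uniformly in $m$ up to the threshold $\varepsilon_0 L$; this is why the restriction $M < \varepsilon_0 L$ is needed in the statement. Everything else is a routine diffusive exit estimate for SSRW plus a geometric iteration via the strong Markov property.
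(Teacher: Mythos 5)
Your proof takes essentially the same route as the paper: couple $Z$ to a planar SSRW via Lemma~\ref{lemma:coupleRW}, use the diffusive annulus-exit estimate for the SSRW, control the coupling error by $C M^3/L^2$ using the hypothesis $M < \varepsilon_0 L$, and iterate to get the geometric tail. The only cosmetic difference is that the paper keeps $Z$ and $W$ within $M/10$ of each other and asks $W$ to reach $\mathcal C_{m\pm 2}$, while you insist on exact agreement over a block of $C_1 M^2$ steps; both bookkeeping choices are fine.
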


\begin{proof}
To verify Lemma \ref{lemma:wasp}, we couple 
$Z(k), k \in [s_0, s_0 + M^2]$ to a SSRW $W(k),  k \in [s_0, s_0 + M^2], W(s_0) = Z(s_0)$.
By Lemma \ref{lemma:coupleRW}, we can guarantee that $\| Z(k) - W(k) \| < C M^3 / L^2$ for all $k < M^2$
with some probability bounded away from zero. 
Here, $C$ only depends on the $\mathcal C^2$ norm of $R$. Thus by choosing
$\varepsilon_0$ small, we can guarantee $C M^3 / L^2 < M/10$.
Since $W$ is close to a Brownian motion, it reaches 
$\mathcal C_{m-2}$ or $\mathcal C_{m+2}$ before $M^2$ with some positive probability.
Consequently, there is some 
$p$ independent of $L$ such that $\mathbb P (s_1 < M^2) > p$. Applying this argument in an inductive fashion
gives  Lemma \ref{lemma:wasp}.
\end{proof}

Let us introduce the notation
\begin{equation}
\label{eq:deftau}
\underline \tau_{\gamma} =
\underline \tau_{\gamma} (L)= \min \{ k: \| Z(k) \| > L^{\gamma}\}
\end{equation}
Now we claim the following

\begin{lemma}
\label{lemma:annuli}
For every $\gamma<1$ with $1-\gamma$ small, there exists some $\xi >0$ such that  
for $L$ large enough, the following estimates hold.
\begin{enumerate}
 \item[(a)] Tiny gap:  If $m < \frac{3}{5} \log L$, then 
\[
 \mathbb P ( \tau_{3/5} > L^{13/10} ) = O(L^{-1/10})
\]
 \item[(b)] Small gap: 
 $$ \text{ If $\frac{3}{10}  \log L \leq m < \gamma \log  L$, then
 $\mathcal E_{L,M} = O(L^{-\xi})$} $$
 \item[(c)] Moderate gap: 
 $$ \text{ If $\gamma \log L < m < \log L - \log \log L$, then 
 $\mathcal E_{L,M} = O(\log^{-3/2} L)$ }$$
 \item[(d)] Large gap: There is some $\varepsilon >0$ such that 
 $$ \text{ if $\log L - \log \log L < m < \log L + \log \varepsilon$, then
   $\mathcal E_{L,M} < 1/100$ }$$
\end{enumerate}
\end{lemma}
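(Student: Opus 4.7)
The common strategy is to compare $Z(k)$ to a planar simple symmetric random walk (SSRW) $W(k)$ via the coupling implicit in Lemma \ref{lemma:coupleRW}, which allows us to couple $W$ to $Z$ (with the same starting point) so that the two share the same increment at step $k$ with conditional probability at least $1 - C\|Z(k)\|/L^2$. Two facts about 2D SSRW drive everything: optional stopping on $\|W(k)\|^2 - k$ yields $\mathbb E[\underline\tau_{3/5}(W)] = O(L^{6/5})$; and since $z \mapsto \log\|z\|$ is discretely harmonic away from the origin up to an $O(1/\|z\|^2)$ defect, optional stopping at $s_1$ gives $\mathbb P(W(s_1)\in\mathcal C_{m-1}) = \tfrac12 + O(1/M)$ when $W$ starts on $\mathcal C_m$, the $O(1/M)$ term absorbing both the log non-harmonicity and the $\pm 2$ jitter in the definition of $\mathcal C_m$.

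For part (a), Markov's inequality applied to $\mathbb E[\underline\tau_{3/5}(W)] = O(L^{6/5})$ gives $\mathbb P(\underline\tau_{3/5}(W) > L^{13/10}) = O(L^{-1/10})$. Under the coupling, $\underline\tau_{3/5}(Z) = \underline\tau_{3/5}(W)$ on the event that every step prior to $\underline\tau_{3/5}$ succeeds; since $\|Z(k)\| \leq L^{3/5}$ throughout that event, a union bound gives a failure probability of at most $C L^{13/10} \cdot L^{3/5}/L^2 = O(L^{-1/10})$. Combining the two $O(L^{-1/10})$ estimates yields (a).

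For (b)--(d) the plan is a perturbative analysis of the hitting probability $p(z) := \mathbb P_z(Z \text{ hits } \mathcal C_{m-1} \text{ before } \mathcal C_{m+1})$. Write the generator of $Z$ as $\mathcal L = \mathcal L_0 + \mathcal D$, where $\mathcal L_0$ is the SSRW generator and $\mathcal D$ is the drift perturbation; by Lemma \ref{lemma:coupleRW} (and since the four errors in the transition probabilities sum to zero), $|\mathcal D f(z)| \leq C(\|z\|/L^2)\,\|\nabla f\|_\infty$. Let $p_0$ be the SSRW hitting probability with the same boundary data on $\mathcal C_{m\pm 1}$. From $p_0(z) \approx (\log(2M) - \log\|z\|)/\log 4$ one obtains $p_0 = \tfrac12 + O(1/M)$ on $\mathcal C_m$ and $|\nabla p_0(z)| \leq C/\|z\| \leq C/M$ throughout the annulus. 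The correction $p_1 := p - p_0$ satisfies $\mathcal L_0 p_1 = -\mathcal D p$ with zero boundary data; a discrete Harnack-type estimate (or a short iteration argument) gives $|\nabla p| \leq C/M$ as well, whence $\|\mathcal D p\|_\infty = O(1/L^2)$. Since the SSRW Green's function on the annulus has total mass $O(M^2)$ (equal to the expected exit time), one concludes $\|p_1\|_\infty = O(M^2/L^2)$, and therefore
\[
\mathcal E_{L,M} \leq C\bigl(M^2/L^2 + 1/M\bigr).
\]
This immediately gives (b) with $\xi = \min\{2(1-\gamma),\,3/10\} > 0$; gives (c) because $M \leq L/\log L$ forces $M^2/L^2 \leq 1/\log^2 L$ while $1/M \leq \log L/L$; and gives (d) by choosing $\varepsilon$ small enough that $\varepsilon^2 < 1/200$.

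The main obstacle is securing the sharp bound $\|p_1\|_\infty = O(M^2/L^2)$. A naive coupling estimate --- bounding the probability of any coupling failure during an excursion of length $O(M^2)$ --- only yields $O(M^3/L^2)$, which is too weak for (b) when $\gamma$ is close to $1$. The refined bound exploits the fact that what enters is not $\|p\|_\infty$ but $\|\nabla p_0\|_\infty = O(1/M)$: the drift interacts with the solution only through its gradient, and that gradient is small because $p_0$ is a slowly varying harmonic function on the annulus. Making this rigorous requires either a discrete Green's-function estimate on the annulus (mildly delicate in $d=2$ due to logarithmic potential theory) or equivalently a martingale argument in which $\log\|W(k)\|$ is corrected by an explicit first-order term tracking the drift of $Z$. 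Balancing the $O(M^2/L^2)$ drift error against the $O(1/M)$ discretization error uniformly across the four scale regimes of $M$ is the technical heart of the lemma, and is presumably why its proof is deferred to the appendix.
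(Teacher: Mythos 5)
Your proof of part (a) is correct and essentially matches the paper's (coupling to a planar SSRW $W$, then Markov's inequality on $\mathbb{E}[\tau_{3/5}(W)]=O(L^{6/5})$ and a union bound on the coupling failures), although the paper outsources the SSRW estimate to Lemmas 10--11 of \cite{LNY15} while you give a self-contained argument via optional stopping on $\|W(k)\|^2-k$; that part is fine.

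For parts (b)--(d), however, the proposal is a strategy sketch with a gap that you yourself flag. The chain $\|\mathcal D p\|_\infty = O(1/L^2)$ requires the gradient bound $|\nabla p| = O(1/M)$ for the \emph{perturbed} hitting probability $p$, and the Green's-function estimate $\|p_1\|_\infty = O(M^2/L^2)$ does not close the loop: $\|p-p_0\|_\infty$ small does not imply $|\nabla(p-p_0)|$ small, and in fact $M^2/L^2 \gg 1/M$ once $M \gg L^{2/3}$, which is exactly the moderate and large gap regimes. To make the bootstrap work one would need interior gradient estimates for discrete harmonic functions with a discrete source on an annulus in $d=2$, which (as you note) is nontrivial because of the logarithmic potential; you only gesture at how to do this. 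The paper avoids the issue entirely by a genuinely different route, following \cite{DSzV08}: it samples $Z$ on a mesoscopic time grid of size $K=L^\xi$, works directly with the approximate martingale $\mathfrak x_j = \log\|Z(s_0+jK)\|^2$, Taylor-expands the increment to second order, and uses a within-block coupling to a SSRW to kill the zeroth-order drift and identify the quadratic variation. This produces the per-block drift bound $\mathbb{E}(\mathfrak x_{j+1}-\mathfrak x_j) = O(L^{3\xi}M^{-3}+L^{\xi-2})$, which summed against a tail bound on the exit time (Lemma~\ref{lemma:wasp}) yields $\mathcal E_{L,M}$ without ever invoking a gradient bound on the hitting probability. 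One further discrepancy: for part (d) the paper does \emph{not} need the refined estimate at all; it uses the crude bound $\|W-Z\|\le C_2M^3/L^2$ over a single excursion, which is enough because a fixed (non-vanishing) bound $<1/100$ suffices, so your claim that (d) also requires the sharp $M^2/L^2$ estimate overstates the difficulty there.
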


As the proof of Lemma \ref{lemma:annuli}
is slightly longer than the other lemmas, we postpone it to the Appendix.
Next, we formulate our key lemma:

\begin{lemma}
\label{lemma:alpha}
For any small $\delta>0$, there exists $\gamma<1$ such that for $L$ large enough,
\[ \mathbb P ( \nexists k: \underline \tau_{\gamma} < k < tL^2:  \| Z(k) \| \leq 2)
 > 1-\delta.
\]
\end{lemma}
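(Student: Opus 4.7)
Following \cite{LNY15}, I would track the one-dimensional \emph{shell walk} $m(k) := \lfloor \log_2 \|Z(k)\|\rfloor$ on dyadic scales, setting $m_0 = \lceil \gamma \log_2 L\rceil$, $m' = \lceil \tfrac{3}{5}\log_2 L\rceil$, and $m_{\max} = \lfloor \log_2 L + \log_2 \varepsilon_0\rfloor$, where $\varepsilon_0$ is the constant from Lemma~\ref{lemma:wasp}. By Lemma~\ref{lemma:annuli}(b)--(d), the cumulative bias $\sum_{m=m'}^{m_{\max}} \mathcal{E}_{L, 2^m}$ is $o(1)$, so in the electrical-network representation of the shell walk on $[m', m_{\max}]$ the resistances $R_i$ satisfy $\log(\max R_i / \min R_i) = o(1)$. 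The underlying 2D random-walk intuition is: starting at distance $L^\gamma$, the hitting probability of a radius-$2$ ball within time $L^2$ is $\sim (1-\gamma)$, which vanishes as $\gamma \to 1$.

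\textbf{Main steps.} Apply formula~\eqref{eq:electric} to the shell walk on $[m', m_{\max}]$ with start $m_0$:
\[
\mathbb{P}\bigl(\text{shell walk hits } m' \text{ before } m_{\max}\bigr) \leq (1+o(1))\frac{m_{\max}-m_0}{m_{\max}-m'} \leq \tfrac{5}{2}(1-\gamma)+o(1),
\]
which can be made smaller than $\delta/3$ by choosing $\gamma$ close to $1$. If the shell walk reaches $m_{\max}$ first, then by Lemma~\ref{lemma:wasp} each subsequent excursion from $m_{\max}$ takes real time $\gtrsim 2^{2m_{\max}} \gtrsim L^2\varepsilon_0^2$, so within $[0, tL^2]$ there are at most $K = O(t/\varepsilon_0^2)$ such excursions. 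Applying \eqref{eq:electric} from $m_{\max}-1$ bounds each excursion's probability of descending all the way to $m'$ by $O(1/\log L)$; a union bound gives an additional contribution of $O(t/(\varepsilon_0^2 \log L)) = o(1)$ to the bad event.

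\textbf{Main obstacle.} The delicate point is the tiny-gap regime $m < m'$, where Lemma~\ref{lemma:annuli} provides only the exit-time bound (a) rather than bias control. Each descent into this regime returns to $m'$ within real time $L^{13/10}$ with probability $1 - O(L^{-1/10})$, so each tiny-gap sojourn is short, and it is preceded by a descent from $[m', m_{\max}]$ whose probability is already $O(1-\gamma)$ by the electrical-network estimate above. Combining (a) with a crude 2D SSRW hitting estimate shows that each tiny-gap excursion has uniformly bounded probability of reaching shell $1$, and the expected number of such excursions within $[0, tL^2]$ is $o(1)$ once one accounts for the time cost of re-entering. Summing all contributions yields a total probability of hitting shell $1$ bounded by $(1-\gamma) + o(1)$, which is less than $\delta$ for $\gamma$ sufficiently close to $1$.
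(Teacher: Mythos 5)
Your shell-walk / electrical-network framework and the $O(1-\gamma)$ bound on the probability of reaching $m'$ before $m_{\max}$ coincide with the paper's argument; the specific exponent ($3/5$ vs.\ the paper's $3/10$) is immaterial. Where you genuinely diverge is in the treatment of what happens after the shell walk first reaches $m_{\max}$. The paper closes the argument by invoking Lemmas~\ref{lemma:subsyst} and~\ref{lemmaN=1,z}: once the particles are separated by a fixed $\varepsilon L$ they move independently, and the scaling limit shows the rescaled difference process stays away from any fixed neighbourhood of the origin over the time horizon $[0,t]$ with probability $1-o(1)$. You instead count real-time excursions from $m_{\max}$, cap their number at $O(t/\varepsilon_0^2)$, and give each a probability $O(1/\log L)$ of descending to $m'$. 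This alternative can be made to work, but it has a gap as written: the needed \emph{lower} bound on excursion durations, $\gtrsim M_{\max}^2$ with probability bounded below, is not Lemma~\ref{lemma:wasp}. That lemma gives only the upper tail $\mathbb P(s_1>n)<C_0\theta^{n/M^2}$; you would need a separate displacement estimate (e.g.\ that with probability bounded away from zero the walk stays within $M/4$ of its starting point for $cM^2$ steps) to deduce the bound on the number of excursions.

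Your ``main obstacle'' paragraph is not needed and in fact points in the wrong direction. Once you know that the shell walk ever reaches $m'$ only with probability $O(1-\gamma)+o(1)$, you are done: hitting $\|Z\|\le 2$ forces passing through level $m'$, since $2\ll L^{3/5}$. Trying to further control the conditional behaviour once the walk is inside the tiny-gap regime does not gain anything --- from distance $L^{3/5}$ the walk reaches radius $2$ before returning to $L^\gamma$ with probability bounded away from zero, of order $(\gamma-3/5)/\gamma$ by the usual logarithmic hitting estimate for planar walks --- so the ``uniformly bounded probability'' claim, while true, cannot be sharpened into a small factor. This is precisely why the paper's proof is phrased as an \emph{a fortiori} observation (``reaches $\varepsilon L$ before $L^{3/10}$'' implies ``reaches $\varepsilon L$ before $2$''), which sidesteps any need to analyse the tiny-gap regime beyond Lemma~\ref{lemma:annuli}(a).
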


\begin{proof}
In order to derive Lemma \ref{lemma:alpha} from Lemma \ref{lemma:annuli}, recall the connection
of random walks and electrical networks from Section \ref{sec:pr}. Using the notation from there, 
we choose $A=\frac{3}{10} \log L$, $B = \log L + \log \varepsilon$ and $I = \gamma \log L$,
$R_{I+1/2} =1$. 
If $R_{i+1/2}$ is defined for some $i \in [I, \log L - \log \log L]$, then let us define
$R_{i+3/2} = R_{i+1/2}(1+ K \log^{-3/2} L )$. If 
$R_{i+1/2}$ is defined for some $i \in [\log L - \log \log L, B-1]$, then let us define 
$R_{i+3/2} = \frac{11}{10} w_{i+1/2}   $. Similarly, if $R_{i+1/2}$ is defined for some $i \in [A,I]$,
then we define $R_{i-1/2} = R_{i+1/2}(1- K L^{-\xi} )$. 
Now by Lemma \ref{lemma:annuli}(b-d)
$$
\mathbb P (\min \{k: \| Z (k+s_0) \| < L^{3/10} \} < \min \{k:  \| Z (k+s_0) \| > \varepsilon L \} 
| \| Z(s_0)\| = L^{\gamma}) 
$$
is bounded from above by (\ref{eq:electric}). An elementary computation shows that (\ref{eq:electric}) can 
be made arbitrarily small by choosing $\gamma $ close to $1$. Thus 
after $\underline \tau_{\gamma}$, $ \| Z(k)\| $ reaches $\varepsilon L$ before reaching $2$ with probability
close to $1$. Finally, Lemmas \ref{lemma:subsyst} and \ref{lemmaN=1,z} yield that for fixed $\varepsilon $
the two particles do not meet after separating by a distance $\varepsilon L$ and before $tL^2$ with 
probability close to $1$.
This proves Lemma \ref{lemma:alpha}.
\end{proof}

As a consequence of Lemma \ref{lemma:alpha}, we will be able to replace $\tilde{\bm Z}_1(s)$ and 
$ \tilde{\bm Z}_2(s)$ with two independent copies after time 
\[ \underline{ \tilde \tau}_{\gamma} = \min \{ s: \| \tilde{\bm Z}_1(s) - \tilde{\bm Z}_2(s) \| > L^{\gamma}\}. \]
Then 
Proposition \ref{lemmaN,y} will easily follow once we establish that (A) $\underline{\tilde \tau}_{\gamma} / L^2$
is negligible and (B)
$(\tilde{\bm Z}_i( \underline \tau_{\gamma}) - xL)/L$ is negligible. This is what we do in the next two lemmas.

\begin{lemma}
\label{lemma:alpha2a}
For any fixed $\gamma <1$ and $\delta >0$, we have
\[
\mathbb P(\underline \tau_{\gamma} < L^{1+\gamma}) > 1-\delta
\]
for $L$ large enough.
\end{lemma}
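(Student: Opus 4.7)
The plan is a two-phase argument. Phase one invokes Lemma \ref{lemma:annuli}(a) to carry $\|Z\|$ from small scales up to $L^{3/5}$ within time $L^{13/10}$; phase two uses an approximate martingale (second moment) estimate to push $\|Z\|$ from $L^{3/5}$ up to $L^\gamma$ within an additional $O(L^{2\gamma})$ steps. Since $\gamma<1$, both $L^{13/10}$ and $L^{2\gamma}$ are $o(L^{1+\gamma})$, so the two phases comfortably fit inside $[0, L^{1+\gamma}]$; it is enough to treat $\gamma>3/5$, the complementary case being an immediate consequence of phase one alone.

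First I would apply Lemma \ref{lemma:annuli}(a) and observe that, regardless of the initial configuration, the stopping time $\sigma := \underline\tau_{3/5}$ satisfies $\mathbb P(\sigma > L^{13/10}) = O(L^{-1/10})$. On the complementary good event we have $\sigma \leq L^{13/10}$ and $\|Z(\sigma)\| \geq L^{3/5}$.

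Next, for $k \geq \sigma$ with $\|Z(k)\| \geq 2$ the two tagged particles move as independent random walks by Lemma \ref{lemma:subsyst}, and Lemma \ref{lemma:coupleRW} then yields
$$
\mathbb E\bigl[\|Z(k+1)\|^2 - \|Z(k)\|^2 \,\bigl|\, \mathcal F_k \bigr] \;=\; 1 \,+\, O\!\left(\frac{\|Z(k)\|^2}{L^2}\right),
$$
where $\mathcal F_k$ is the natural filtration of $\bm Z'$. For $\sigma \leq k \leq \underline\tau_\gamma$ the error term is $O(L^{2\gamma -2})$, so $M(k) := \|Z(k)\|^2 - k$ is an approximate martingale. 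Applying Doob's optional stopping to $M$ at $\tau := \underline\tau_\gamma \wedge T$ with $T := L^{1+\gamma}$, and using that the overshoot satisfies $\|Z(\underline\tau_\gamma)\|^2 \leq L^{2\gamma} + O(L^\gamma)$, one gets
$$
\mathbb E[\tau - \sigma] \;\leq\; L^{2\gamma} + O(L^\gamma) + O\bigl(T\cdot L^{2\gamma-2}\bigr) \;=\; O(L^{2\gamma}) + O(L^{3\gamma-1}).
$$
On the other hand $\mathbb E[\tau - \sigma] \geq (T - L^{13/10})(\mathbb P(\underline\tau_\gamma > T) - O(L^{-1/10}))$ on the good event from phase one, and $T - L^{13/10} \sim L^{1+\gamma}$, which yields $\mathbb P(\underline\tau_\gamma > T) = O(L^{\gamma-1}) + O(L^{-1/10}) = o(1)$, as required.

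The main obstacle will be rigorously controlling the approximate martingale property. Two subtleties need to be addressed. First, when $\|Z(k)\| \leq 1$ the particles interact via the beta-binomial mixing rule rather than moving independently, so the one-step drift identity above fails; however, each visit to $\{\|Z\| \leq 1\}$ contributes only $O(1)$ to $\mathbb E[M(k+1) - M(k)]$, and by standard 2D random walk recurrence estimates the total number of such visits up to time $L^{1+\gamma}$ is $O(\log L)$, which is absorbed by the error. Second, the spatial variation of $R$ induces a drift of order $\|Z(k)\|/L^2$ in $Z$ itself, but its effect on the drift of $\|Z\|^2$ is already included in the $O(\|Z(k)\|^2/L^2)$ error term computed above.
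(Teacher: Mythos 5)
Your approach is genuinely different from the paper's, and essentially correct. The paper proves this lemma by a dyadic decomposition: it tracks $b_i := \log\|Z(s_i)\|$ as an approximate one-dimensional SSRW on the levels $\mathcal C_m$, reflected at $\frac{3}{10}\log L$ and absorbed at $\gamma\log L$ (Lemma \ref{lemma:annuli}(b)), bounds the local time $\ell_m$ at each level via a gambler's-ruin estimate, and then controls the duration of each inter-level excursion with the exponential tail bound of Lemma \ref{lemma:wasp}; this never lets $Z$ return to scale $O(1)$. You instead use a second-moment / optional-stopping argument: $M(k) = \|Z(k)\|^2 - k$ is an approximate martingale with per-step drift $O(\|Z(k)\|^2/L^2)$, and an upper bound on $\mathbb E[\tau - \sigma]$ (via $\mathbb E[\|Z(\tau)\|^2] \leq L^{2\gamma}+O(L^\gamma)$) is played against a lower bound that is linear in $\mathbb P(\underline\tau_\gamma > T)$. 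Both routes work; yours is more elementary in that it avoids the excursion/local-time bookkeeping and Lemma \ref{lemma:wasp}, while the paper's machinery is reused in Lemma \ref{lemma:alpha} and so carries no extra cost there. One point to tighten in your version: after $\sigma=\underline\tau_{3/5}$ the process $Z$ can and typically will return to $\{\|Z\|\leq 1\}$ (with probability bounded away from $0$), where the particles interact and the drift identity fails; you correctly note that the expected number of such visits up to time $L^{1+\gamma}$ should be $O(\log L)$, but this needs an argument for the perturbed walk $Z$ (not literally a SSRW, and not even a Markov chain on its own), e.g.\ coupling to a SSRW on scales up to $L^\gamma$ and invoking the SSRW local-time bound, plus a bounded-holding-time argument at the origin. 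You should also define $\tau$ so that $\tau\geq\sigma$ on the good event $\{\sigma\leq L^{13/10}\}$ (e.g.\ restart at $\sigma$ by the strong Markov property), since otherwise the sign of $\tau-\sigma$ on the bad event muddles the optional-stopping inequality. These are fixable technicalities, not conceptual gaps.
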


\begin{proof}
 By Lemma \ref{lemma:annuli}(a), it is enough to prove 
$\mathbb P(\underline \tau_{\alpha} - \underline \tau_{3/5} > L^{1+\gamma}) < \delta$.
We write $s_0 = \underline \tau_{3/5} $ and if $Z(s_{i}) \in \mathcal C_{m}$ with $m > \frac{3}{10} \log L$, 
then $s_{i+1}$ is the 
first time $s$ when either $Z(s) \in \mathcal C_{m-1}$ or $Z(s) \in \mathcal C_{m+1}$.
If $Z(s_{i}) \in \mathcal C_{\lfloor \frac{3}{10} \log L \rfloor}$, then 
$s_{i+1}$ is the 
first time $s$ when $Z(s) \in \mathcal C_{\lfloor \frac{3}{10} \log L \rfloor +1}$.
By Lemma \ref{lemma:annuli}(b), $b_i := \log \| Z(s_i)\|$ can be approximated by a one dimensional SSRW 
(reflected at $\frac{3}{10} \log L$, absorbed at $\gamma \log L$) with an error
of $O(L^{\xi})$ at each step. 
In particular, if $\bm t$ is the smallest $i$ when $b_i \geq \gamma \log L$, then 
$\mathbb P (\bm t > \log^3 L) < \delta /10$ and 
thus $b_i, i \leq \bm t$ can be coupled to a SSRW with an error $< \delta /5$.
Now if 
$\zeta = (1-\gamma)/2$ and  
$\ell_m = \# \{ i < \bm t : b_i = m\}$, then
\[
\mathbb P (\exists m: \ell_m > L^{\zeta}) \leq \log L \max_m \mathbb P (\ell_m > L^{\zeta}) < \delta/10 
\]  
by the gambler's ruin estimate $\mathbb P(\ell_m > n+1 | \ell_m >n) < 1-\log^{-1} L$.
If $\ell_m < L^{\zeta}$ for all $m$, then
\[
\underline{\tau}_{\alpha} < \sum_{m = \frac{3}{10} \log L}^{\gamma \log L} \sum_{i=1}^{L^{\zeta}}
 \mathcal T_{m,i},
\]
where $\mathcal T_{m,1}$ is the random time $s_1 - s_0$ if $\| Z(s_0)\| = m$ and for fixed $m$, 
$\mathcal T_{m,i}$'s are iid. Consequently, we have 
\begin{eqnarray*}
\mathbb P(\underline \tau_{\alpha} > L^{1+\gamma}) < \frac{3\delta}{10} + 
(\log L ){L^{\zeta}} \max_m
\mathbb P \left( 
 \mathcal T_{m,1}> L^{1+\gamma - \zeta} \log^{-1} L 
\right) < \delta
\end{eqnarray*}
by Lemma \ref{lemma:wasp} and Lemma \ref{lemma:annuli}(a).
\end{proof}

Now, with the notation
\[
\tilde{\overline \tau}_{\gamma} = \tilde {\overline \tau}_{\gamma}( L)= 
\min \{ k: \| \tilde{\bm Z}_1(k) - xL \| > L^{\gamma} \text{ or }
\| \tilde{\bm Z}_2(k) - xL \| > L^{\gamma }\}
\]
we have

\begin{lemma}
\label{lemma:alpha2}
For any fixed $\gamma <1$ and $\delta >0$, we have
\[
\mathbb P(  L^{1+ \gamma} < \tilde{ \overline \tau}_{\frac{\gamma + 3}{4}}  ) > 1- \delta
\]
for $L$ large enough.
\end{lemma}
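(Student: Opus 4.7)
The plan is to reduce the claim to a one-particle displacement estimate for each $i=1,2$ and apply a standard martingale concentration bound. By Lemma \ref{lemma:subsyst}, the marginal process $\tilde{\bm Z}_i$ is itself a Markov process---a continuous-time random walk on $\mathbb{Z}^d$ whose clock along each edge $(u,v)$ rings at rate $r_{(u,v)} = R((u+v)/(2L))$ and, when it rings, moves the particle to a uniformly chosen endpoint (since $\omega$ is constant in Theorem \ref{thm1}). Because $R \in \mathcal{C}^2$ is bounded above and below by positive constants on a compact set containing $\mathcal{D}$, so are the jump rates. Since each $\tilde{\bm Z}_i$ starts at $\langle xL \rangle$, which differs from $xL$ by $O(1)$, a union bound reduces Lemma \ref{lemma:alpha2} to proving, for each $i$,
\[
\mathbb{P}\!\left(\sup_{0 \le s \le L^{1+\gamma}} \|\tilde{\bm Z}_i(s) - \langle xL \rangle\| > L^{(\gamma+3)/4}\right) < \delta/2.
\]

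First I would pass to the embedded discrete-time chain. Poisson concentration gives that the number of jumps of $\tilde{\bm Z}_i$ during $[0, L^{1+\gamma}]$ is at most $n := CL^{1+\gamma}$ with probability at least $1 - \delta/4$. Let $W_k$ denote the position after the $k$-th jump and decompose $W_k - W_0 = M_k + D_k$, where $M_k$ is a martingale and $D_k := \sum_{j=1}^k \mathbb{E}[W_j - W_{j-1} \mid W_{j-1}]$ is the accumulated conditional drift. Because $R \in \mathcal{C}^2$, the one-step conditional mean of $W_j - W_{j-1}$ is $O(1/L)$, hence $\|D_n\| = O(L^{\gamma})$ deterministically; this is negligible compared with $L^{(\gamma+3)/4}$ since $\gamma < (\gamma+3)/4$ whenever $\gamma < 1$. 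For the martingale part, $\|W_j - W_{j-1}\|_\infty \le 1$, so the maximal Azuma--Hoeffding inequality (applied coordinate by coordinate) yields
\[
\mathbb{P}\!\left(\max_{k \le n} \|M_k\| > \tfrac{1}{2} L^{(\gamma+3)/4}\right) \le C \exp\!\left(-c\, \frac{L^{(\gamma+3)/2}}{n}\right) = C\exp\!\left(-c L^{(1-\gamma)/2}\right),
\]
which tends to $0$ since $(\gamma+3)/2 - (1+\gamma) = (1-\gamma)/2 > 0$.

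Combining the Poisson, drift, and martingale estimates yields the displayed bound and hence the lemma. The main obstacle is really only careful bookkeeping: confirming via Lemma \ref{lemma:subsyst} that the marginal process is a random walk with bounded rates, and tracking that the concentration exponent beats the target scale. In contrast with Lemmas \ref{lemma:alpha} and \ref{lemma:alpha2a}, no interaction between the two walkers enters here; the whole argument rests on the strict inequality $(\gamma+3)/4 - (1+\gamma)/2 = (1-\gamma)/4 > 0$, which places the target exponent safely above the typical diffusive displacement $L^{(1+\gamma)/2}$ accumulated over time $L^{1+\gamma}$.
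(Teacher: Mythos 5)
Your proof is correct, and the first step---invoking Lemma \ref{lemma:subsyst} to reduce to a single particle---is exactly the paper's. Where you diverge is the displacement bound for that one particle. The paper appeals to a ``simplified version of Lemma \ref{lemmaN=1,z}'': since the diffusively rescaled single-particle process converges weakly (in the supremum metric) to a continuous limit, the running maximum is a continuous functional and gives $\max_{s < L^{1+\gamma}} \|\tilde{\bm Z}_1(s) - xL\| < K L^{(1+\gamma)/2}$ with probability $1-\delta$, which beats $L^{(\gamma+3)/4}$ because $(1+\gamma)/2 < (\gamma+3)/4$ for $\gamma<1$. You instead prove the needed tail bound from scratch by Poisson concentration for the jump count plus a drift/martingale decomposition with Azuma--Hoeffding, getting the quantitative estimate $C\exp(-cL^{(1-\gamma)/2})$. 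Your route is more elementary and self-contained (it does not rely on the invariance principle machinery already built), and it has the minor advantage of producing an explicit exponential rate; the paper's route is shorter because it reuses Lemma \ref{lemmaN=1,z}, which has to be established anyway. Both hinge on the same arithmetic $(\gamma+3)/4 > (1+\gamma)/2 > \gamma$, so the substance is the same and the differences are ones of technique.
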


\begin{proof}
By Lemma \ref{lemma:subsyst}, it is enough to consider the case of one particle.
A simplified version of 
Lemma \ref{lemmaN=1,z} yields that 
\[ \max_{s < L^{1+\gamma} } \| \tilde{\bm Z}_1 (s) -xL \| < K L^{\frac{1+\gamma}{2}}
\]
with probability $1 - \delta $ for some $K$ and $L$ large.
\end{proof}

The variant of Proposition \ref{lemmaN,y} (explained in the beginning of Section \ref{sec6})
easily follows from Lemmas \ref{lemma:alpha}, \ref{lemma:alpha2a}
and \ref{lemma:alpha2}. Since we can neglect an event of
small probability, we can assume that all events hold in 
Lemmas \ref{lemma:alpha}, \ref{lemma:alpha2a} and \ref{lemma:alpha2}. 
Note that by definition,
$\tilde{\bm Z}_1(s)$ and  $ \tilde{\bm Z}_2(s)$ move independently if
their distance is bigger than $2$. Thus for $s > \underline{ \tilde \tau}_{\gamma}$, we can replace 
$(\tilde{\bm Z}_1( \underline{ \tilde \tau}_{\gamma} +k))_{k=1,...,TL^2}$ 
and $( \tilde{\bm Z}_2(\underline{\tilde \tau}_{\gamma} + k))_{k=1,...,TL^2}$ 
with independent random walks, both of them converging
to $\mathcal Z (s)$ under the proper scaling. Finally, by Lemmas \ref{lemma:alpha2a}
and \ref{lemma:alpha2},
$\underline \tau_{\gamma}/L^2 < \delta$ and 
$(\tilde{\bm Z}_i( \underline{\tilde \tau}_{\gamma}) - x)/L < \delta$. Proposition 
\ref{lemmaN,y} follows in the case of Theorem \ref{thm1}, $N=2$, $d=2$.

\subsection{Completing the proof of Proposition \ref{lemmaN,y}}
\label{subsec:6.3}
The case of general $N$
follows from Lemma \ref{lemma:subsyst} and from the case $N=2$. Indeed, Lemmas \ref{lemma:subsyst},
\ref{lemma:alpha} and \ref{lemma:alpha2a}
imply that for some $\gamma < 1$,
\[
\mathbb P (\exists s \in [L^{1+\gamma}, tL^2], i,j \in \{ 1,2,...,N\}: 
\| {\tilde{\bm Z}}_i(s) - {\tilde{\bm Z}}_j(s)\| \leq 2 ) < \delta.
\]
Furthermore, we have the analogue of Lemma \ref{lemma:alpha2} with 
$\tilde{\overline \tau}_{\alpha} $ replaced by
\[
\min \{ k: \exists i < N: \| \tilde{\bm Z}_i(k) - xL \| > L^{\alpha} \}.
\]
Whence the case of general $N$ follows the same way as before.\\

The case of dimension $d > 2$ is simpler than $d=2$ as the particles only
meet finitely many times.
\begin{lemma}
\label{lemma:d>2}
In case of Theorem \ref{thm1}, $d>2$, $N=2$, there is some positive $p$ such that 
$\| {\tilde{\bm Z}}_1(s) - {\tilde{\bm Z}}_2(s)\| \geq 2$ for all $k \in [2, tL^2]$ with probability at least $p$.
\end{lemma}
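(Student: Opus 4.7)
\emph{Plan.} The statement is the transience counterpart of the recurrent case $d=2$ just treated: in dimension $\geq 3$ two independent random walks meet only finitely often, so once the two particles have separated they should stay apart with positive probability. I would split the proof into an initial separation step and a transience step, both uniform in $L$.

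\emph{Step 1 (initial separation).} Setting $Z(k) = \tilde{\bm Z}_2(k) - \tilde{\bm Z}_1(k)$, I would first show that $\mathbb P(\|Z(2)\| \geq 3) \geq p_1 > 0$ uniformly in $L$. The two particles start within distance $O(1)$ of each other, and the edge rates $r_{(u,v)} = R((u+v)/(2L))$ are bounded above and below by positive constants uniformly in $L$ (since $R\in \mathcal C^2$ is extended to be bounded on $\mathbb R^d$). One can then exhibit an explicit finite sequence of compatible edge-clock ringings in $[0,2]$ (three suffice) that leaves the two particles at distance $\geq 3$, and impose that no other clocks ring in $[0,2]$; this prescribed event has probability bounded below uniformly in $L$.

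\emph{Step 2 (transience).} Once $\|Z(k)\|\geq 2$, the generator $A_1'$ only couples particles living at adjacent sites, so $\tilde{\bm Z}_1$ and $\tilde{\bm Z}_2$ evolve as \emph{independent} continuous-time random walks on $\mathbb Z^d$ with symmetric, uniformly elliptic jump rates. Standard Nash--Aronson-type heat kernel estimates give $\max_z \mathbb P(\tilde{\bm Z}_i(k)=z) = O(k^{-d/2})$ uniformly in the starting position and in $L$, hence
\[
\sum_{k\geq 2} \mathbb P(\tilde{\bm Z}_1(k) = \tilde{\bm Z}_2(k))
\leq \sum_{k\geq 2}\max_z \mathbb P(\tilde{\bm Z}_2(k)=z) = O\Bigl(\sum_{k\geq 2} k^{-d/2}\Bigr) < \infty
\]
for $d\geq 3$. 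Green's function estimates for the difference process then yield that, starting $Z$ from a point with $\|Z\|=M$, the expected number of returns of $Z$ to $\{\|z\|<2\}$ is $O(M^{2-d})$, which is $\leq 1/2$ for a large enough fixed $M$ (independent of $L$). A separate standard hitting-probability estimate for random walks with uniformly elliptic rates shows that, from $\|Z(2)\|\geq 3$, $Z$ reaches the sphere of radius $M$ before entering $\{\|z\|<2\}$ with probability bounded below by some $q>0$, uniformly in $L$. Combining the two via the strong Markov property at the hitting time of $\{\|Z\|=M\}$ gives a uniform lower bound $p_2>0$ on the conditional probability of no further collision after time $2$.

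\emph{Conclusion and main obstacle.} Combining Steps 1 and 2 yields $p\geq p_1 p_2 > 0$ uniformly in $L$, which is stronger than what is needed on the finite horizon $[2,tL^2]$. The main technical point is to justify the heat kernel and Green's function bounds uniformly in $L$ for the inhomogeneous walks $\tilde{\bm Z}_i$. This should be routine since the one-step transition kernel is an $O(1/L)$ perturbation of the SSRW kernel with uniformly elliptic rates, so one can either invoke standard results on symmetric Markov chains on $\mathbb Z^d$ or perform a direct coupling with the SSRW followed by a perturbation argument analogous to the use of Lemma \ref{lemma:coupleRW} in Section \ref{sec6.1}.
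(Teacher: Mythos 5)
Your overall strategy is sound and genuinely different from the paper's. The paper does not go through heat kernel or Green's function estimates at all: after noting that the difference process $Z(k)=\tilde{\bm Z}_2'(k)-\tilde{\bm Z}_1'(k)$ is neither Markov nor translation invariant, it introduces an auxiliary random walk $B$ on $\mathbb Z^d$ with explicit conductances $w_{(u,v)}=(1-c_1\varepsilon/L)^{|v|_1}$ biased toward the origin, proves a coordinate-wise stochastic domination coupling $|B_i(k)|\leq|Z_i(k)|$ (Lemma~\ref{sublemma}, a case analysis over which coordinates of $B$ vanish), and then shows that $B$ escapes to radius $\varepsilon L$ before returning to radius $1$ with positive probability via the electrical-network criterion. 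Your route replaces this hand-built comparison walk by off-the-shelf Nash--Aronson heat kernel bounds for uniformly elliptic reversible walks plus a transience/Green's function estimate. That trade is reasonable: the paper's argument is elementary and self-contained but the coupling lemma is intricate; yours is shorter modulo the heavy machinery being cited correctly and verified to hold uniformly in $L$ (this is fine since $r_{(u,v)}=R((u+v)/2L)$ is symmetric and uniformly elliptic near $xL$).

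One place where you should be more careful, and where the paper is explicitly careful, is that $Z$ itself is not a Markov process. Phrases such as ``Green's function estimates for the difference process'' and ``a standard hitting-probability estimate shows that $Z$ reaches the sphere of radius $M$'' are formally meaningless for $Z$; they must be reformulated for the genuinely Markovian pair $(\tilde{\bm Z}_1,\tilde{\bm Z}_2)$ on $(\mathbb Z^d)^2$, with the target being the diagonal tube $\{\|z_1-z_2\|<2\}$, which is unbounded. Your expected-number-of-collisions bound does go through this way (independence gives $\mathbb P(\|Z(k)\|<2)\leq C\max_z\mathbb P(\tilde{\bm Z}_2(k)=z)$ and the tail sum is $O(M^{2-d})$ once the two walks are at separation $M$, using on-diagonal decay for $k\gtrsim M^2$ and Gaussian off-diagonal decay for $k\lesssim M^2$), and the hitting estimate from radius $3$ to a fixed radius $M$ is just a finite-step positivity bound; but these steps should be carried out for the pair process. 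Once that bookkeeping is fixed, your argument gives the lemma. A minor additional point: since $R$ is only extended to be \emph{bounded} on $\mathbb R^d$ (not bounded below away from $\mathcal D$), the uniform ellipticity you invoke should be stated locally near $xL$, which suffices because the whole argument is local there.
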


\begin{proof}
Consider the process $Z(k) = {\bm Z}'_2(k) - {\bm Z}'_1(k)$ as in Section \ref{sec6.1}.
Let us fix some small $\varepsilon>0$. We will show that the probability of the event 
$ \{ \| Z \|$ reaches $\varepsilon L$ before reaching $1 \}$ is bounded away from zero. From 
this the lemma will follow by the same argument as in $d=2$ (cf. 
the end of the proof of Lemma \ref{lemma:alpha}).

Similarly to Lemma \ref{lemma:coupleRW}, we have 
\begin{equation}
\label{eq:d>2}
\frac{1}{2d} - \frac{c_0 \varepsilon}{ L} < \mathbb P \left( 
Z(k+1) - Z(k) = e 
\right) <
\frac{1}{2d} + \frac{c_0 \varepsilon}{ L}
\end{equation}
assuming $\| Z(k)\| \leq \varepsilon L$
for all unit vectors $e \in \mathbb Z^d$ and $L$ large enough.
Now we define a random walk $B(k)$ on $\mathbb Z^d$ with weights. 
Specifically $B(3) = Z(3)$ and we choose the weights
\[
w_{(u,v)} = \left( 1-\frac{ c_1 \varepsilon}{L} \right)^l, 
\text{ where $|u|_1 = l-1$, $|v|_1 = l$ and $c_1 \gg c_0$ is a fixed constant.} 
\]
Clearly $\| Z(3) \| \geq 2$ holds with some positive probability. 
Let us write $\bm t_{B,l} = \min \{ k > 3: | B(k)|_1 = l \}$, where $\min \emptyset = \infty$ and similarly
$\bm t_{ Z,l} = \min \{ k > 3: \| Z(k)\| = l \}$. 
Now we claim that 
\begin{lemma}
\label{sublemma}
Assuming $c_1 = c_1(c_0)$ is large enough, there exists a coupling
between the processes $B$ and $Z$ such that
$|B_i(k)| \leq |Z_i(k)|$ holds 
for all $k \in [3, \bm t_{B,1}  \wedge \bm t_{Z,\varepsilon L})$ and $i \leq d$ almost surely. 
\end{lemma}
By Lemma \ref{sublemma}, it suffices to prove that 
\begin{equation}
\label{eq:d>2,3}
\mathbb P(\bm t_{B, \varepsilon L} < \bm t_{B,1}) \text{ is bounded away from zero.}
\end{equation}
This follows from a simple application of the connection between random walks and
electrical networks. 
Note that the weights $w_{(u,v)}$ are bounded away from zero
in the $\varepsilon L$ neighborhood of the origin. In such cases, (\ref{eq:d>2,3}) follows from a standard
argument, see e.g. the proof of Theorem 19.30 in \cite{K14}. In order to complete the proof of the Lemma
\ref{lemma:d>2}, it only remains to prove Lemma \ref{sublemma}.
\end{proof}

\begin{proof}[Proof of Lemma \ref{sublemma}]
We prove by induction on $k$. Assume $|B_i(k)| \leq |Z_i(k)|$ for all $i \leq d$.

{\bf Case 1}: none of the coordinates of $B(k)$ and $Z(k)$ are zero.\\
If $|B_i(k+1)| = |B_i(k)|+1$, then we define $Z(k+1)$ by $|Z_i(k+1)| = |Z_i(k)|+1$. 
We can do so by (\ref{eq:d>2}) and by the definition of $B$: if $c_1$ is large enough, then
\begin{equation}
\label{eq:sub1}
\mathbb P( |B_i(k+1)| = |B_i(k)|+1 ) \leq \mathbb P( |Z_i(k+1)| = |Z_i(k)|+1 ).
\end{equation}
Similarly, if $|Z_i(k+1)| = |Z_i(k)|-1$, then we define $B(k+1)$ by $|B_i(k+1)| = |B_i(k)|-1$.
We can do so as  similarly to (\ref{eq:sub1}), we have
\[
\mathbb P( |Z_i(k+1)| = |Z_i(k)|-1 ) \leq \mathbb P( |B_i(k+1)| = |B_i(k)|+1 ).
\]
The coupling is arbitrary on the remaining set (sometimes we may have to move $B$ and $Z$ in 
different directions to match the probabilities, i.e. to define a proper coupling). 
This
proves the inductive step for the case when none of the coordinates of $B(k)$ and $Z(k)$ are zero. 

{\bf Case 2:}
none of the coordinates of $B(k)$ are zero or
for all $i$ with $B_i(k) = 0$, $Z_i(k) =0$ also holds.\\
The same argument works as in Case 1.

Note that the coupling of Case 1 will not work if
$B_i(k) =0$ and $Z_i(k) \neq 0$ as $\mathbb P( |B_i(k+1)| = |B_i(k)|+1 ) \approx d^{-1}$ 
while $\mathbb P( |Z_i(k+1)| = |Z_i(k)|+1 ) \approx (2d)^{-1}$.
Let $\mathcal I$ denote
the set of indices $i$ with $B_i(k) = 0, Z_i(k) \neq 0$ and let $\mathcal I' \subset \mathcal I$
be the set of indices $i$ with $B_i(k) = 0$ and $|Z_i(k)|=1$.

{\bf Case 3:} $|Z_i(k)| \geq 2$ for all $i \in \mathcal I$. \\
If $|B_i(k+1)| = |B_i(k)|+1$
with some $i \in \mathcal I$, then we can define
$Z(k+1)$ by changing the $i$th coordinate (either increasing or decreasing), 
since we have
\begin{eqnarray}
&&\mathbb P( B(k+1) = B(k)+e_i \text{ or }  B(k)-e_i) \nonumber \\
&\leq& \mathbb P( Z(k+1) = Z(k)+e_i \text{ or }  Z(k)-e_i). \label{eq:sub2}
\end{eqnarray}
The proof of (\ref{eq:sub2}) is similar to that of
(\ref{eq:sub1}): the first line of (\ref{eq:sub2}) is
\[
\frac{w_{(B(k),B(k)+e_i)} + w_{(B(k),B(k)-e_i)}}{\sum_{v: \| v-B(k)\| =1} w_{(B(k),v)}}.
\]
Here the denominator can be bounded from below by  
\[
{\left( 1- \frac{c_1 \varepsilon}{L} \right)^{|B(k)|_1} \left[1+(2d-1)\left(1- \frac{c_1 \varepsilon}{L} 
\right) \right]},
\]
which corresponds to the case when only one coordinate is nonzero
(note that we have excluded $B(k)=0$ since $k < \bm t_{B,1}$). 
Combining this estimate with (\ref{eq:d>2}) gives (\ref{eq:sub2}) assuming that
 $c_1=c_1(c_0)$ is large enough.
The other coordinates are treated the same way as in Case 1.

{\bf Case 4:} $|\mathcal I'| \neq 0$ is an even integer.

Consider a perfect matching of $\mathcal I'$. Let $(i,j) \in \mathcal I'^2$ be an arbitrary pair.
If $Z_i(k+1) = 0$, then we define $B(k+1)$ such that $|B_j(k+1)| = 1$. We can do so, since 
$\mathbb P (Z_i(k+1) = 0) \approx (2d)^{-1}$ and $\mathbb P (|B_j(k+1) |= 1) \approx d^{-1}$.
Analogously, if $Z_j(k+1) = 0$, then we define $B(k+1)$ such that $|B_i(k+1)| = 1$. 
Then,
we do the coupling of movements in other directions as discussed in cases 1-3. Finally,
we couple the remaining set (including $Z_i(k+1) = 2$, $Z_j(k+1) = 2$) arbitrarily.

{\bf Case 5:} $|\mathcal I'| =n$ is an odd integer.

First we consider a matching of $n-1$ elements of $\mathcal I'$, and do the coupling described in Case 4.
Let us denote the remaining index by $i$.
Now we claim that there is some $j \notin \mathcal I'$ such that $|Z_j(k)| -| B_j(k)| \geq 1$. Indeed, if there
was no such $j$, then $\sum_{m=1}^d |Z_m(k)| -| B_m(k)| = n$ would be an odd number, 
which is a contradiction with 
the fact that $Z(0) = B(0)$ and both processes move to nearest neighbors at each step.
Now if $|Z_j(k+1)| = |Z_j(k)|-1$, then we define $B(k+1)$ such that $|B_i(k+1)| = 1$. 
If $Z_i(k+1) = 0$, we define $B(k+1)$ by changing the $j$th coordinate (either decreasing or increasing).
These can be done as before.
Then, we consider the cases corresponding to other coordinates as discussed in cases 1-3. Finally,
we couple the remaining set arbitrarily.

\end{proof}

Using Lemma \ref{lemma:d>2}, one can easily prove a much simplified version of 
Lemmas \ref{lemma:alpha}, \ref{lemma:alpha2a}
and \ref{lemma:alpha2} with $L^{\gamma}, L^{\frac{1+\gamma}{2}},L^{\frac{3+\gamma}{4}}$
replaced by constants $K_1(\delta),K_2(\delta),K_3(\delta)$
This implies Proposition \ref{lemmaN,y} for the case $d>2$, $N=2$. 
Then the case of general $N$ follows the same
way as in $d=2$.

Finally, in case of 
Theorem \ref{thm2} we have assumed that $x_1 \neq 0$. Then choosing $\varepsilon < |x_1|$, 
the proof of the case of Theorem \ref{thm1} (with the choice $R$ is the constant $1$ function) applies.


\section{Proof of Theorem \ref{thm3} and Proposition \ref{prop:randomomega}}
\label{sec:thm3}

As in the proof of Theorems \ref{thm1} and \ref{thm2}, we prove the weak convergence
by showing that the moments converge. The latter one is showed by switching to the dual process.
Recalling some notation from Section \ref{sec:5},
we define $\bm{ Y}'$ from 
$\tilde{\bm Y} (t)$ the same way as we defined
$\bm{ Z}'$ from 
$\tilde{\bm Z} (t)$.
The proof consists of two parts. First we consider  
the case when the number of particles is $N=2$  and prove
\begin{proposition}
\label{lemma:d=1}
\[ \lim_{L \rightarrow \infty}\mathbb E \left[ T \left( \frac{{\bm Y}_1^{'(L)}(\infty)}{L} \right) \
T \left( \frac{{\bm Y}_2^{'(L)}(\infty)}{L} \right) 
\right] / \left[ u^{(L)}(x) \right]^2 =1
\]
\end{proposition}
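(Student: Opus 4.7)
My plan is to reduce the statement, via the one-particle hitting function and Dynkin's formula, to a quadratic-cancellation calculation that controls the joint expectation in terms of the product of marginals up to an $o(1)$ error. Let $g(m):=\mathbb P(\text{a single dual particle started at }m\text{ is absorbed at }L)$; by Lemma~\ref{lemma:subsyst} this coincides with the one-particle exit probability, and the electric-network identity \eqref{eq:electric} applied to the symmetric conductances $w_{m+1/2}\propto\omega_m\omega_{m+1}r_{m+1/2}/(\omega_m+\omega_{m+1})$ (which give resistances proportional to $\psi(m)$) yields $g(\lfloor xL\rfloor)=\mathcal A^{(L)}(x)$ together with the discrete Lipschitz bound $g(m+1)-g(m)=O(L^{-1})$, the latter using that $r,\omega$ are bounded away from $0$ and $\infty$. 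Since each particle ends at $\{0,L\}$ we may write $T({\bm Y}_i^{'(L)}(\infty)/L)=T(0)+g_i(T(1)-T(0))$ with $g_i:=\mathbf 1\{{\bm Y}_i^{'(L)}(\infty)=L\}$, and expanding the product reduces the proposition to showing $\mathbb E[g_1g_2]=\mathcal A^{(L)}(x)^2+o(1)$.

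For the joint quantity I apply Dynkin's formula to $(m,m')\mapsto g(m)g(m')$ along the labeled two-particle process, obtaining
\[
\mathbb E[g_1g_2]=g(u_0)g(v_0)+\mathbb E\!\int_0^{\infty}\!A(g\otimes g)(\tilde{\bm Y}(s))\,ds,
\]
where $A$ is the two-particle generator and $u_0,v_0$ are the ($O(1)$-perturbed) initial sites. When the particles occupy non-adjacent sites, only clocks involving one of them can ring, so $A$ acts as a sum of one-particle generators each annihilating $g\otimes g$; the integrand is therefore supported on the interaction set $\{|m-m'|\le 1\}$. A direct computation there, using the explicit beta-binomial probabilities and the one-particle harmonicity of $g$, shows that the terms linear in $g(m)$ cancel (the identity $p_{01}+2p_{11}=2a/(a+b)$ says the beta-binomial preserves the expected number of particles moving left, matching the independent dynamics), leaving only
\[
A(g\otimes g)(m,m)=r_{m-1/2}\,p_{11}\,(g(m-1)-g(m))^2+r_{m+1/2}\,p'_{11}\,(g(m+1)-g(m))^2=O(L^{-2}),
\]
with analogous $O(L^{-2})$ bounds at the adjacent states $(m,m\pm 1)$.

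To convert this pointwise estimate into the conclusion I bound the expected occupation time of the interaction set by $O(L)$. On its complement, the difference $\tilde{\bm Y}_1(s)-\tilde{\bm Y}_2(s)$ evolves as a symmetric process with bounded step variances on $[-L,L]$, and a standard local-time estimate for one-dimensional walks gives that its expected time at $\{-1,0,1\}$ prior to absorption is $O(L)$. Combined with the pointwise bound, the Dynkin integral is $O(L^{-1})$, so $\mathbb E[g_1g_2]=g(u_0)g(v_0)+O(L^{-1})$; the Lipschitz bound then gives $g(u_0)g(v_0)=\mathcal A^{(L)}(x)^2+O(L^{-1})$, and plugging back into the expansion from the first paragraph completes the proof.

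The main obstacle is the quadratic cancellation $A(g\otimes g)=O(L^{-2})$ on the diagonal. A naive expansion produces terms of order $\Delta g=O(L^{-1})$ which, when paired with the $O(L)$ local time, would yield only the useless $O(1)$ bound. Only by exploiting both the harmonicity of $g$ and the specific identity $p_{01}+2p_{11}=2a/(a+b)$ for the beta-binomial weights---which asserts that the \emph{mean} reallocation agrees with the independent dynamics---do the first-order corrections vanish, leaving the squared discrete gradient of $g$ as the leading error.
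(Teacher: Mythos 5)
Your computation of the interaction-term cancellation is correct, and it is the same algebraic fact that the paper encodes in \eqref{eq:mgale} and then exploits in Lemma~\ref{lemma:subm}: expanding the beta-binomial reallocation to second order, the linear terms vanish because (in your notation) $p_{01}+2p_{11}=2a/(a+b)$ is just the statement that the mean reallocation matches the single-particle drift, which is precisely what makes $\underline\Phi({\bm Y}'_i(k))$ a martingale. So the estimate $A(g\otimes g)=O(\psi^2/\underline\Phi(L)^2)=O(L^{-2})$ on the interaction set is right, including that its sign is opposite on the diagonal and on adjacent states. Organizationally you take a different route than the paper: you hit $P^{(L)}(L,L)=\mathbb E[g_1g_2]$ directly via Dynkin's formula for $g\otimes g$ and an occupation-time bound, whereas the paper symmetrizes into $S_k=\underline\Phi\underline\Phi+\overline\Phi\overline\Phi$, produces the sub/supermartingale pair $S_k\pm CT_k$ (Lemma~\ref{lemma:subm}), obtains $P^{(L)}(L,L)+P^{(L)}(0,0)$ by optional stopping (Lemma~\ref{lemma:submartingale}), and then extracts $P^{(L)}(L,L)$ by the exchangeability identity coming from Lemma~\ref{lemma:subsyst}. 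Your version avoids the last exchangeability step and in principle gives a cleaner error term $O(L^{-1})$.

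There is, however, a genuine gap in the step you call ``a standard local-time estimate for one-dimensional walks.'' The difference ${\bm Y}'_1(k)-{\bm Y}'_2(k)$ is in general neither a Markov process nor a process with symmetric increments: away from the interaction set its one-step law depends on both particles' absolute positions through the arbitrary (albeit bounded) $r,\omega$, and there is no a priori reason a generic nearest-neighbor chain in such an environment spends only $O(L)$ steps near the diagonal. What makes the $O(L)$ occupation-time bound true is exactly the harmonic coordinate $T_k=|\underline\Phi({\bm Y}'_1(k))-\underline\Phi({\bm Y}'_2(k))|$: it is a bounded martingale off the interaction set (by \eqref{eq:mgale}), its drift on the interaction set is bounded below by a positive constant (the computation $\mathbb E(T_{k+1}-T_k\,|\,{\bm Y}'_1={\bm Y}'_2)\ge c$ and $\mathbb E(T_{k+1}-T_k\,|\,|{\bm Y}'_1-{\bm Y}'_2|=1)\ge c$ in Lemma~\ref{lemma:subm}), and $T_\infty\le\underline\Phi(L)=O(L)$, so optional stopping gives $\mathbb E[\#\{k: |{\bm Y}'_1(k)-{\bm Y}'_2(k)|\le 1\}]=O(L)$. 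In other words, to justify the occupation-time bound you effectively have to prove the $T_k$-part of Lemma~\ref{lemma:subm}; once you do, your Dynkin argument closes. You should therefore replace the appeal to a ``standard'' local-time estimate with this martingale argument, after which your proof is correct and is a legitimate alternative to the paper's.
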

Then we can derive an extension of Proposition \ref{lemma:d=1} to arbitrary $N$.
The idea of the proof of Proposition \ref{lemma:d=1} is borrowed from \cite{RY07} (the main difference
is in Lemma \ref{lemma:submartingale}). The extension to arbitrary $N$ is very similar to the 
argument in \cite{KMP82, LNY15}.

Let us write $P^{(L)}(A_1) = \mathbb P ({\bm Y}_1^{'(L)}  (\infty)= A_1)$ and
\[
P^{(L)}(A_1, A_2) = \mathbb P ({\bm Y}_1^{'(L)}  (\infty) = A_1, {\bm Y}_2^{'(L)}  (\infty)= A_2)
\]
for $A_1, A_2 \in \{ 0,L\}$.
The asymptotic hitting probabilities of one particle are given by

\begin{lemma}
\label{lemma:martingale}

 $\lim_{L \rightarrow \infty} \frac{P^{(L)}(L) }{\mathcal A^{(L)} (x)} = 1.$

\end{lemma}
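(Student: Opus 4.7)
The plan is to identify the single-particle discrete-time dual chain as a reversible nearest-neighbor random walk on $\{0,1,\ldots,L\}$ absorbed at $0$ and $L$, and then to invoke the electrical-network formula \eqref{eq:electric}. First I would read off the transition rates directly from the generator $A$. For a single particle sitting at an interior site $m$ with $2\le m\le L-2$, ringing the clock on the edge $(m-1,m)$ at rate $r_{m-1/2}$ triggers a Beta-Binomial redistribution of one ball, which moves the particle to $m-1$ with probability $\mathbb E[p]=\omega_{m-1}/(\omega_{m-1}+\omega_m)$, where $p\sim Beta(\omega_{m-1}/2,\omega_m/2)$; the right-neighbour rate is analogous. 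At the near-boundary sites $m=1$ and $m=L-1$, the $A_2$-part of the generator adds deterministic absorption into $0$ (respectively $L$) with rate $r_{1/2}$ (respectively $r_{L-1/2}$).

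The second step is to check detailed balance. A one-line calculation shows that the ratio of left-jump to right-jump rates at every interior site equals $c_{m-1,m}/c_{m,m+1}$ provided one sets
\[
c_{m,m+1}=\frac{r_{m+1/2}\,\omega_m\,\omega_{m+1}}{\omega_m+\omega_{m+1}}\quad(1\le m\le L-2),\qquad c_{0,1}=r_{1/2}\,\omega_1,\qquad c_{L-1,L}=r_{L-1/2}\,\omega_{L-1},
\]
the boundary values being forced by $c_{0,1}/c_{1,2}=r_{1/2}(\omega_1+\omega_2)/(r_{3/2}\omega_2)$ and similarly at the right end. The corresponding edge resistances $R_{m+1/2}:=1/c_{m,m+1}$ then satisfy $R_{m+1/2}=\psi(m+1)$ for each bulk edge $1\le m\le L-2$, while the two boundary edges satisfy $R_{1/2}=\psi(1)/2$ and $R_{L-1/2}=\psi(L)/2$. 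The factor-of-two mismatch at the boundaries is an artefact of the convention $\omega_0=\omega_1$, $\omega_L=\omega_{L-1}$ built into $\psi$, and it will be absorbed in a negligible correction below.

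The final step is to plug into \eqref{eq:electric} with $A=0$, $B=L$ and $I=\langle xL\rangle$, which yields
\[
P^{(L)}(L)\;=\;\frac{\sum_{m=0}^{\langle xL\rangle-1}R_{m+1/2}}{\sum_{m=0}^{L-1}R_{m+1/2}}\;=\;\frac{\sum_{m=1}^{\langle xL\rangle}\psi(m)-\psi(1)/2}{\sum_{m=1}^{L}\psi(m)-\psi(1)/2-\psi(L)/2}.
\]
Since $r$ and $\omega$ are bounded away from $0$ and $\infty$ uniformly in $L$, every $\psi(m)$ is $\Theta(1)$, so both $\sum_{m=1}^{\lfloor xL\rfloor}\psi(m)$ and $\sum_{m=1}^{L}\psi(m)$ are $\Theta(L)$ for $x\in(0,1)$. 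The numerator above differs from the numerator of $\mathcal A^{(L)}(x)$ by at most $\psi(1)/2+\psi(\lceil xL\rceil)=O(1)$ (which also absorbs the discrepancy $\langle xL\rangle-\lfloor xL\rfloor\in\{0,1\}$), and the denominator differs from $\sum_{m=1}^{L}\psi(m)$ by $\psi(1)/2+\psi(L)/2=O(1)$, so both corrections vanish after dividing by $\Theta(L)$, giving $P^{(L)}(L)/\mathcal A^{(L)}(x)\to 1$. I expect no genuine obstacle beyond tracking the boundary-edge bookkeeping correctly.
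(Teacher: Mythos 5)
Your proof is correct, and it takes the electrical-network route that the paper explicitly acknowledges but deliberately sets aside: the paper's proof begins ``Although this lemma follows from the connection between random walks and electrical networks, we give a direct proof as its extensions will be needed later.'' The paper's own argument observes the algebraic identity
\[
\frac{\omega_{m-1}}{\omega_{m-1}+\omega_m}\,r_{m-1/2}\,\psi(m)
=\frac{\omega_{m+1}}{\omega_m+\omega_{m+1}}\,r_{m+1/2}\,\psi(m+1)
\]
(both sides equal $1/\omega_m$), which is exactly the statement that $\underline\Phi({\bm Y}'_1(k))=\sum_{i\le {\bm Y}'_1(k)}\psi(i)$ is a bounded martingale; optional stopping then gives $P^{(L)}(L)\,\underline\Phi(L)=\underline\Phi({\bm Y}'_1(0))$ at once, and $\mathcal A^{(L)}(x)\sim\underline\Phi({\bm Y}'_1(0))/\underline\Phi(L)$. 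Your computation of the conductances $c_{m,m+1}=r_{m+1/2}\omega_m\omega_{m+1}/(\omega_m+\omega_{m+1})$, hence $R_{m+1/2}=\psi(m+1)$ for bulk edges and $R_{1/2}=\psi(1)/2$, $R_{L-1/2}=\psi(L)/2$ at the boundary, is right, and the $O(1)$ bookkeeping at the ends disappears after dividing by the $\Theta(L)$ sums. The two proofs are really the same martingale in disguise (the scale function $\underline\Phi$ is the cumulative resistance), but the paper's phrasing is the one that generalizes: in Lemma \ref{lemma:subm} the quantities $S_k$ and $T_k$ built from $\underline\Phi$ and $\overline\Phi$ become the sub/supermartingales needed for the two-particle estimate, and there is no comparably clean two-particle electrical-network formula. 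So your argument proves the lemma but would not feed into the proof of Lemma \ref{lemma:submartingale} the way the paper's does.
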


\begin{proof}

Although this lemma follows from the connection between random walks and electrical networks,
we give a direct proof as its extensions will be needed later.
Note that by the definition of $\omega_0, \omega_L$ and $\psi(m)$, we have for any $1 \leq m \leq L-1$
\begin{equation}
\label{eq:mgale}
\frac{\omega_{m-1}}{ \omega_{m-1}+ \omega_m} r_{m-1/2} \psi(m) =
\frac{\omega_{m+1}}{ \omega_{m}+ \omega_{m+1}} r_{m+1/2} \psi(m+1).
\end{equation}
Let us write $\underline \Phi(m) = \sum_{i=1}^m \psi(i)$ for $0 \leq m \leq L$. 
Then (\ref{eq:mgale}) means that $\underline \Phi({\bm Y}'_1(k))$ is a bounded martingale.
After the first hitting of either $0$ or $L$, this martingale clearly stays constant. Then by the
martingale convergence theorem,
$$ P^{(L)}(L) \underline \Phi (L) = \underline \Phi ({\bm Y}'_1(0)).$$
Since $\mathcal A^{(L)}(x) \sim \underline \Phi ({\bm Y}'_1(0)) / \underline \Phi (L)$, Lemma 
\ref{lemma:martingale}  follows.
\end{proof}

Now with the notation 
 $\overline \Phi(m) = \sum_{i=m+1}^{L} \psi(i)$ for $0 \leq m \leq L$
our aim is to construct submartingales using  
$$ S_k := \underline \Phi({\bm Y}'_1(k)) \underline \Phi({\bm Y}'_2(k)) + 
\overline \Phi({\bm Y}'_1(k)) \overline \Phi({\bm Y}'_2(k)) 
$$
and
$$ T_k = \sum_{i=\min \{ {\bm Y}'_1(k), {\bm Y}'_2(k)\} +1}^{\max 
\{ {\bm Y}'_1(k), {\bm Y}'_2(k)\} } \psi(i).$$

\begin{lemma}
\label{lemma:subm}
There exists some constant $C$ only depending on the upper and lower bound of $r$ and $\omega$
such that $S_k + C T_k$ is a submartingale and $S_k - C T_k$ is a supermartingale.
\end{lemma}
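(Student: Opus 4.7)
The plan is to analyze the continuous-time generator $A$ of $\tilde{\bm Y}$ and establish pointwise that $|(AS)(m_1,m_2)| \leq C\,(AT)(m_1,m_2)$ with $C$ depending only on the bounds of $r$ and $\omega$. For the embedded jump chain one has $\mathbb{E}[f(\bm Y'_{k+1})-f(\bm Y'_k)\mid\mathcal{F}_k]=(Af)(\bm Y'_k)/\lambda(\bm Y'_k)$ with $\lambda>0$ the total jump rate, so the sign of $(Af)$ matches the sign of the conditional increment, and this single pointwise estimate yields both the submartingale property of $S_k+CT_k$ and the supermartingale property of $S_k-CT_k$ at once.

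When the particles are separated, $|m_1-m_2|\geq 2$, they evolve as independent one-particle walks, and the identity (\ref{eq:mgale}) is precisely the single-particle martingale condition for $\underline{\Phi}$ (and hence for $\overline{\Phi}=\underline{\Phi}(L)-\underline{\Phi}$). Since each jump moves only one particle, the conditional expectation factorizes into the stationary factor times a zero-mean single-particle increment, giving $(AS)(m_1,m_2)=(AT)(m_1,m_2)=0$.

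The heart of the proof is the interacting range $|m_1-m_2|\leq 1$. Write $\Delta_m = \overline{\Phi}(m)-\underline{\Phi}(m)$ (so $\Delta_{m-1}-\Delta_m = 2\psi(m)$), let $q_m^\pm = r_{m\pm 1/2}\omega_{m\pm 1}/(\omega_{m\pm 1}+\omega_m)$ denote the one-particle rates, and recall the $n=2$ beta-binomial probabilities on edge $(u,v)$ with $\alpha=\omega_u/2,\beta=\omega_v/2$: $p_2=\alpha(\alpha+1)/((\alpha+\beta)(\alpha+\beta+1))$, $p_0=\beta(\beta+1)/((\alpha+\beta)(\alpha+\beta+1))$, $p_1=2\alpha\beta/((\alpha+\beta)(\alpha+\beta+1))$. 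Enumerating the three transitions from state $(m,m+1)$ (single jumps via the outer edges $(m-1,m)$ and $(m+1,m+2)$ and beta-binomial redistribution on $(m,m+1)$) and expanding each $S(\cdot)-S(m,m+1)$ as a linear combination of $\Delta_m$, $\Delta_{m+1}$ plus quadratic corrections in $\psi$, the crucial algebraic point is that the $\Delta$-linear part cancels. This uses (\ref{eq:mgale}) in the form $q_m^-\psi(m)=q_m^+\psi(m+1)$ together with the elementary identity
\[
r_{m+1/2}\,p_2-q_{m+1}^- \;=\; r_{m+1/2}\,p_0-q_m^+ \;=\; -\frac{r_{m+1/2}\,\omega_m\omega_{m+1}}{(\omega_m+\omega_{m+1})(\omega_m+\omega_{m+1}+2)},
\]
which makes the coefficients of $\Delta_m$ and $\Delta_{m+1}$ in $(AS)(m,m+1)$ equal in magnitude and opposite in sign; combined with $\Delta_m-\Delta_{m+1}=2\psi(m+1)$, this collapses $(AS)(m,m+1)$ to $2a\,\psi(m+1)^2$ where $a<0$ is the common value above, and a parallel enumeration yields $(AT)(m,m+1)=-2a\,\psi(m+1)>0$. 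For the state $(m,m)$, a similar case analysis shows that the coefficient of $\Delta_m$ in $(AS)(m,m)$ simplifies to $2q_m^-\psi(m)-2q_m^+\psi(m+1)$, which again vanishes by (\ref{eq:mgale}), leaving $(AS)(m,m)$ bounded by a quadratic expression in $\psi$, while $(AT)(m,m)$ is a strictly positive linear combination of $\psi(m)$ and $\psi(m+1)$.

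Since $r$ and $\omega$ are uniformly bounded above and below, each of $\psi$, $q_m^\pm$ and $a$ is bounded in absolute value and $\psi$ is bounded away from zero, so $|(AS)|/(AT)$ is uniformly controlled in both interacting cases and $C$ can be chosen larger than this uniform bound. Boundary configurations (one or both particles absorbed at $\{0,L\}$) reduce to the one-particle martingale argument since $\underline{\Phi}(0)=0$ and $\overline{\Phi}(L)=0$. The main technical obstacle is the bookkeeping of beta-binomial probabilities needed to verify the cancellation of the $\Delta$-linear terms in both interacting configurations; once this is in place, all remaining terms are uniformly bounded and the lemma follows immediately.
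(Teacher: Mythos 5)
Your proposal is correct and follows essentially the same strategy as the paper: compute the conditional increments of $S_k$ and $T_k$ in the three configurations (separated, coincident, adjacent), use the martingale relation (\ref{eq:mgale}) to cancel the linear-in-$\underline\Phi$ terms, and compare the surviving $O(\psi^2)$ increment of $S$ against the $\Theta(\psi)$ increment of $T$, exactly as in the paper's case-by-case calculation. The substitution $\Delta_m = \overline\Phi(m)-\underline\Phi(m)$, which gives $S - \tfrac12\underline\Phi(L)^2 = \tfrac12\Delta_1\Delta_2$, together with your identity $r_{m+1/2}p_2 - q_{m+1}^- = r_{m+1/2}p_0 - q_m^+ = a$, is a tidier way to organize the algebra and does indeed reproduce the paper's value $2a\psi(m+1)^2$ for the adjacent case, but this is a streamlining of the same computation rather than a different route.
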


\begin{proof}
Let us compute the conditional expectations with respect to $(\mathcal F_k)_k$,
the filtration generated by the process ${\bm Y}'$.
First, observe that $\mathbb E (S_{k+1} | \mathcal F_k) = S_k$ if 
$|{\bm Y}'_1(k)- {\bm Y}'_2(k)| \geq 2$. 
Next, by definition $\mathbb E (S_{k+1} | {\bm Y}'_1(k)={\bm Y}'_2(k)=i)$
is equal to 
\begin{eqnarray*}
&\frac{r_{i+1/2}}{r_{i-1/2}+r_{i+1/2}} 
&\int_0^1 p^2[\underline \Phi(i+1)^2 + \overline \Phi(i+1)^2] \\
&& + 2p(1-p)[\underline \Phi(i)\underline \Phi(i+1) + \overline \Phi(i)\overline \Phi(i+1)] \\
&&+(1-p)^2 [\underline \Phi(i)^2 + \overline \Phi(i)^2] d Beta(\omega_{i+1}/2, \omega_i/2,p)\\
&+ \frac{r_{i-1/2}}{r_{i-1/2}+r_{i+1/2}} 
&\int_0^1 p^2[\underline \Phi(i-1)^2 + \overline \Phi(i-1)^2] \\
&& + 2p(1-p)[\underline \Phi(i)\underline \Phi(i-1) + \overline \Phi(i)\overline \Phi(i-1)] \\
&&+(1-p)^2 [\underline \Phi(i)^2 + \overline \Phi(i)^2] d Beta(\omega_{i-1}/2, \omega_i/2,p)
\end{eqnarray*}
where we used the shorthand 
$d Beta(\alpha, \beta, p) = \frac{1}{B(\alpha, \beta)}p^{\alpha -1} (1-p)^{\beta -1} dp.$
Computing the integrals and using (\ref{eq:mgale}) gives
\begin{eqnarray*}
&& \mathbb E (S_{k+1} - S_k | {\bm Y}'_1(k)={\bm Y}'_2(k)=i)  \\
&=& 2 [\psi(i)]^2
\bigg( \frac{r_{i+1/2}}{r_{i-1/2}+r_{i+1/2}} \frac{\omega_{i+1}(\omega_{i+1}+2)}{(\omega_i+\omega_{i+1})(\omega_i+\omega_{i+1}+2)}\\
&&+ 
\frac{r_{i-1/2}}{r_{i-1/2}+r_{i+1/2}} \frac{\omega_{i-1}(\omega_{i-1}+2)}{(\omega_{i-1}+\omega_{i})(\omega_{i-1}+\omega_{i}+2)}
\bigg)
\end{eqnarray*}
Similarly, $\mathbb E (S_{k+1} - S_k| {\bm Y}'_1(k)=i, {\bm Y}'_2(k)=i+1) $
is by definition equal to 
\begin{eqnarray*}
&\frac{r_{i-1/2}}{r_{i-1/2}+r_{i+1/2}+r_{i+3/2}} &\frac{\omega_{i-1}}{\omega_{i-1} + \omega_i}
[-\psi(i)\underline \Phi(i+1) + \psi(i)\overline \Phi(i+1)] \\
&+\frac{r_{i+1/2}}{r_{i-1/2}+r_{i+1/2}+r_{i+3/2}}&
\big\{ \int_0^1 \big( p^2[\underline \Phi(i+1)^2 + \overline \Phi(i+1)^2] \\
&& + 2p(1-p)[\underline \Phi(i)\underline \Phi(i+1) + \overline \Phi(i+1)\overline \Phi(i)] \\
&&+(1-p)^2 [\underline \Phi(i)^2 + \overline \Phi(i)^2] - S_k \big) d Beta(\omega_{i-1}, \omega_i,p)\\
&& - \underline \Phi(i)^2 - \overline \Phi(i)^2 \big\}\\
&+\frac{r_{i+3/2}}{r_{i-1/2}+r_{i+1/2}+r_{i+3/2}} &\frac{\omega_{i+2}}{\omega_{i+1} + \omega_{i+2}}
[\underline \Phi(i)\psi (i+2) - \overline \Phi(i)\psi(i+2)].
\end{eqnarray*}
A similar computation to the previous one gives 
\begin{eqnarray*}
 &&\mathbb E (S_{k+1} - S_k | {\bm Y}'_1(k)=i,{\bm Y}'_2(k)=i+1) \\
 &=& - [\psi(i+1)]^2 \frac{2 r_{i+1/2}}{r_{i-1/2}+r_{i+1/2}+r_{i+3/2}} 
 \frac{\omega_{i}\omega_{i+1}}{(\omega_{i}+\omega_{i+1})(\omega_{i}+\omega_{i+1}+2)}.
\end{eqnarray*}
Just like in the case of $S_k$, we have
$\mathbb E (T_{k+1} | \mathcal F_k) = T_k$ if 
$|{\bm Y}'_1(k)- {\bm Y}'_2(k)| \geq 2$. Furthermore,
$$
\mathbb E (T_{k+1}  - T_k| {\bm Y}'_1(k)={\bm Y}'_2(k)=i) = 
[\psi(i+1)] \frac{ r_{i+1/2}}{r_{i-1/2}+r_{i+1/2}} 
 \frac{\omega_{i}\omega_{i+1}}{(\omega_{i}+\omega_{i+1})},
$$
and
$$
\mathbb E (T_{k+1} - T_k | {\bm Y}'_1(k)=i, {\bm Y}'_2(k)=i+1)  = 
[\psi(i+1)] \frac{1} {3}
 \frac{\omega_{i}\omega_{i+1}}{(\omega_{i}+\omega_{i+1} +2)}.
$$
We conclude that there is some positive constant $c$ such that 
\begin{itemize}
\item $0<\mathbb E (S_{k+1} - S_k | {\bm Y}'_1(k)  ={\bm Y}'_2(k)) < \frac{1}{c}$
\item $-1/c < \mathbb E (S_{k+1} - S_k | |{\bm Y}'_1(k) - {\bm Y}'_2(k)| = 1) <0$
\item $ c< \mathbb E (T_{k+1} - T_k | {\bm Y}'_1(k)  ={\bm Y}'_2(k))$
\item $c< \mathbb E (T_{k+1} - T_k | |{\bm Y}'_1(k) - {\bm Y}'_2(k)| = 1)$
\end{itemize}
The lemma follows.
\end{proof}

\begin{lemma}
\label{lemma:submartingale}
$\lim_{L \rightarrow \infty} \frac{P^{(L)}(L,L) + P^{(L)}(0,0)}{[\mathcal A^{(L)} (x)]^2 + 
[1-\mathcal A^{(L)} (x)]^2} = 1.$
\end{lemma}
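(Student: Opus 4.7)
The plan is to apply the optional stopping theorem to the sub- and super-martingales $S_k \pm C T_k$ constructed in Lemma \ref{lemma:subm}, taking the limit as $k \to \infty$ (both particles are almost surely absorbed in finite time on the finite state space $\{0,\ldots,L\}$, and both $S_k = O(L^2)$ and $T_k = O(L)$ are bounded, so dominated convergence justifies passing to $k = \infty$).

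First I would identify the terminal values. Using $\underline{\Phi}(0) = 0$, $\overline{\Phi}(L) = 0$, and $\overline{\Phi}(0) = \underline{\Phi}(L)$, one checks that $S_\infty = \underline{\Phi}(L)^2$ when both particles end at the same endpoint and $S_\infty = 0$ otherwise, while $T_\infty = \underline{\Phi}(L)$ when the particles end at different endpoints and $T_\infty = 0$ otherwise. Taking expectations gives
\begin{align*}
\mathbb{E}(S_\infty) &= \underline{\Phi}(L)^2 \bigl[P^{(L)}(0,0) + P^{(L)}(L,L)\bigr], \\
\mathbb{E}(T_\infty) &= \underline{\Phi}(L) \bigl[P^{(L)}(0,L) + P^{(L)}(L,0)\bigr].
\end{align*}
Since both particles start at $I := \langle xL \rangle$, we have $T_0 = 0$ and $S_0 = \underline{\Phi}(I)^2 + \overline{\Phi}(I)^2$. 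The submartingale/supermartingale inequalities $\mathbb{E}(S_\infty) + C\mathbb{E}(T_\infty) \geq S_0$ and $\mathbb{E}(S_\infty) - C\mathbb{E}(T_\infty) \leq S_0$ then yield
$$\bigl|\mathbb{E}(S_\infty) - \bigl[\underline{\Phi}(I)^2 + \overline{\Phi}(I)^2\bigr]\bigr| \leq C\mathbb{E}(T_\infty).$$

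Next, the uniform upper and lower bounds on $r$ and $\omega$ imply $\psi(m) = \Theta(1)$ uniformly in $m$ and $L$, so $\underline{\Phi}(L) = \Theta(L)$ and $C\mathbb{E}(T_\infty) = O(L)$. Dividing the displayed inequality by $\underline{\Phi}(L)^2 = \Theta(L^2)$ gives
$$P^{(L)}(0,0) + P^{(L)}(L,L) = \frac{\underline{\Phi}(I)^2 + \overline{\Phi}(I)^2}{\underline{\Phi}(L)^2} + O(L^{-1}).$$
By the computation in the proof of Lemma \ref{lemma:martingale}, $\underline{\Phi}(I)/\underline{\Phi}(L) = \mathcal{A}^{(L)}(x) + O(L^{-1})$ and $\overline{\Phi}(I)/\underline{\Phi}(L) = 1 - \mathcal{A}^{(L)}(x) + O(L^{-1})$, so the right-hand side equals $[\mathcal{A}^{(L)}(x)]^2 + [1-\mathcal{A}^{(L)}(x)]^2 + o(1)$.

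To pass from this additive statement to the claimed multiplicative statement, I would use that $x \in (0,1)$ is fixed and $\psi$ is bounded above and below, so $\mathcal{A}^{(L)}(x)$ is bounded away from $0$ and $1$ uniformly in $L$; hence $[\mathcal{A}^{(L)}(x)]^2 + [1-\mathcal{A}^{(L)}(x)]^2$ is bounded below by a positive constant, and the ratio converges to $1$. I do not anticipate a serious obstacle: the real work was done in Lemma \ref{lemma:subm}, and what remains is a clean optional-stopping and asymptotics argument.
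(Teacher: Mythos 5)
Your proof is correct and follows essentially the same route as the paper: apply the (bounded) sub- and super-martingale properties of $S_k \pm C T_k$ from Lemma \ref{lemma:subm}, identify the terminal values $\mathbb{E}(S_\infty) = \underline{\Phi}(L)^2\bigl[P^{(L)}(0,0)+P^{(L)}(L,L)\bigr]$ and $\mathbb{E}(T_\infty) = \underline{\Phi}(L)\bigl[P^{(L)}(0,L)+P^{(L)}(L,0)\bigr]$, and divide by $\underline{\Phi}(L)^2 = \Theta(L^2)$ so that the $T$-correction is $O(L^{-1})$. Your extra remark that $\mathcal{A}^{(L)}(x)$ is bounded away from $0$ and $1$ (so the denominator in the claimed ratio is bounded below) is a detail the paper leaves implicit, and is a worthwhile clarification.
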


\begin{proof}
Since $M_k:=S_k+CT_k$ is a bounded submartingale, we can apply the martingale convergence theorem
to deduce
$$  M_0 \leq \mathbb E(M_{\infty}) = 
(P^{(L)}(L,L) + P^{(L)}(0,0)) \underline \Phi^2 (L) + (P^{(L)}(0,L) + P^{(L)}(L,0))C\underline \Phi(L)$$
Since $\underline \Phi (L)/L$ is bounded away from zero and infinity, the lower bound follows. 
The upper bound is derived similarly from the fact that $S_k+CT_k$ is a supermartingale. 
\end{proof}

Now we are ready to prove Proposition \ref{lemma:d=1}.

\begin{proof}[Proof of Proposition \ref{lemma:d=1}]
By Lemma \ref{lemma:subsyst},
\[
P^{(L)}(L,L) = \frac12 [P^{(L)}(L) + P^{(L)}(L,L) + P^{(L)}(0,0) - P^{(L)}(0) ]
\]
Then by Lemmas \ref{lemma:martingale} and \ref{lemma:submartingale},
\[
P^{(L)}(L,L) \sim \left[  \mathcal A^{(L)}(x) \right]^2.
\]
Similarly, 
$P^{(L)}(0,0) \sim \left[  1- \mathcal A^{(L)}(x) \right]^2$,
$P^{(L)}(L,0) \sim P^{(L)}(0,L) \sim \left[  \mathcal A^{(L)}(x) \right][ 1- \mathcal A^{(L)}(x)]$.
Proposition \ref{lemma:d=1} follows.
\end{proof}

Since the extension of Proposition \ref{lemma:d=1} to arbitrary $N$ can be proved the same way as its analogues 
in
\cite{KMP82}, Section 3 and \cite{LNY15}, Section 6.2, we omit the proof here.
Thus we have finished the proof of Theorem \ref{thm3}.

Finally, Proposition \ref{prop:randomomega}(b) is elementary and 
Proposition \ref{prop:randomomega}(a)
follows from the connection between 
random walks and electrical networks (namely, from (\ref{eq:electric}) or 
Lemma \ref{lemma:martingale}) and the law of large numbers.

\section*{Appendix}

Here, we prove Lemma \ref{lemma:annuli}. We will use the notations of Section \ref{sec6}.
The proof uses similar coupling to the one in the proof of Lemma \ref{lemma:wasp}. 
This will suffice in case of (a), as $\| Z\|$ is small and in case of (d), as we only need
a weak estimate. However, we need to perform the coupling on a mesoscopic timescale to complete
the proof of (b) and (c).\\

{\bf Proof of (d)}.\\
We prove the following slightly stronger statement

\begin{centering}
{\it For every $\eta>0$ there is $\varepsilon = \varepsilon (\eta) >0$ such that
if $\log L - \log \log L < m < \log L + \log \varepsilon$ and $L$ is large enough, 
then $\mathcal E_{L,M} < \eta$.}
\end{centering}

With some fixed $C_1$, we couple 
$Z(k), k \in [s_0, (s_0 + C_1M^2)\wedge s_1]$ to a SSRW 
$W(k),  k \in [s_0, (s_0 + C_1M^2) \wedge s_1], W(s_0) = Z(s_0)$. We fix $C_1$, 
as we can by Lemma \ref{lemma:wasp}, such that
$\mathbb P(s_1 > s_0 + C_1M^2) < \eta/10$.
Second,  Lemma \ref{lemma:coupleRW} implies that by choosing $C_2 = C_2(C_1)$ large enough, 
\[ 
\mathbb P( \|W(k ) - Z( k) \| 
> C_2 M^3 / L^2  \text{ for some } k < (s_0 + C_1M^2)\wedge s_1
)  < \eta /10.
\]
Now let us define 
\begin{eqnarray*}
&&\underline s_1 = \min \{  k:  \| W(k) \| < (1/2 + \zeta) M \text{ or }  \| W(k) \| > (2 - \zeta) M \}, \\
&&\overline s_1 =  \min \{  k:  \| W(k) \| < (1/2 - \zeta) M \text{ or }  \| W(k) \| > (2 + \zeta) M \}
\end{eqnarray*}
Where $\zeta$ is fixed in such a way that the following events concerning a planar Brownian motion 
$\mathcal W$ have probability at least $1 - \eta /10$:  
\begin{enumerate}
\item[(A)] if $\| \mathcal W (0) \| = 1/2 + \zeta$ then $\mathcal W$ reaches $B(0, 1/2- \zeta)$ before reaching
$\mathbb R^2 \setminus B(0,2 - \zeta)$ and
\item[(B)] if $\| \mathcal W (0) \| = 2 - \zeta$ then $\mathcal W$ reaches 
$\mathbb R^2 \setminus B(0, 2 +  \zeta)$ before reaching
$B(0, 1/2 + \zeta)$.
\end{enumerate}
By the fact that $\log \| \mathcal W\|$ is martingale and by Donsker's theorem, we can also assume
$$| \mathbb P ( \| W (\underline s_1) \| < (1/2 + \zeta) M ) - 1/2 | < \eta/10, $$
possibly by further reducing $\zeta$ and by choosing $L$ large.
Finally, we choose $\varepsilon = \sqrt{ \zeta / C_2}$ so as $\zeta M >  C_2 M^3 /L^2$. Lemma \ref{lemma:annuli}(d) follows.\\

{\bf Proof of (a)}

If $2 \leq \| Z(k + s_0)\| $ for some $k < L^{13/10} \wedge \underline \tau_{3/5}$, 
then using Lemma \ref{lemma:coupleRW} we can couple 
$Z(k) - Z(k-1)$ to a step of a SSRW $W(k) - W(k-1)$ with probability $O(L^{-7/5})$.
Thus the probability 
$$ \mathbb P (Z(k) - Z(k-1) = W(k) - W(k-1) \text{ for all } k < L^{13/10}, 2 \leq \| Z(k + s_0)\|)$$
is $O(L^{-1/10})$. Now Lemma \ref{lemma:annuli}(a) follows from the proof of Lemmas
10 and 11 in \cite{LNY15}.\\

{\bf Proof of (b) and (c)}

The idea of the proof of these cases is borrowed from \cite{DSzV08}.
First, we fix some positive $\xi <3/20$, write $K= \lfloor L^{\xi} \rfloor $
and consider the process $(Z(s_0+ jK))_{j \geq 1}$,
stopped upon reaching $B(0, M/2 - K)$ or $\mathbb R^2 \setminus B(0, 2M + K)$.
Let us denote by $\bar s$ the corresponding stopping time, i.e. $\bar s = sK$ with the smallest $s$
such that $Z(s_0+ sK)$ is stopped.  We will also write
 $\mathfrak z = Z(s_0+ jK) - Z(s_0+ (j-1)K)$ and 
$\mathfrak x_j = \log \| Z(s_0+ jK) \|^2$. Now we have
\[
\| Z(s_0+ (j+1)K) \|^2 = \| Z(s_0+ jK) \|^2 + 2(Z(s_0+ jK), \mathfrak z)
+ \| \mathfrak z,\mathfrak z \|^2 
\]
and
\[
\mathfrak x_{j+1} - \mathfrak x_j =\log \left( 1+ 
\frac{2(Z(s_0+ jK), \mathfrak z) +\| \mathfrak z,\mathfrak z \|^2 }{\| Z(s_0+ jK) \|^2}  \right)
\]
By Taylor expansion,
\begin{eqnarray*}
&& \mathfrak x_{j+1} - \mathfrak x_j \\
&=& \frac{2(Z(s_0+ jK), \mathfrak z)}{\| Z(s_0+ jK) \|^2}
+
\frac{\| \mathfrak z\|^2}{\| Z(s_0+ jK) \|^2}
-\frac{2(Z(s_0+ jK), \mathfrak z)^2}{\| Z(s_0+ jK) \|^4} + O(L^{3 \xi} M^{-3})\\
&=& I + II + III +  O(L^{3 \xi} M^{-3})
\end{eqnarray*}
As before, we couple $\mathfrak z$ to a SSRW $W(K)$ by Lemma 
\ref{lemma:coupleRW} such that
if $\mathfrak y = \mathfrak z - W(K) $, then for all positive integer $n$,
\[ 
\mathbb P ( \| \mathfrak y \| =k ) = O(L^{n\xi -2n} M^n)
 \]
If $\mathcal F_k $ is the filtration generated by $\tilde{\bm Z}_i (s_0 + l), l \leq k, i=1,2$, then
\[
 \mathbb E \left( I  \right) =
 \mathbb E \left(
 \frac{ \mathbb E \left(  {2(Z(s_0+ jK), W(K))}  |
\mathcal F_j \right)}{\| Z(s_0+ jK) \|^2}  \right)
+
\mathbb E \left(
\frac{ \mathbb E \left(  {2(Z(s_0+ jK), \mathfrak y)} |
\mathcal F_j \right)}{\| Z(s_0+ jK) \|^2}  \right)
\]
Here, the first term is zero by symmetry of the SSRW $W$, and the second is estimated by the 
Cauchy-Schwarz inequality in cases $1 \leq \| \mathfrak y \| \leq 2$ and $\| \mathfrak y \| \geq 3 $.
We conclude $\mathbb E (I) = L^{\xi -2}.$
A similar argument shows that
\begin{equation}
\label{eq:II}
\mathbb E ( \| \mathfrak z\|^2 ) = K + O(M L^{3\xi -2}).
\end{equation}
The above leading term is 
$\mathbb E ( \| W (K) \|^2) = K$, which can be easily proved by leveraging the fact that
$(W_{(1)}(k) - W_{(1)}(k-1))_{1 \leq k \leq K}$ and $(W_{(2)}(k) - W_{(2)}(k-1))_{1 \leq k \leq K}$  
are two independent one dimensional SSRW's, where 
$v_{(1)} = ((1/\sqrt 2, 1/\sqrt 2), v)$ and $v_{(2)} = ((1/\sqrt 2, - 1/\sqrt 2), v)$.
Finally,
\begin{eqnarray*}
&&\mathbb E (
2 (Z(s_0+ jK), \mathfrak z)^2 | \mathcal F_j) \\
&=& \sum_{i,j=1}^2
2  Z_{(i)}(s_0+ jK)  Z_{(j)}(s_0+ jK) \mathbb E (\mathfrak z_{(i)} \mathfrak z_{(j)} ) \\
&=&
\sum_{i=1}^2
2  (Z_{(i)}(s_0+ jK))^2 \mathbb E [(W_{(i)}(K))^2] + O(M^2 L^{3 \xi -2}) \\
&=& \| Z(s_0+ jK) \|^2K +  O(M^2 L^{3 \xi -2}).
 \end{eqnarray*}
We conclude that 
\begin{equation}
\label{eq:p23last}
\mathbb E ((\mathfrak x_{j+1} - \mathfrak x_j ) 1_{\bar s > jK}) = O(L^{3 \xi} M^{-3} + L^{\xi -2}).
\end{equation}
Thus
\begin{equation}
\label{eq:p24_1}
\mathbb E (\mathfrak x_{\min \{ \bar s/ K, A/ K \}}  ) = 2m + O\left( A L^{2 \xi} M^{-3} +A_M L^{-2}\right)
\end{equation}
Recall that we are given some $\gamma <1$.  
\begin{itemize}
\item In case of Lemma \ref{lemma:annuli}(b), let us choose $A = M^{2+ \frac{1-\gamma}{\gamma}}$.
Assuming, as we can that $\xi < 1-\gamma$ gives $A L^{-2}< L^{-\xi}$. If $\xi$ is small (say smaller than
$1/20$), then $ A L^{2 \xi} M^{-3}< L^{-\xi}$ holds as well.
\item In case of Lemma \ref{lemma:annuli}(c) let us choose $A =L^2 \log^{-3/2} L$. Since both
$1-\gamma$ and $\xi$ are small, we have $A L^{2 \xi} M^{-3} +A_M L^{-2} = O(\log^{-3/2} L)$.
\end{itemize}

Thus the error
term in (\ref{eq:p24_1}) can be replaced by $O(E)$, 
where $E =L^{-\xi}$ in case of Lemma \ref{lemma:annuli}(b) and $E = \log^{-3/2} L$ in case of
Lemma \ref{lemma:annuli}(c).
Next, by Lemma \ref{lemma:wasp}, 
\begin{equation}
\label{eq:p24_2}
\mathfrak x_{\min \{ \bar s/ K, A/ K \}} = 2 m \pm 2 + O(L^{\xi} M^{-1}) = 
2 m \pm 2 + O(L^{-\xi})
\end{equation}
holds except on a set of probability $O(\theta^{A/M^2})$. 
On this exceptional set, we bound the left hand side of (\ref{eq:p24_2}) by $O(\log M)$.
In case of Lemma \ref{lemma:annuli}(b),
$ \theta^{A/M^2} \log M$ is superpolynomially small in $L$. In case of Lemma \ref{lemma:annuli}(c),
$ \theta^{A/M^2} \log M< \theta^{\sqrt{\log L}} \log L= O(\log^{-3/2} L)$. 
It follows that
\[ 
\mathbb P (\| Z_{s_0 + \bar s K} \| \leq M/2 - K ) = \frac12 - O(E)
\]
Since $\| Z_{s_0 + \bar s K} \| \leq M/2 - K$ implies $Z(s_1) \in \mathcal C_{m-1}$, we conclude
\[ 
\mathbb P ( Z(s_1) \in \mathcal C_{m-1} ) \geq \frac12 - O(E).
\]
An analogous argument shows
\[ 
\mathbb P ( Z(s_1) \in \mathcal C_{m+1} ) \geq \frac12 - O(E).
\]
Lemma \ref{lemma:annuli}(b) and (c) follow.

\section*{Acknowledgements}
I am highly indebted to Imre P\'eter T\'oth for raising a question which initiated this research.
I am also grateful to Lai-Sang Young for discussions and encouragement.
This project began while I was affiliated with the Courant Institute, New York University.

\end{document}